\def\BibTeX{{\rm B\kern-.05em{\sc i\kern-.025em b}\kern-.08em
		T\kern-.1667em\lower.7ex\hbox{E}\kern-.125emX}}
\DeclareMathOperator\sign{sgn}
\DeclareMathOperator*{\argmax}{arg\,max}
\DeclareMathOperator*{\argmin}{arg\,min}
\newtheorem{prop}{Proposition}
\newtheorem{dfn}{Definition}
\newtheorem{assumption}{Assumption}
\newtheorem{claim}{Claim}
\newcounter{relctr} 
\everydisplay\expandafter{\the\everydisplay\setcounter{relctr}{0}} 
\newcommand\labelrel[2]{%
  \begingroup
    \refstepcounter{relctr}%
    \stackrel{\textnormal{(\alph{relctr})}}{\mathstrut{#1}}%
    \originallabel{#2}%
  \endgroup
}
 \global\long\def\11{\mathbbm{1}}
\newcommand\numberthis{\addtocounter{equation}{1}\tag{\theequation}}
\newcommand{\ra}{\rightarrow}
\newcommand{\la}{\leftarrow}
\newcommand{\mcl}{\mathcal}
\newcommand{\mb}{\mathbb}
\newcommand{\mbf}{\mathbf}
\newcommand{\udl}{\underline}
\def \defeq{\overset{\Delta}{=}}
\def \bsl{\boldsymbol}
\def \wh{\widehat}
\def \wt{\widetilde}
\def \pr{\mb{P}}
\def \xlrarrow{\xleftrightarrow}
\def \l{\left}
\def \r{\right}
\let\oldnl\nl
\newcommand{\nonl}{\renewcommand{\nl}{\let\nl\oldnl}}
\begin{document}
%
\title{
Backward and Forward Inference in Interacting Independent-Cascade Processes: A Scalable and Convergent Message-Passing Approach
}
%
%
%
%
\author{Nouman~Khan,~\IEEEmembership{Student Member,~IEEE,}
        Kangle Mu, ~\IEEEmembership{Student Member, ~IEEE,}
       Mehrdad~Moharrami, 
       and~Vijay~Subramanian,~\IEEEmembership{Senior Member,~IEEE}
}

\maketitle
\begin{abstract}	
We study the problems of estimating the past and future evolutions of two diffusion processes that spread concurrently on a network. Specifically, given a known network $G=(V, \vv{E})$ and a (possibly noisy) snapshot $\mcl{O}_n$ of its state taken at (a possibly unknown) time $W$, we wish to determine the posterior distributions of the initial state of the network and the infection times of its nodes. These distributions are useful in finding source nodes of epidemics and rumors---\textit{backward inference}---, and estimating the spread of a fixed set of source nodes---\textit{forward inference}.

To model the interaction between the two processes, we study an extension of the independent-cascade (IC) model where, when a node gets infected with either process, its susceptibility to the other one changes. First, we derive the exact joint probability of the initial state of the network and the observation-snapshot $\mcl{O}_n$. Then, using the machinery of factor-graphs, factor-graph transformations, and the generalized distributive-law, we derive a Belief-Propagation (BP) based algorithm that is scalable to large networks and can converge on graphs of arbitrary topology (at a likely expense in approximation accuracy).

\end{abstract}

\begin{IEEEkeywords}
Independent-Cascade, MAP Inference, Factor Graphs, Distributive Law, Message Passing, Belief Propagation, Diffusion Source Localization, Spread Estimation, Influence Maximization.
\end{IEEEkeywords}

%
\IEEEpeerreviewmaketitle

\section{Introduction}\label{sec:introduction}
An essential aspect of many real-world spreading-processes is that they seldom evolve completely on their own. Instead, they exhibit \textit{cross-process interactions}. In epidemiology, for instance, many a time, infection of an individual with one disease increases their susceptibility to other diseases. In social networks, almost invariably, an individual's preference for adopting an idea 
largely depends on whether they have already adopted some other idea. 
Motivated to incorporate such cross-process interactions in the study of network diffusions, we formulate a 
discrete-time stochastic model in which two diffusion processes spread concurrently on the same graph. Each node in the graph has its own (neighbor dependent) susceptibilities to the two processes and when it gets infected with one process, all of its susceptibilities to the other one change. The two diffusion processes spread over the graph and at some unknown random time, we get to observe the state of the graph. We then wish to determine the past and future of either process. The specific inference problems that we study (listed in Section \ref{sec:problem:inferenceprob}) are specially useful in diffusion source localization (DSL) \cite{ying2018diffusion}, and spread estimation which is a key step in most influence-maximization (IM) \cite{yuxin-im-2022} algorithms. 
\begin{figure}[!b]
\centering
\includegraphics[width=\linewidth]{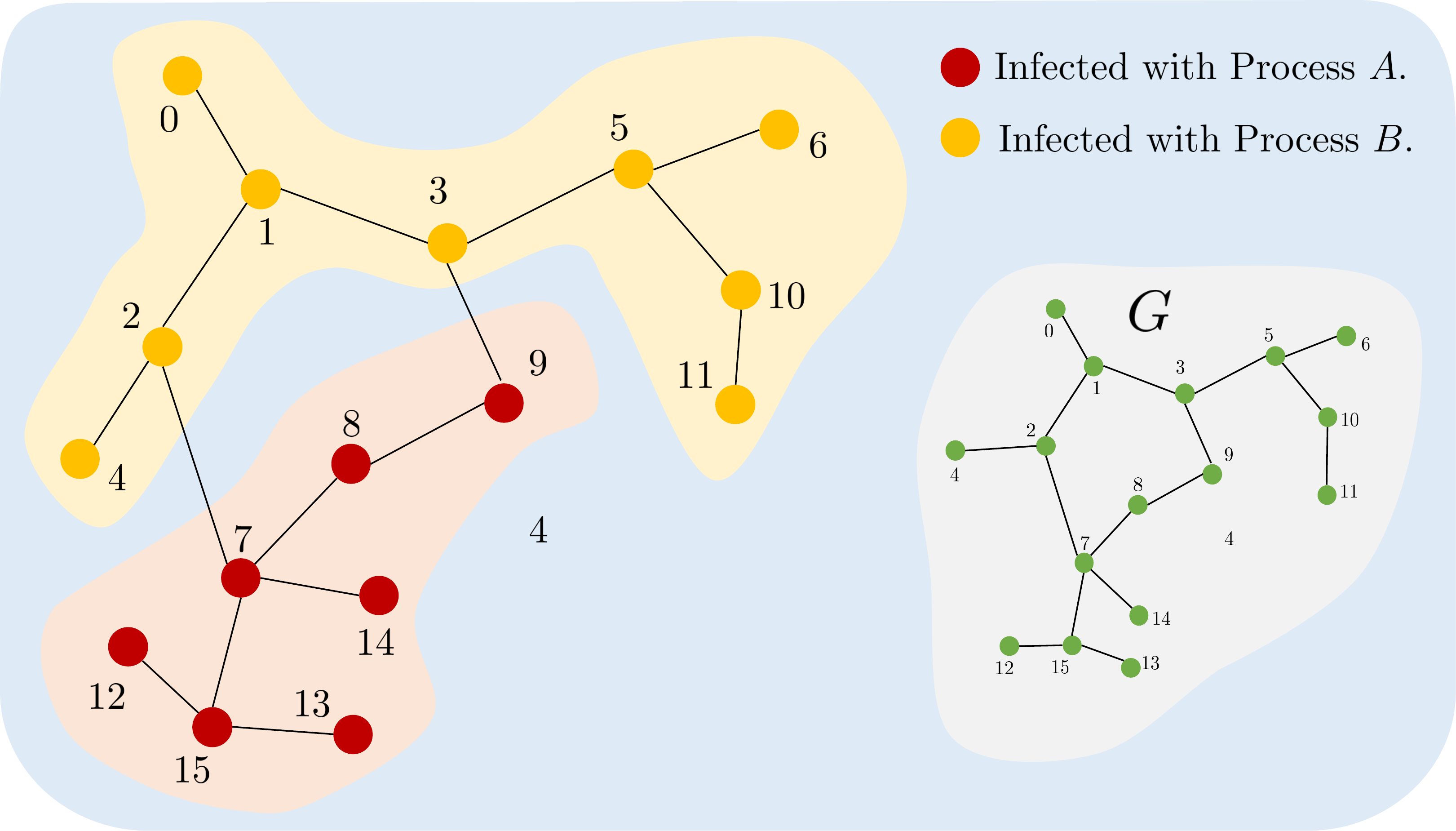}
\caption{Example showing spread of two conflicting rumors in a small gossip network.}
\label{fig:backwardinf_motivationa}
\end{figure}
 
\noindent \textbf{Cross-process Interactions in Backward and Forward Inference}: 
Figure \ref{fig:backwardinf_motivationa} shows a stylized example of gossip in a small group of individuals labelled by numbers 0 through 15. Individuals 1 and 8 
spread two conflicting rumors in the group. Whichever rumor reaches a node first, they believe that rumor to be true (deeming the other to be false). If both rumors reach a node at the same time, they do not believe in either. 
Both Short-Fat-Tree (SFT) \cite{zhu16-1} and Rumor-Centrality \cite{dshah11} algorithms (discussed in Section \ref{sec:relatedwork}), make incorrect predictions by estimating nodes 7 and 3 as origins of $A$ and $B$ respectively. Likewise, a Belief Propagation (BP) algorithm as that of \cite{altarelli14}, when used for each process, incorrectly designates nodes 7 and 3 as the sources also. Compared to these algorithms, the BP algorithm proposed in this work, which considers the cross-process interactions, results in the correct prediction of both sources (in under 11 iterations). 
We would like to capture such settings for large networks where multiple cross-process interactions could occur as a result of multiple sources of one or both processes. It is worth noting that both SFT and RC algorithms do not have a natural extension to study such settings.

Considering cross-process interactions may also be desirable in influence-maximization (IM) problems where one would like to find a fixed number of (source) nodes that can achieve the maximum spread for a process in the presence of another competing one. Such settings can arise in problems related to marketing campaigns, or in subroutines of general diffusion-control or diffusion-detection problems where the task is to the find the minimal set of nodes to effectively resist or detect an ongoing diffusion. Typically, in such algorithms, computationally heavy Markov-Chain-Monte-Carlo (MCMC) simulations are performed to estimate the spread of a node (or set of nodes) \cite{yuxin-im-2022}. This is feasible on small networks but computationally prohibitive on large networks, (specially in the setting of two or more processes). The key advantage of the proposed BP algorithm is that it uses much lesser wall-clock time to compute (approximately) the distributions of infection times which then directly yield the (approximate) spread of the source nodes.\footnote{There is no free lunch however, because BP is guaranteed to be exact on acyclic factor-graphs only. Thus, while MCMC methods trade computation time for approximate guarantees, the BP method trades approximate guarantees for improvements in computation time. It is worth mentioning that despite the absence of approximate guarantees, BP has been widely used on cyclic factor graphs and generally provides good results, (especially on sparsely connected factor-graphs)---for example in error-correcting codes \cite{ackerman-bp-coding-2011, guonai-bp-coding-2006}. The factor-graphs in these applications contain numerous cycles.}

\subsection{Contribution}
The contributions of this work are as follows.
\begin{enumerate}
\item[a)] We formulate a Bayesian framework that studies two diffusion processes spreading concurrently on a network. Despite its simplicity, the studied framework captures key characteristics of real-world spreading processes such as asymmetric interactions, presence of multiple-sources, noisy observation of the network, and unknown observation-time. (Section \ref{sec:problem}).

\item[b)] We propose a BP based algorithm that can be used to solve both the DSL and spread estimation problems. The proposed algorithm is scalable to large networks and can converge on graphs of arbitrary topology (at a likely expense in approximation accuracy). (Sections \ref{sec:map}, \ref{sec:inference_1}, and \ref{sec:inference_2}).

\item[c)] We extend the work of \cite{mooij07} (from BP on standard factor-graphs) to discounted BP on inflated factor-graphs in order to ensure convergence of the proposed algorithm. (Section \ref{sec:convergence}).

\end{enumerate}

\subsection{Related Work}\label{sec:relatedwork}
\subsubsection{Backward Inference}
There is a large body of work on backward inference in spreading processes that has mostly focused on diffusion source localization \cite{ying2018diffusion}---owing to potential applications in epidemic-trace studies, rumor-tracing in social networks, and virus-source identification in computer networks. Theoretically also, the DSL problem presents a wide array of interesting settings to explore, which a number of works have attempted to address. A (non-exhaustive) survey of recent works is available in \cite{jiang17}. Below, we present a few important works most relevant to the setting of this paper. Importantly, all the works mentioned below study DSL in the setting of a single diffusion process.

\begin{enumerate}[leftmargin=*]
\item[a)] \emph{Rumor-Centrality (RC)}: One of the first works that studied the DSL problem is arguably the seminal paper by Shah and Zaman \cite{dshah11}, where they considered a continuous-time \textit{Susceptible-Infected} (SI) model in which the spreading-times on all the edges are independent and identically-distributed exponential random variables. 
The authors proposed an estimator in which each infected node is assigned a graph-based quantity called \textit{rumor-centrality} and the node with the highest score (ties broken uniformly at random) is declared as the source-estimate. The authors showed that the rumor-centrality based estimator is the maximum likelihood estimator (MLE) on regular-trees and also analyzed its performance on line-graphs and geometric-trees. For a general network, the authors proposed a heuristic in which the score assigned to each infected node is the rumor-centrality computed on the BFS tree originating from that node. Later in a follow-up in \cite{dshah12}, the theoretical results were also extended to generic random trees and more general spreading distributions.

\item[b)] \emph{Short-Fat-Tree (SFT) Algorithm}: Following \cite{dshah11}, several works appeared which attempt to cover various settings of the DSL problem. Zhu et al. \cite{zhu16-1} considered the Independent-Cascade (IC) model \cite{kempe15} along with the assumption of a uniformly-chosen unique source and a geometrically distributed observation time. The authors proposed the Short-Fat-Tree (SFT) algorithm which selects the node with the highest weighted-boundary-node-degree (WBND). Loosely speaking, WBND of a node is a measure that is directly proportional to the smallness and fatness of the breadth-first search (BFS) tree starting from it (reason behind the algorithm's name). The authors showed that the SFT algorithm is the ML-estimator when the underlying network is a tree, and they have also established theoretical guarantees of the algorithm for the class of Erdos-Renyi random-graphs.

\item[c)] \emph{Marginal Inference Based On Message-Passing}: A straightforward approach to the DSL problem, one that can use a minimal set of assumptions, would be to perform (an exact or approximate) MAP inference on the underlying probability-measure. However, since the problem is NP-hard, message-passing schemes based on localized computations serve as the only practical resort. In such schemes, messages are passed along the edges of a convenient graphical structure that encapsulates the locality inherent in the probability-measure by virtue of its factorized form.

The belief-propagation (BP) algorithm\footnote{Also known as the sum-product algorithm.}, is one such message-passing scheme. For trees/forests (graphs without cycles), it is guaranteed to converge after a finite number of iterations and yields the exact marginal distributions, whereas for graphs with cycles, it has no convergence guarantees (let alone of a reasonable approximation). Nonetheless, the BP algorithm is widely used in practice and has shown superior performance in a wide range of applications, most popularly in error-correcting decoding algorithms \cite{frey97}. In the context of DSL, Lokhov et al. \cite{lokhov14} were the first to propose a computationally-feasible message-passing algorithm, namely the \emph{dynamic-message-passing} (DMP) algorithm. The algorithm is based on a mean-field approximation of the underlying probability-measure. Later, assuming that the observation-time\footnote{The time at which snapshot of the network is taken.} is known, Altarelli et al. \cite{altarelli14} improved the DMP algorithm by giving a computationally-feasible BP algorithm on the factor-graph representation of the probability-measure. As such, they offer a message-passing algorithm that produces exact marginals whenever the network is a tree.

\item[d)] \emph{Other Works}: Besides rumor-centrality, short-fat-tree, and message-passing schemes, there are several other works worth mentioning. For instance, some treatment of the cases when only sparse or noisy observations are available can be found in  \cite{zhu13,zhu17,zhu16_2,luo13,karamchandani13}. The work \cite{luo13} gives some treatment of the case of multiple sources under the SI model. Work \cite{wang14} studies the case where one gets to observe infection-subgraphs corresponding to multiple rumor spreading instances starting from the same source and then demonstrates significant improvement in the probability of detection (versus the case when only a single snapshot is available). Recently, \cite{kumar17} has proposed a Markov-Chain-Monte-Carlo (MCMC) based estimator that is agnostic to the underlying diffusion model and can also make use of some extra partial-observation structure (in addition to the snapshot of the diffusion-process).
\end{enumerate}

\subsubsection{Forward Inference}
Work on forward inference in spreading processes presents itself largely in the study of influence-maximization (IM) problems; for a non-exhaustive survey, see \cite{yuxin-im-2022}. Here, the goal is to find a small set of seed/source nodes in a network that maximize the spread of influence (under a certain diffusion model). The key step in such problems is to estimate the degree of spread of a fixed node (or set of nodes) which is mostly done through computation-costly Monte-Carlo simulations. Some good references on the IM problem include \cite{kempe-2003,celf-leskovec-2007, celf-goyal-2011, cim-chen-2014}.

Like the DSL problem, the IM problem has been studied in the setting of a single spreading process.

\subsection{Notation}
The key notations in this paper are as follows.
\begin{itemize}
\item Probability and expectation operators are denoted by $\pr$ and $\mb{E}$, respectively. 
\item Random variables are denoted by upper-case letters and their realizations by the corresponding lower-case letters.  At times, we also use the shorthands  $\mb{E}\l[ \cdot | x \r] \defeq \mb{E}\l[ \cdot | X = x \r]$ and $\pr\l( y | x \r) \defeq \pr\l( Y = y | X = x \r)$ for conditional quantities. 
\item Whenever a conditional probability or conditional expectation is written, it is implicitly assumed that the conditioning event has non-zero probability.
\item To aid the reading experience, we strongly encourage the reader to refer to the relevant notation and definitions listed in Appendix \ref{sec:appendix:notation}.
\item For better readability, all appendices excluding Appendix \ref{sec:appendix:notation} are in a single-column format.
\end{itemize}

\subsection{Organization}
The remainder of the paper is organized as follows. Backward and forward inference problems are formulated in a Bayesian setting in Section \ref{sec:problem}. The exact joint probability of the network's initial-state and the observation-snapshot is derived in Section \ref{sec:map}. Work on developing a message-passing algorithm with feasible memory and per-iteration-time complexities is done in Sections \ref{sec:inference_1} and \ref{sec:inference_2} (with details deferred to Appendices \ref{app:getridofd}, \ref{app:messageupdaterules-1}, and \ref{app:messageupdaterules2}). Convergence of message-passing (via discounting of incoming messages) is ensured for graphs of arbitrary topology in Section \ref{sec:convergence}. A discussion of different settings and cases that are covered by the studied model and our message-passing algorithm is provided in \ref{sec:discussion}. Finally, concluding remarks are given in Section \ref{sec:conclusion}.
\section{Problem Formulation}\label{sec:problem}
In this section, we present a stochastic model for concurrent diffusion of two processes on a network. Then, we formalize the backward and forward inference problems that we will attempt to solve through our proposed message-passing algorithms.

\subsection{Network Model}
Consider a static network represented by a directed graph $G=(V, \vv{E})$ where $V$ is the set of nodes, and $\vv{E}$ is the set of directed edges. Typical elements of $V$ and $\vv{E}$ are respectively denoted by $i$ and $(i, j)$. For every pair of connected nodes, it is assumed that two directed edges with opposite directions connect them, i.e., $(i,j)\in \vv{E} \implies (j,i)\in\vv{E}$.\footnote{This symmetry assumption is for ease of exposition and (as we shall see) is without loss of generality.} The set of undirected edges induced by $\vv{E}$ is denoted by $E$ and its typical element is denoted by $\{i,j\}$ or simply $ij$ {with the convention that $i<j$}. The set of all neighbors of a node $i\in V$ is denoted by  $\partial{i}$, i.e., $\partial{i} \defeq \{k\in V: \{k,i\}\in E\}$. 

The set of all connected-components of $G$ is denoted by $\mcl{C}$ and the sub-graph of a connected-component $C\in\mcl{C}$ is denoted by $G_C = (V_C, \vv{E}_C)$ where $V_C$ and $\vv{E}_C$ are respectively the sets of nodes and directed edges in $C$. The set of undirected edges induced by $\vv{E}_C$ is denoted by $E_C$. 

We assume (WLOG) that each connected-component has at least two nodes.

\subsection{Diffusion Model}\label{sec:problem:diffusion_model}
There are many plausible ways in which one can extend the commonly studied 
diffusion models (independent-cascade (IC), SI, SIR, etc.) to study multiple interacting spreading processes. To avoid deviating from the main goal, we consider a reasonably simple time-slotted model that consists of two diffusion processes $A$ and $B$ that unfold and interact with each other over the network $G$. 
\subsubsection{State Description}
For a node $i \in V$, its state at time $t\in \mb{Z}_{\ge 0} \defeq \{0,1,\dots\}$ is denoted by $X_i^{(t)}$ and is defined as the set of processes it has been infected with up to (the end of) time $t$. The state of the network is then given by the vector 
\begin{align*}
	\mbf{X}^{(t)} \defeq \l( X_i^{(t)} : i\in V \r).\numberthis\label{eq:networkstate}
\end{align*}
In our two-spread model, node $i$ can have four possible states, namely susceptible ($X_i^{(t)}=\emptyset$), infected with process $A$ ($X_i^{(t)}=\{A\}$), infected with process $ B$ ($X_i^{(t)}=\{B\}$), and infected with both processes $A$ and $B$ ($X_i^{(t)}=\{A,B\}$). We denote the set of all possible states of a node by $\mcl{X} \defeq \{ \emptyset, \{A\}, \{B\}, \{A, B\} \}$. 

\subsubsection{State Evolution}
The dynamics of the diffusion-process is according to the IC model \cite{kempe15}. A node $i\in V$, after getting infected with process $I\in\{A,B\}$, attempts to transfer it to its neighbors which are yet susceptible to $I$, i.e., $j\in\partial{i}$ with state 
$\mcl{J} \in \{ \emptyset,  \{A,B\} \setminus \{I\} \}$. This infection attempt is made only once and if successful, the corresponding node contracts process $I$ in the next time-step. The duration of this transfer is either 1 or $\infty$, depends on $\mcl{J}$, and is denoted by the activation-variable $D_{i\ra j}^{I\ra \mcl{J}}$. According to the IC model, each $D_{i\ra j}^{I\ra \mcl{J}}$ has an independent distribution on $\{1,\infty\}$ and is given by the kernel,
\begin{align*}
f_{D_{i\ra j}^{I\ra \mcl{J}} } \l(d_{i\ra j}^{I\ra \mcl{J}}\r) 
&\defeq
\begin{cases}
\lambda_{i\ra j}^{I\ra \mcl{J}} &d_{i\ra j}^{I\ra \mcl{J}}=1,\\
1-\lambda_{i\ra j}^{I\ra \mcl{J}} &d_{i\ra j}^{I\ra \mcl{J}}=\infty.
\end{cases}\numberthis\label{eq:fdijij},
\end{align*}
where $\lambda_{i\ra j}^{I\ra \mcl{J}}$ is the probability that node $i$ infects node $j$ with process $I$ when $j$ is in state $\mcl{J}$. We denote the vector of all transmission probabilities by $\bsl{\lambda}$, i.e., $\bsl{\lambda} \defeq \l( \bsl{\lambda}_{i\ra j}: (i,j)\in \vv{E} \r)$, and
\begin{align*}
\bsl{\lambda}_{i\ra j}&\defeq \l\{ 
\lambda_{i\ra j}^{A \ra \emptyset}, \lambda_{i\ra j}^{A \ra \{B\}},
\lambda_{i\ra j}^{B \ra \emptyset}, \lambda_{i\ra j}^{B \ra \{A\}}
\r\}.
\end{align*}
\textbf{Remark}: Each directed edge $(i,j)\in \vv{E}$ has a corresponding set of parameters, which we have denoted by $\bsl{\lambda}_{i\ra j}$. By our symmetry assumption, the undirected edge $\{i,j\}$ will have two sets of parameters, namely $\bsl{\lambda}_{i\ra j}$ and $\bsl{\lambda}_{j\ra i}$.

\subsection{Observation Model}
It is assumed that the network $G$ is known and a noisy snapshot of the network's state at some unknown \emph{observation-time} $W \in \mb{Z}_{\ge 0}$ is available. The availability of this (noisy) snapshot is denoted by the \textit{observation-event} $\mcl{O}_{n} = \{ \wt{\mbf{X}}^{(W)} = \wt{\mbf{x}}\}$, where $\wt{\mbf{X}}^{(W)}$ is the noisy observation of the true state $\mbf{X}^{(W)}$. We assume that the conditional distribution of $\wt{\mbf{X}}^{(W)}$ given $\mbf{X}^{(W)}$ is given by the below factorized observation kernel,
\begin{align*}
f_{\mbf{\wt{X}} | \mbf{X}} \l(\wt{\mbf{x}} | \mbf{x}\r) \defeq  \prod_{i\in V} f_{\wt{X}_i|X_i} (\wt{x}_i|x_i) \qquad \wt{x}_i, x_i \in \mcl{X}. \numberthis\label{eq:fo}
\end{align*}
Here, $f_{\wt{X}_i|X_i} \l( \wt{x}_i|x_i \r)$ is the probability of observing node $i$ in state $\wt{x}_i$ given its (true) state is $x_i$. For simplicity, we assume that $W$ has an independent truncated geometric distribution given by the kernel,
\begin{align*}
f_W (w) &\defeq \frac{ \alpha(1-\alpha)^{w} }{(1-\alpha)^{w_{min}}-(1-\alpha)^{w_{max}+1} },\\
&\hspace{40pt} w\in 
\mcl{W} \defeq 
\{ w_{min},w_{min}+1,\dots,w_{max}\}, \numberthis\label{eq:fw}
\end{align*}
with $\alpha\in(0,1)$ and $0\le w_{min}\le w_{max} \le \infty$.

\subsection{Prior Belief on Network's Initial-State}
We assume that the initial-state of the network is supported on a candidate-set $\bm{\mcl{X}}^{(0)} \subseteq \mcl{X}^{|V|} $ and has an independent distribution given by the kernel, 
\begin{align*}
f_{\mbf{X}^{(0)}}(\mbf{x}^{(0)}) >0 \text{ iff } \mbf{x}^{(0)}\in \bm{\mcl{X}}^{(0)}.\numberthis\label{eq:fx0}
\end{align*}

\subsection{Inference Problems}\label{sec:problem:inferenceprob}
Let $\l(\Omega, \mcl{F}, \pr: \mcl{F}\ra [0,1]\r)$ be the probability-space (induced by $f_W, f_{\mbf{X}^{(0)}}, \bsl{\lambda}, f_{\mbf{\wt{X}} | \mbf{X}} $) for the random process $\l\{\mbf{X}^{(t)}: t\ge 0\r\}$. We consider three distinct inference tasks.
\begin{enumerate}
\item[a)] \textit{Posterior Distribution of Network's Initial-State}:
The primary goal of DSL is to find the most likely initial-state given the observation-event $\mcl{O}_n$.
That is, we wish to find some element
\begin{align*}
^\dagger{\mbf{x}}^{(0)}\in\argmax_{\mbf{x}^{(0)}\in \bm{\mcl{X}}^{(0)}} \pr\l(\mbf{x}^{(0)}|\mcl{O}_n\r).\numberthis\label{eq:argmaxpx0}
\end{align*}
\item[b)] \textit{Posterior Distribution of a Node's Initial-State}: One may also be interested in estimating the posterior distributions of each node's initial-state, i.e., 
\begin{align*}
\l\{ \pr\l(x_i^{(0)}|\mcl{O}_n\r) : x_i^{(0)} \in \mcl{X} \r\}, i\in V
.\numberthis\label{eq:argmaxpxi0}
\end{align*}
These probabilities can be used in ranking infected-nodes according to their likelihood of being the source of either process.
\item[c)] \textit{Posterior Distributions of Infection-times}: Another important inference problem is finding out the individual posterior distributions of the infection-times of all nodes, i.e.,
\begin{align*}
\l\{ \pr\l( t_i^I | \mcl{O}_n \r), I \in \l\{A, B\r\} \r\}, i\in V
,\numberthis\label{eq:argmaxptii}
\end{align*}
where $t_i^I$ is a shorthand for $\inf \{t \ge 0: x_i^{(t)} \ni I \}$. Accurate estimation of above probabilities is useful for the spread-estimation task which is a crucial step of most influence-maximization algorithms. See Section \ref{sec:disc:influencemaximization}.
\end{enumerate}
\section{MAP Probability of Initial State}\label{sec:map}
In this section, we cast problem \eqref{eq:argmaxpx0} as equivalent to the marginalization of a factorized global-function -- one which has local-functions of ``manageable'' complexity.\footnote{For a detailed review of this class of problems, we refer the reader to the excellent expositions by Aji and McEliece \cite{aji00-gdl}, and Kschischang, Frey, and Loeliger \cite{kschischang01}.}. The solutions to problems \eqref{eq:argmaxpxi0} and \eqref{eq:argmaxptii} will be obtained as auxiliary outputs of our message-passing algorithm for \eqref{eq:argmaxpx0}.

\subsection{Introducing Basic Random Variables}
For the IC-based diffusion model described in Section \ref{sec:problem:diffusion_model}, the random variables that fully specify all the realizations of the process $\{\mbf{X}^{(t)}: t\ge 0\}$ and the observation-events are \textit{i)} the initial-state of the network $\mbf{X}^{(0)}$, \textit{ii)} the observation-time $W$, \textit{iii)} the observed state $\wt{\mbf{X}}^{(W)}$, and \textit{iv)} the activations vector $\mbf{D} \defeq \l( \mbf{D}_{i\ra j} :(i,j)\in \vv{E} \r)$, where 
\begin{align*}
\mbf{D}_{i\ra j}&\defeq\l\{ 
D_{i\ra j}^{A \ra \emptyset}, D_{i\ra j}^{A \ra \{B\}},
D_{i\ra j}^{B \ra \emptyset}, D_{i\ra j}^{B \ra \{A\}}
\r\}.
\end{align*}
We call $\mbf{X}^{(0)}$, $W$, $\wt{\mbf{X}}^{(W)}$, and $\mbf{D}$ as \textit{basic random variables}. It is clear that $\mbf{X}^{(0)}$, $W$, and $\mbf{D}$ are mutually independent, i.e.,
\begin{align*}
\pr\l(\mbf{x}^{(0)},w,\mbf{d}\r) &= \pr\l(\mbf{x}^{(0)}\r)\pr\l(w\r)\pr\l(\mbf{d}\r),\label{eq:px0wd}\numberthis
\end{align*}
with their marginals given by
\begin{align*}
\pr\l(\mbf{x}^{(0)}\r) &= f_{\mbf{X}^{(0)}}\l(\mbf{x}^{(0)}\r),\numberthis\label{eq:px0}\\
\pr\l(w\r) &= f_{W}\l(w\r),\numberthis\label{eq:pw}\\
\pr\l(\mbf{d}\r) &= \prod_{(i,j)\in \vv{E}} f_{\mbf{D}_{i\ra j}} \l(\mbf{d}_{i\ra j}\r).\numberthis\label{eq:pd}
\end{align*}
In \eqref{eq:pd}, we have introduced the shorthand,
\begin{align*}
f_{\mbf{D}_{i\ra j}} \l( \mbf{d}_{i\ra j} \r) \defeq \prod_{D_{i\ra j}^{I\ra \mcl{J}} \in \mbf{D}_{i\ra j} } f_{D_{i\ra j}^{I \ra \mcl{J}}} \l( d_{i\ra j}^{I \ra \mcl{J} } \r).
\numberthis\label{eq:fdij}
\end{align*}
\subsection{Posterior Probability of Network's Initial-State}
\subsubsection{Infection-Times Vector and Propagation-Times}
To derive the posterior probability of $\mbf{X}^{(0)}$
, it suffices to derive the joint probability $\pr\l(\mbf{x}^{(0)}, \mcl{O}_n\r)$. To that end, we define a few \emph{auxiliary random variables}. For a node $i\in V$, we define its infection-time variables, $\mbf{T}_i \defeq \{T_i^A, T_i^B\}$, as follows.
\begin{align}
\begin{split}
T_i^{A}&\defeq \inf\l\{t:X_i^{(t)} \ni A \r\}, \text{ and}\label{eq:tii}\\
T_i^{B}&\defeq \inf\l\{t:X_i^{(t)} \ni B \r\}.
\end{split}
\end{align}
Note that $T_i^I$ is the time at which node $i$ gets infected with process $I$, the support of $T_i^I$ is contained in the set $\mcl{T} \defeq \{0,1,\dots, |V|-1,\infty\}$\footnote{$|V|-1$ is the worst-case upper-bound on the longest path in $G$.}, and that \eqref{eq:tii} is consistent with the standard convention, $\inf\emptyset=\infty$. 

Denoting the vector of all infection-times by $\mbf{T} = \l(\mbf{T}_i:i\in V\r)$, we can express the process $\{\mbf{X}^{(t)} : t \ge 0\}$ in terms of $\mbf{T}$, if for each node $i\in V$, we set
\begin{align*}
X_i^{(t)}&=\begin{cases}
\emptyset &\text{ $ t<\min\{T_i^A,T_i^B\} $ },\\
\argmin\l\{T_i^A,T_i^B\r\} &\begin{array}{@{}l@{}} 
\min\l\{T_i^A,T_i^B\r\}\le t\\ <\max\l\{T_i^A,T_i^B\r\}, \end{array} \\
\l\{A,B\r\} &\text{ $t\ge \max\l\{T_i^A,T_i^B\r\}$}.
\end{cases}\numberthis\label{eq:xit}
\end{align*}
From \eqref{eq:xit}, it follows that the conditional probability of the event $\{ \mbf{X}^{(W)} = \mbf{x} \}$ given $W$ and $\mbf{T}$ is a consistency-check kernel, i.e.,
\begin{align*}
\pr\l( \{ \mbf{X}^{(W)} = \mbf{x} \} | w, \mbf{t} \r) &= \prod_{i \in V} \gamma_i^p \l( w, \mbf{t}_i; x_i \r), 
\end{align*}
where,
\begin{align}
\begin{split}\label{eq:gammaip}
&\gamma_i^p \l( w, \mbf{t}_i; {x}_i \r) \hspace{0pt} \defeq \11\l[{x}_{i}=\emptyset, w<\min\l\{t_i^A,t_i^B\r\}\r]\\
&\hspace{10pt}+\11\l[{x}_{i}=\argmin\l\{t_i^A,t_i^B\r\} ,\r. \\
&\hspace{40pt} \l. \min\l\{t_i^A,t_i^B\r\} \le w < 
\max\l\{t_i^A,t_i^B\r\} \r]\\
&\hspace{10pt}+ \11\l[{x}_{i}=\l\{A,B\r\},  w\ge \max
\l\{t_i^A,t_i^B\r\}\r].
\end{split}
\end{align}
\textbf{Remark}: The superscript $p$ in $\gamma_i^p$ indicates that this would be the conditional probability of $\wt{X}_i^{(W)}$ if the observation-snapshot were \emph{perfect}/noiseless.

\subsubsection{Conditional Probability of Infection-Times Vector}
In light of \eqref{eq:xit}, we can focus on the conditional probability of $\mbf{T}$ given $\mbf{X}^{(0)}$, $W$, and $\mbf{D}$. Indeed, $\mbf{T}$ can be determined uniquely once $\mbf{X}^{(0)}$ and $\mbf{D}$ are known. Specifically, for a node $i\in V$, two fundamental quantities for determining its infection times are the \textit{propagation-times} of processes $A$ and $B$ through each of its neighbors. We denote these auxiliary random variables by $\mbf{T}_{k\ra i} \defeq \{ T_{k\ra i}^A, T_{k\ra i}^B \}$ and define them as follows.
\begin{align*}
T_{k\ra i}^I &\defeq T_k^I + 
D_{k\ra i}^{I\ra \emptyset}\11\l[T_k^I<T_i^J\r] +
D_{k\ra i}^{I\ra \{J\}}\11\l[T_k^I\ge T_i^J\r] \\
&\hspace{30pt} \quad \l(I\in\{A,B\}, J\in \{A,B\}\setminus \{I\}\r).\numberthis\label{eq:Tkii}
\end{align*}
We read $T_{k\ra i}^I$ as \textit{propagation-time of process $I$ from $k$ to (its neighbor) $i$}. Intuitively, it represents a ``candidate'' infection-time for node $i$ and the indicator terms in its definition help capture the correct activation-variable which is either $D_{k\ra i}^{I\ra \emptyset}$ or $D_{k\ra i}^{I\ra J}$. We also note that \eqref{eq:Tkii} is consistent with the convention $0\cdot\infty = 0$. 

We can now write,
\begin{align}
\begin{split}\label{eq:tii2}
T_i^A &= \11\l[X_i^{(0)}\not\ni A\r]
\min_{k\in\partial{i}}\l\{ T_{k\ra i}^A \r\}, \text{ and}\\
T_i^B &= \11\l[X_i^{(0)}\not\ni B\r] 
\min_{k\in\partial{i}} \l\{ 
T_{k\ra i}^B
\r\}.
\end{split}
\end{align}
Based on \eqref{eq:Tkii} and \eqref{eq:tii2}, the conditional probability of $\mbf{T}$ given $\mbf{X}^{(0)}$, $W$, and $\mbf{D}$, is a product of consistency-check kernels, i.e.,
\begin{align*}
\pr\l(\mbf{t} | \mbf{x}^{(0)}, \mbf{w}, \mbf{d} \r) &=
\prod_{i\in V} f_{\mbf{T}_i}\l( \mbf{t}_i, x_i^{(0)}, \l\{\mbf{t}_k, \mbf{d}_{k\ra i} \r\}_{k\in\partial{i}}  \r),\numberthis\label{eq:pt}
\end{align*}
where,
\begin{align*}
&f_{ \mbf{T}_i}\l( \mbf{t}_i, x_i^{(0)}, \l\{\mbf{t}_k, \mbf{d}_{k\ra i} \r\}_{k\in\partial{i}} \r)\\
&\defeq \prod_{I\in\{A,B\}} \delta \l( t_i^I, \11\l[x_i^{(0)}\not\ni I\r]
\min_{k\in\partial{i}} \l\{t_{k\ra i}^I \r\} \r).
\numberthis\label{eq:fti}
\end{align*}
Here, in \eqref{eq:fti}, for brevity, we have introduced the shorthands,
\begin{align*}
t_{k\ra i}^I &\defeq 
t_k^I + d_{k\ra i}^{I\ra \emptyset}\11\l[t_k^I<t_i^J\r] 
+ d_{k\ra i}^{I\ra \{J\}}\11\l[t_k^I\ge t_i^J\r]\\
&\hspace{30pt} \quad \l(I\in\{A,B\}, J\in \{A,B\}\setminus \{I\}\r), \numberthis\label{eq:tkii}\\
\mbf{t}_{k\ra i} &\defeq \l\{ t_{k\ra i}^{A}, t_{k\ra i}^{B} \r\},\\
\delta(\cdot, *) &\defeq \11\l[\cdot=*\r].\numberthis\label{eq:delta}
\end{align*}

\subsubsection{Conditional Probability of Observation-Snapshot}
Given $\mbf{X}^{(0)}$, $W$, $\mbf{D}$, and $\mbf{T}$, the conditional probability of $\mcl{O}_{n}$ is given by
\begin{align*}
&\pr\l(\mcl{O}_{n}|\mbf{x}^{(0)},w, \mbf{d}, \mbf{t}\r) = \pr\l( \{ \wt{\mbf{X}}^{(W)}=\wt{\mbf{x}} \} | \mbf{x}^{(0)}, w, \mbf{d}, \mbf{t} \r)\\
&=\sum_{\mbf{x}} \pr\l( \{ \wt{\mbf{X}}^{(W)} = \wt{\mbf{x}} \} | \{ \mbf{X}^{(W)} = \mbf{x} \}, \mbf{x}^{(0)}, w, \mbf{d}, \mbf{t} \r) \\
&\hspace{30pt} \times \pr\l( \{ \mbf{X}^{(W)} = \mbf{x} \} | \mbf{x}^{(0)},w, \mbf{d}, \mbf{t} \r) \\
&=\sum_{\mbf{x}} 
\pr\l( \{ \wt{\mbf{X}}^{(W)} = \wt{\mbf{x}} \} | 
\{ \mbf{X}^{(W)} = \mbf{x} \} \r) 
\pr\l( \{ \mbf{X}^{(W)} = \mbf{x} \} | w, \mbf{t} \r) \\
&=\sum_{\mbf{x}} \prod_{i\in V} f_{\wt{X}_i|X_i} \l(\wt{x}_i|x_i\r) \prod_{i \in V} \gamma_i^p \l( w, \mbf{t}_i; x_i\r)  \\
&\stackrel{\text{DL}}{=} \prod_{i\in V} 
\underbrace{
\sum_{x_i} f_{\wt{X}_i|X_i} \l(\wt{x}_i|x_i\r) \gamma_i^p \l(w, \mbf{t}_i; x_i \r)}_{
\defeq \gamma_i \l( w, \mbf{t}_i; \wt{x}_i\r)
} \\
&
=\prod_{i\in V}\gamma_i \l(w, \mbf{t}_i; \widetilde{x}_i \r),\numberthis\label{eq:po}
\end{align*}
where $DL$ is a shorthand for application of distributive-law.

\subsubsection{Writing the Joint Probability}
With the above setup in place, applying the law of total probability (LTP) gives
\begin{align*}
&\pr\l(\mbf{x}^{(0)}, \mcl{O}_n\r) 
\hspace{0pt} \stackrel{\text{LTP}}{=} \sum_{w, \mbf{d}, \mbf{t}} 
\pr\l(\mbf{x}^{(0)}, \mcl{O}_n, w, \mbf{d}, \mbf{t}\r)\\
&=\sum_{w, \mbf{d}, \mbf{t}} \pr\l(\mbf{x}^{(0)}, w, \mbf{d}\r) \pr\l(\mcl{O}_n, \mbf{t} | \mbf{x}^{(0)}, w, \mbf{d}\r)\\
&=\sum_{w, \mbf{d}, \mbf{t}} \pr\l(w\r) \pr\l(\mbf{x}^{(0)}\r) \pr\l(\mbf{d}\r)  \pr\l(\mbf{t} | \mbf{x}^{(0)}, w, \mbf{d} \r) \\
&\hspace{10pt} \times \pr\l(\mcl{O}_n | \mbf{x}^{(0)}, \mbf{w}, \mbf{d}, \mbf{t} \r)\\
&=\sum_{w, \mbf{d}, \mbf{t}} f_{W}\l(w\r) f_{\mbf{X^{(0)}}}\l(\mbf{x}^{(0)}\r) \prod_{(i,j)\in E} f_{\mbf{D}_{i\ra j} } \l(\mbf{d}_{i\ra j} \r)\\
&\hspace{10pt} \times \prod_{i\in V} 
f_{ \mbf{T}_i} \l( \mbf{t}_i, x_i^{(0)}, \{\mbf{t}_{k\ra i}\}_{k\in\partial{i}} \r) \gamma_i \l(w, \mbf{t}_i; \wt{x}_i\r). \numberthis\label{eq:px0_1}
\end{align*}
In \eqref{eq:px0_1}, we have expressed problem \eqref{eq:argmaxpx0} as the \emph{out-marginalization}\footnote{By out-marginalization of variables, we mean that the summation is performed over all their realizations.} of variables $w, \mbf{d}$, $\mbf{t}$, from the product of the \textit{local-functions} $f_W, f_{\mbf{X}^{(0)}}, \l\{f_{\mbf{T}_i}, \gamma_i \r\}_{i\in V}, \l\{f_{\mbf{D}_{i\ra j}} \r\}_{(i,j)\in\vv{E} }$. 

\subsection{Within-Factor Optimization of High Complexity Local-Functions}\label{subsec:ftioptimization}
For notational simplicity, from hereon, we will skip writing the arguments of a function when they are clear from the context. 

\subsubsection{Reducing Local-Domain of $f_{\mbf{T}_i}$}
It is well-known that the per-iteration-time complexity of a BP algorithm is linear in the size of the largest \textit{local-domain} -- by local-domain, we mean the domain of one of the local-functions. In \eqref{eq:px0_1}, the (local) domain of $f_{\mbf{T}_i}$ is the joint-support of variables, $\mbf{T}_i, X_i^{(0)},\l\{\mbf{T}_k, \mbf{D}_{k\ra i} \r\}_{k\in\partial{i}}$. The support-sizes of $X_i^{(0)}$, $\mbf{D}_{i\ra j}$, and $\mbf{T}_i$, are 4, 16, and $\l(|V|+1\r)^2$ respectively.\footnote{The possible values of $t_i^I$ are $\mcl{T} = \{0,1,\dots, |V|-1,\infty\}$; $|V|-1$ is the worst-case upper-bound on the longest path in the graph $G$.} Therefore, a direct implementation\footnote{That does not explore the internal dependencies of the input variables of $f_{\mbf{T}_i}$.} of BP will have per-iteration-time complexity that is at least on the order of $|V|^{2 \max_{i\in V} \l\{ |\partial{i}|+1\r\} }$. This will make BP infeasible even for moderately-sized network instances. 

One way we could circumvent this issue is if $f_{\mbf{T}_i}$ has a sum-product form of reduced complexity. We refer to the process of using such sum-product forms in a given local-function as \textit{within-factor optimization}. Such optimizations can often play a significant role in complexity reduction, and fortunately so, $f_{\mbf{T}_i}$ has a nice sum-product form -- one where its collective dependence on variables $\{\mbf{t}_k, \mbf{d}_{k\ra i} \}_{k\in\partial{i}}$ is broken down into multiplicatively separable dependencies. This within-factor optimization of $f_{\mbf{T}_i}$ follows from the identity,
\begin{align*}
&\delta\l( a,\min_{i \in [n]}\l\{b_i \r\}\r) = \prod_{i\in[n] } \11\l[ \sigma\l( b_i; a \r)\ge 0\r] \\
&\hspace{100pt} -\prod_{ i\in[n] } \11\l[\sigma\l( b_i; a \r)=1 \r],\numberthis\label{eq:keyidea}
\end{align*}
where,
\begin{align*}
\sigma\l(b_i; a\r) \defeq \sign \l(b_i - a\r).\numberthis\label{eq:sigmadef}
\end{align*}
Identity \eqref{eq:keyidea} is consistent with the convention, $\sign\l(\infty-\infty\r)=0$.\footnote{$\sign\l(c\r) = \11[c>0] -\11[c<0]$.} 
Using \eqref{eq:keyidea}, we can write \eqref{eq:fti} as follows,
\begin{align*}
&f_{\mbf{T}_i} \l( \mbf{t}_i, x_i^{(0)}, \l\{ \mbf{t}_k, \mbf{d}_{k\ra i} \r\}_{ k\in\partial{i} }  \r)\\
&\hspace{10pt} = \zeta_{i} \l( x_i^{(0)}, \mbf{t}_i \r)
\psi_{\ra i} \l( 
\mbf{t}_i, \l\{ \sigma\l(t_{k\ra i}^I, t_i^I\r) \r\}_{k\in\partial{i}, I\in\{A,B\}}
\r).\numberthis\label{eq:fti2}
\end{align*}
Here, $t_{k\ra i}^I$ is the shorthand given by \eqref{eq:tkii}; the kernel $\zeta_i$ ensures that $x_i^{(0)}$ and $\mbf{t}_i$ are consistent, i.e., 
\begin{align*}
\zeta_i \l( x_i^{(0)}, \mbf{t}_i 
\r) &\defeq \11\l[x_i^{(0)}=\emptyset, t_i^A>0, t_i^B>0\r]\\
&\hspace{10pt}+\11\l[x_i^{(0)}=\{A\}, t_i^A=0, t_i^B>0\r]\\
&\hspace{10pt} +\11\l[x_i^{(0)}=\{B\}, t_i^B=0, t_i^A>0\r]\\
&\hspace{10pt}+\11\l[x_i^{(0)}=\{A,B\}, t_i^A=t_i^B=0 \r];\numberthis\label{eq:zetai}
\end{align*}
and the kernel $\psi_{\ra i}$ ensures that node-$i$'s infection-times of the two processes (if non-zero) are consistent with the propagation-times through its neighbors, i.e.,
\begin{align*}
&\psi_{\ra i} \l(\mbf{t}_i, \l\{ \mbf{s}_{k\ra i} \r\}_{k\in\partial{i}} \r) 
\defeq 
\prod_{I\in\{A,B\}} \psi_{\ra i}^I \l( 
t_i^I, \{ s_{k\ra i}^I  \}_{k\in\partial{i}}
\r) \numberthis\label{eq:psii} \\ 
&\psi_{\ra i}^{I} \l( 
t_i^I, \l\{ s_{k\ra i}^I \r\}_{k\in\partial{i}}
\r) \defeq 
\11[t_i^I = 0 ]
\\
&\hspace{10pt} + \11\l[t_i^I>0\r] \l(
\prod_{k\in\partial{i}} \11\l[s_{k\ra i}^I \ge 0 \r]  - \prod_{k\in\partial{i}} \11 \l[s_{k\ra i}^I=1 \r] \r).\numberthis\label{eq:psiii1}
\end{align*}
From \eqref{eq:fti2}, we note that $f_{\mbf{T}_i}$ depends on $ \mbf{t}_k, \mbf{d}_{k\ra i} $ only through $\sigma(t_{k\ra i}^A; t_i^A )$ and $\sigma(t_{k\ra i}^B; t_i^B )$. The function $\sigma(\cdot,*)$ has two useful properties: 
\begin{itemize}
\item
Despite having a large domain, its range is small; it can take at most three values, namely -1, 0, and 1.
\item
The inverse-image of each value in its range is easily represented in terms of its arguments -- one only needs to compare the value of $\cdot$ with $*$. 
\end{itemize}
We shall make use of these two properties to reduce the complexity associated with $f_{\mbf{T}_i}$. To this end, let us introduce the relative-timing variables, $\mbf{S} \defeq \l(\mbf{S}_{i\ra j} : (i,j)\in \vv{E} \r)$ where $\mbf{S}_{i\ra j} \defeq \{S_{i\ra j}^A, S_{i\ra j}^B\}$ and 
\begin{align}
\begin{split}\label{eq:siji}
S_{i\ra j}^A &\defeq \sigma \l( T_{i\ra j}^A; T_j^A \r), \text{ and} \\
S_{i\ra j}^B &\defeq \sigma \l( T_{i\ra j}^B; T_j^B \r).
\end{split}
\end{align}
Here, we note that the support of each $S_{i\ra j}^{I}$ is contained in the set $\mcl{S}_{i\ra j}^{I} \defeq \{0,1\}$. This is because the event $\l\{S_{i\ra j}^I=-1\r\}$ is a zero-measure event (propagation of a process towards a node cannot happen earlier than the node's infection-time for that process; see \eqref{eq:tii2}). This simple observation helps simplify \eqref{eq:psiii1} to
\begin{align*}
&\psi_{\ra i}^{I} \l( t_i^I, \l\{s_{k\ra i}^{I} \r\}_{k\in\partial{i}} \r) \\
&= \11[t_i^I = 0]
+ \11\l[t_i^I>0\r] \l( 1 - \prod_{k\in\partial{i}} \11 \l[ s_{k\ra i}^I=1 \r] \r).\numberthis\label{eq:psiii2}
\end{align*}
Intuitively, $S_{i\ra j}^{I}=0$ means that node $i$ infected node $j$ with process $I$, and $S_{i\ra j}^{I}=1$ means otherwise. \footnote{Given that time is slotted, it is possible that more than one neighbor of node $i$ infects it.}

With the introduction of $\mbf{S}$, we can reduce the complexity of our eventual message-passing scheme by invoking the law of total probability in \eqref{eq:px0_1} for all realizations of $\mbf{S}$ (this helps us change $f_{\mbf{T}_i}$ into $\zeta_i\psi_{\ra i}(\mbf{t}_i, \{ \mbf{s}_{k\ra i}\}_{k\in\partial{i}} )$) and exploiting its deterministic dependence on the propagation-time and infection-time variables (this makes the extra overhead due to additional sum over all realizations of $\mbf{S}$ insignificant compared to the run-time gains obtained from removing all propagation-time variables). Specifically, based on \eqref{eq:fti2}, we can rewrite \eqref{eq:px0_1} as
\begin{align*}
&\pr \l( \mbf{x}^{(0)}|\mcl{O}_n \r) \\
&\hspace{0pt} \stackrel{\text{LTP}}{\propto}
\sum_{w,\mbf{t}, \mbf{d}, \mbf{s} }
f_{W}(w)f_{\mbf{X}^{(0)}}\l(\mbf{x^{(0)}}\r) 
\prod_{(i,j)\in\vv{E}} 
f_{\mbf{D}_{i\ra j}} \l(\mbf{d}_{i\ra j}\r) \\
&\hspace{0pt} \times \prod_{i\in V}\zeta_i\l(x_i^{(0)}, \mbf{t}_i \r) \gamma_i \l(w,\mbf{t}_i;\wt{x}_i\r) \psi_{\ra i} \l( \mbf{t}_i, \l\{ \mbf{s}_{k\ra i} \r\}_{k\in\partial{i}}  \r)  \\ 
&\hspace{0pt} \times \prod_{ (i,j)\in\vv{E} } 
\11\l[ s_{i\ra j}^{A}=\sigma\l(t_{i\ra j}^A; t_j^A \r) \r]
\11\l[ s_{i\ra j}^{B}=\sigma\l(t_{i\ra j}^B; t_j^B \r) \r]
\\
&\hspace{0pt} \stackrel{\text{DL}}{=}
\sum_{w,\mbf{t}, \mbf{s} }
f_{W}(w)f_{\mbf{X}^{(0)}}\l(\mbf{x^{(0)}}\r) \\
&\hspace{0pt} \times \prod_{i\in V}\zeta_i\l(x_i^{(0)}, \mbf{t}_i \r) \gamma_i \l(w,\mbf{t}_i;\wt{x}_i\r) \psi_{\ra i} \l( \mbf{t}_i, \left\{ \mbf{s}_{k\ra i} \right\}_{k\in\partial{i}}  \r)  \\ 
&\hspace{0pt} \times \sum_{\mbf{d} } \prod_{(i,j)\in\vv{E}} 
f_{\mbf{D}_{i\ra j}} \l(\mbf{d}_{i\ra j}\r) 
\11\l[ s_{i\ra j}^{A}=\sigma\left(t_{i\ra j}^A; t_j^A \right) \r] \\
&\hspace{60pt} \times \11\l[ s_{i\ra j}^{B}=\sigma\l(t_{i\ra j}^B; t_j^B \r) \r]\numberthis\label{eq:px0_2}.
\end{align*}
The summand in \eqref{eq:px0_2} now involves a factor $\psi_{\ra i}$ which has a (local) domain of size $O\l(|V|^2 2^{2|\partial{i}|} \r)$. Although with a lesser base, the complexity is still exponential in the node-degrees. This complexity can be reduced further by taking advantage of the fact that $s_{k\ra i}^I$'s are multiplicatively separable in each term of $\psi_{\ra i}^I$ (see \eqref{eq:psiii2}). This fact is exploited in the derivation of message-update rules, (a step which we refer to as \textit{within-message-update optimization}), where the above complexity is further reduced to $O\l(|V|^2 |\partial{i}|\r)$.\footnote{The curious reader may rightfully question why we did not exploit this fact right here in the onset before proceeding to message-passing. The reason is that the local-function $\psi_{\ra i}$ is not a product form but rather a sum-product form. The only way to gain advantage of such sum-product forms of a local-function is by invoking the distributive-law during the message-update rules (where we need not concern with the rest of the local-functions). Traditionally, in the context of belief-propagation, the use of distributive-law is usually concerned with exploiting the factorization of the global-function instead of the local-functions themselves. Therefore, to emphasize this step of invoking distributive-law later during the message-passing, we refer to it as \textit{within-message-update optimization}. We will use this in Section \ref{sec:messageupdaterules_1} (with details given in Appendix \ref{app:messageupdaterules-1}).}.

With the aid of consistency-checks on realizations of $\mbf{S}$, the sum over realizations of $\mbf{D}$ in \eqref{eq:px0_2} can be easily computed.
\begin{claim}\label{claim:getridofd}
Given $\mcl{O}_n = \l\{ \wt{\mbf{X}}^{(W)} = \wt{\mbf{x}}\r\}$,
\begin{align*}
&\pr \l( \mbf{x}^{(0)}|\mcl{O}_n \r) \propto 
\sum_{w,\mbf{t}, \mbf{s} }f_{W}(w)f_{\mbf{X}^{(0)}}\l(\mbf{x^{(0)}}\r)  \\
&\hspace{0pt} \times \prod_{i\in V}\zeta_i\l(x_i^{(0)}, \mbf{t}_i \r)\gamma_i \l(w,\mbf{t}_i;\wt{x}_i\r) \psi_{\ra i} \l( \mbf{t}_i, \left\{ \mbf{s}_{k\ra i} \right\}_{k\in\partial{i}}  \r)  \\ 
&\hspace{0pt} \times \prod_{  \{i,j\}\in E } \psi_{i\xlrarrow{} j} \l( \mbf{t}_i, \mbf{s}_{j\ra i}, \mbf{t}_j, \mbf{s}_{i\ra j} \r),\numberthis\label{eq:px0_3}
\end{align*}
where $\psi_{i\xlrarrow{} j}(\cdot)$ is given by
\begin{align*}
&\psi_{i\xlrarrow{} j}(\cdot) \defeq \psi_{i\ra j}\l(\mbf{t}_i, \mbf{s}_{i\ra j}; \mbf{t}_j \r) \psi_{j\ra i}\l(\mbf{t}_j, \mbf{s}_{j\ra i}; \mbf{t}_i \r), \numberthis\label{eq:psiijji}  \\
&\psi_{i\ra j}(\cdot) \defeq \prod_{I\in \{A,B\}} \psi_{i\ra j}^{I}\l(t_i^I, s_{i\ra j}^I; t_j^I, t_j^J\r) 
,\numberthis\label{eq:psiij}\\
&\psi_{i\ra j}^I(\cdot) \defeq \\
\begin{split}
\begin{cases}
1-\11\l[t_i^I<t_j^I\r] \lambda_{i\ra j}^{I\ra \emptyset} &\text{if } t_j^I<\infty, s_{i\ra j}^{I}=1, t_i^I<t_j^J; \\
1-\11\l[t_i^I<t_j^I\r] \lambda_{i\ra j}^{I\ra \{J\}} &\text{if } t_j^I<\infty, s_{i\ra j}^{I}=1, t_i^I\ge t_j^J; \\
\11\l[t_i^I+1=t_j^I\r]\lambda_{i\ra j}^{I\ra \emptyset} &\text{if } t_j^I<\infty, s_{i\ra j}^{I}=0, t_i^I<t_j^J;\\
\11\l[t_i^I+1=t_j^I\r]\lambda_{i\ra j}^{I\ra \{J\}} &\text{if } t_j^I<\infty, s_{i\ra j}^{I}=0, t_i^I\ge t_j^J;\\
1 - \11\l[t_i^I<\infty\r]\lambda_{i\ra j}^{I\ra \emptyset} &\text{if } t_j^I=\infty, s_{i\ra j}^{I}=0, t_i^I<t_j^J;\\
1 - \11\l[t_i^I<\infty\r]\lambda_{i\ra j}^{I\ra \{J\}} &\text{if } t_j^I=\infty, s_{i\ra j}^{I}=0, t_i^I\ge t_j^J;\\
0 &\text{otherwise},
\end{cases}\\
\l(I\in\{A,B\}, J\in \{A,B\}\setminus \{I\}\r).
\end{split}\numberthis\label{eq:psiiji}
\end{align*}
\end{claim}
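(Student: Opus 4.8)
The plan is to evaluate the inner sum over $\mbf{d}$ in \eqref{eq:px0_2} directly, exploiting the fact that the activation variables decouple completely across directed edges and across processes. First I would observe that in \eqref{eq:px0_2} the variables $\mbf{d}$ appear only inside the consistency indicators $\11[s_{i\ra j}^I = \sigma(t_{i\ra j}^I; t_j^I)]$, and that by \eqref{eq:tkii} the propagation-time $t_{i\ra j}^I$ depends solely on $\mbf{d}_{i\ra j}$ (through $d_{i\ra j}^{I\ra\emptyset}$ and $d_{i\ra j}^{I\ra\{J\}}$), on $t_i^I$, and on $t_j^J$; it never involves $\mbf{d}_{j\ra i}$ nor any activation variable of another edge, while $t_j^I$ is an infection-time variable belonging to $\mbf{t}$. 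Since $f_{\mbf{D}_{i\ra j}}$ factorizes over its four component kernels \eqref{eq:fdij}, and the $A$-indicator and $B$-indicator involve disjoint activation variables, the sum $\sum_{\mbf{d}}$ splits as a product over directed edges $(i,j)\in\vv{E}$ and processes $I\in\{A,B\}$:
\[
\sum_{\mbf{d}}\prod_{(i,j)\in\vv{E}} f_{\mbf{D}_{i\ra j}}(\mbf{d}_{i\ra j}) \prod_{I}\11[\cdots] = \prod_{(i,j)\in\vv{E}}\prod_{I\in\{A,B\}} \psi_{i\ra j}^I .
\]
Grouping the two directed edges $(i,j)$ and $(j,i)$ of each undirected edge $\{i,j\}$ then yields exactly the factor $\psi_{i\xlrarrow{} j}$ of \eqref{eq:psiijji}, which is the product structure asserted in \eqref{eq:px0_3}.

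The substantive step is evaluating a single factor $\psi_{i\ra j}^I(t_i^I, s_{i\ra j}^I; t_j^I, t_j^J)$. Here I would first note that by \eqref{eq:tkii} the comparison of $t_i^I$ with $t_j^J$ selects which activation variable actually enters $t_{i\ra j}^I$: when $t_i^I < t_j^J$ we have $t_{i\ra j}^I = t_i^I + d_{i\ra j}^{I\ra\emptyset}$ and $d_{i\ra j}^{I\ra\{J\}}$ sums out to $1$, whereas when $t_i^I \ge t_j^J$ the roles swap and the relevant parameter becomes $\lambda_{i\ra j}^{I\ra\{J\}}$. This is precisely the dichotomy between the $t_i^I<t_j^J$ and $t_i^I\ge t_j^J$ branches in \eqref{eq:psiiji}. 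In either branch the factor reduces to the two-term expression $\sum_{d\in\{1,\infty\}} f_{D}(d)\,\11[s_{i\ra j}^I = \sign(t_i^I + d - t_j^I)]$ with $f_D(1)=\lambda$ and $f_D(\infty)=1-\lambda$, where $\lambda$ is the selected parameter.

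It then remains to evaluate this two-term sum for each combination of $s_{i\ra j}^I\in\{0,1\}$ and of $t_j^I$ being finite or $\infty$, which I would carry out by case analysis using the extended-arithmetic conventions $t_i^I+\infty=\infty$ and $\sign(\infty-\infty)=0$. For instance, when $t_j^I<\infty$ and $s_{i\ra j}^I=1$, the $d=\infty$ term always contributes $1-\lambda$ (since $\sign(\infty - t_j^I)=1$) and the $d=1$ term contributes $\lambda\,\11[t_i^I \ge t_j^I]$, summing to $1-\lambda\,\11[t_i^I<t_j^I]$, which matches the first two lines of \eqref{eq:psiiji}. The $s_{i\ra j}^I=0$ cases force either $t_i^I+1=t_j^I$ (when $t_j^I<\infty$) or the vanishing of the finite-$d$ term (when $t_j^I=\infty$), yielding the remaining lines, while every combination not listed—including each instance that would require $s_{i\ra j}^I=-1$—evaluates to $0$, consistent with the zero-measure observation recorded just below \eqref{eq:siji}. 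I expect the main obstacle to be bookkeeping rather than conceptual: one must keep the two $\infty$-conventions straight, correctly track in each of the seven cases which of $\lambda_{i\ra j}^{I\ra\emptyset}$ or $\lambda_{i\ra j}^{I\ra\{J\}}$ is selected by the $t_i^I$-versus-$t_j^J$ comparison, and handle $t_i^I=\infty$ consistently, where $\sign(\infty-\infty)=0$ makes the $d=1$ term behave like the $d=\infty$ term.
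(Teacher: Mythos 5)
Your proposal is correct and follows essentially the same route as the paper's proof in Appendix B: the paper likewise applies the distributive law twice to factor the sum over $\mbf{d}$ into a product over directed edges and then over processes (exploiting the disjointness of the activation variables), defines each resulting two-variable sum as $\psi_{i\ra j}^I$, and evaluates it by the same case analysis on $s_{i\ra j}^I$ and the finiteness of $t_j^I$, with the $t_i^I$-versus-$t_j^J$ comparison selecting the relevant $\lambda$ parameter. Your worked cases (e.g.\ $1-\lambda\,\11[t_i^I<t_j^I]$ for $t_j^I<\infty$, $s_{i\ra j}^I=1$) match \eqref{eq:psiiji} exactly.
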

The proof of Claim \ref{claim:getridofd} is algebraic; for completeness, it is provided in Appendix \ref{app:getridofd}. Here, each $\psi_{i\ra j}^I$ is a conditional probability of $S_{i\ra j}^I$ given $T_j^I, T_j^J, T_i^I$.

\subsubsection{Reducing Support of $W$}
When $w_{max}=\infty$, the support of $W$ is countably-infinite. This is not much of an issue---one can reduce the local-domain of $f_W$ without using any approximation. Note that the variable $w$ appears only in the local-functions $f_W$ and $\gamma_i$'s. Furthermore, for any fixed $\wt{x}_i \in \mcl{X}$ and $\mbf{t}_i \in \mcl{T}^2$, 
\begin{align*}
    \gamma_i\l(w, \mbf{t}_i; \wt{x}_i \r) &= \gamma_i\l(|V|, \mbf{t}_i; \wt{x}_i \r) \\ 
    &\hspace{-15pt} \text{for all } w \in \l\{ |V|, |V|+1, \dots \r\}.\numberthis\label{eq:gammai:largerws}
\end{align*}
To use \eqref{eq:gammai:largerws}, we define ${\mcl{W}_1} \defeq [w_{min}, (|V|-1) \wedge w_{max} ] \cup \{ \dagger \} $ and let
\begin{align*}
\udl{f}_W (w) &\defeq 
\begin{cases}
f_W(w), & w \in {\mcl{W}_1}\setminus \{\dagger\},\\ 
\sum\limits_{\substack{w = |V| }}^{w_{max}} f_W(w) & w = \dagger. \numberthis\label{eq:fw1}
\end{cases}
\end{align*}
and
\begin{align*}
\udl{\gamma}_i \l( w, \mbf{t}_i; \wt{x}_i \r) &\defeq 
\begin{cases}
\gamma_i \l( w, \mbf{t}_i; \wt{x}_i \r), & w \in {\mcl{W}_1} \setminus \{\dagger\},\\
\gamma_i \l( |V|, \mbf{t}_i; \wt{x}_i \r) & w = \dagger.\numberthis\label{eq:gammai1}
\end{cases}
\end{align*}
Then, we can rewrite \eqref{eq:px0_3} as
\begin{align*}
&\pr \l( \mbf{x}^{(0)}|\mcl{O}_n \r) \propto 
\sum_{w,\mbf{t}, \mbf{s} }\udl{f}_{W}(w)f_{\mbf{X}^{(0)}}\l(\mbf{x^{(0)}}\r)  \\
&\hspace{0pt} \times \prod_{i\in V}\zeta_i\l(x_i^{(0)}, \mbf{t}_i \r) \udl{\gamma}_i \l(w,\mbf{t}_i;\wt{x}_i\r) \psi_{\ra i} \l( \mbf{t}_i, \left\{ \mbf{s}_{k\ra i} \right\}_{k\in\partial{i}}  \r)  \\ 
&\hspace{0pt} \times \prod_{  \{i,j\}\in E } \psi_{i\xlrarrow{} j} \l( \mbf{t}_i, \mbf{s}_{j\ra i}, \mbf{t}_j, \mbf{s}_{i\ra j} \r),\numberthis\label{eq:px0_4}
\end{align*}
\section{Inference Via Marginalization For Small Problem Instances}\label{sec:inference_1}
In this section, we develop a BP algorithm that solves the inference problems \eqref{eq:argmaxpx0}-\eqref{eq:argmaxptii} exactly, whenever the network $G$ is a forest. Understandably, the algorithm is not scalable to large problem instances simply because solving any one of these problems exactly is NP-hard -- for example, one needs exponential memory just to store the posterior probabilities, $\{ \pr\l(\mbf{x}^{(0)}|\mcl{O}_n\r) \}$. However, the algorithm presented in this section would be implementable for small problem instances (namely when the graph $G$ is small). Furthermore, once developed, a scalable and convergent extension for approximate solutions, as done in Sections \ref{sec:inference_2} and \ref{sec:convergence}, will be natural and easy to follow.
\begin{figure}[!t]
\centering
\includegraphics[height=1.2in]{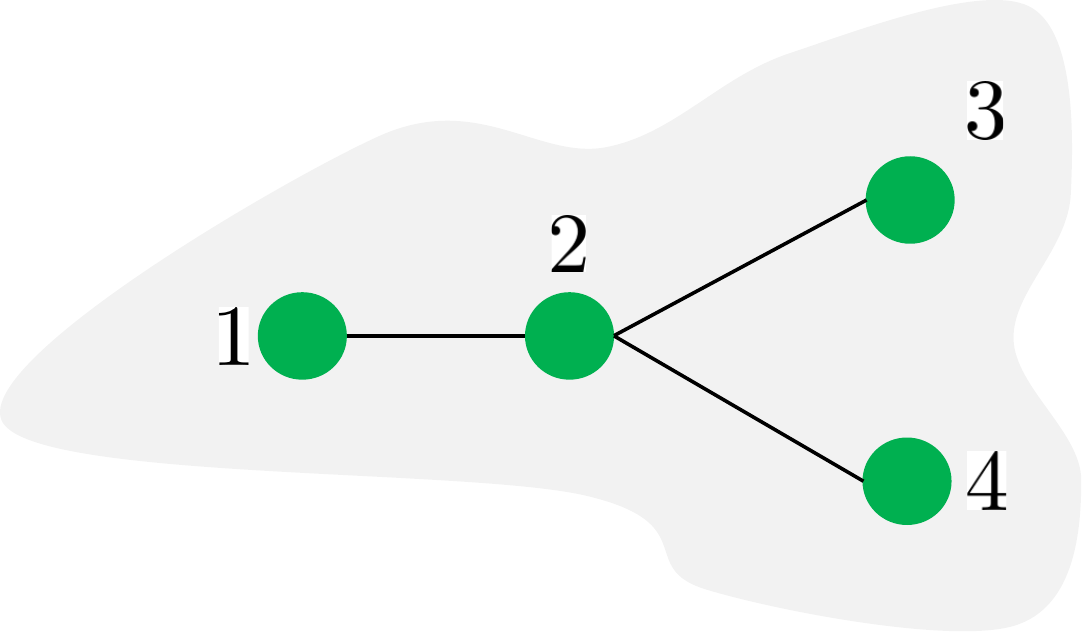}
\caption{Example Network, $G$.}
\label{fig:exampleg}
\end{figure}

\subsection{Factor-Graph Representation}\label{subsec:fg}
We are interested in a suitable factor-graph\footnote{For review of factor-graphs, see \cite{kleinberg10}.} representation of the summand in \eqref{eq:px0_4}, i.e., the \emph{global-function}
\begin{align*}
\Gamma \defeq \udl{f}_W f_{\mbf{X}^{(0)}}\prod_{i\in V}\zeta_i \udl{\gamma}_i \psi_{\ra i} \prod_{\{i,j\}\in E } \psi_{i\xlrarrow{} j}.\numberthis\label{eq:biggamma}
\end{align*}
One may easily check that the standard factor-graph of \eqref{eq:biggamma} in which each variable is assigned a separate (variable) node is far from loop-free, even when the underlying network $G$ is acyclic (in the undirected sense).

The BP algorithm provides exact marginalization when the factor-graph it operates on is a tree (or forest), and in general, provides good approximation on factor-graphs which have few (and preferably long) cycles. In principle, we can convert any cyclic factor-graph into an acyclic one, but this often comes at a great expense in complexity. A sufficient set of \emph{factor-graph transformations} that allow such conversion are given in \cite[Section VI]{kschischang01}. Below, we list a sequence of only those transformations that, when applied to the standard factor-graph of \eqref{eq:biggamma}, convert it into an equivalent one that resembles $G$, in particular its loop-structure. The motivation behind introducing these transformations is that \textit{i}) they will enable exact estimation of $\pr\l(\mbf{x}^{(0)}|\mcl{O}_n\r)$ whenever $G$ is acyclic, and \textit{ii}) hopefully provide a good approximation of  $\pr\l(\mbf{x}^{(0)}|\mcl{O}_n\r)$ when $G$ is locally tree-like. For a connected graph $G$, these transformations are step-wise listed below and are illustrated for the example network shown in \figurename \ref{fig:exampleg}. The extension to graphs with more than one connected-components is discussed thereafter.
\begin{enumerate}
\item[(a)] Cluster the variables $\l\{ x_i^{(0)} \r\}_{i\in V}$ into $\mbf{x}^{(0)}$. Then, cluster $w$ and $\mbf{x}^{(0)}$ into $\{ w, \mbf{x}^{0} \}$ and let $v_G$ denote the variable node that corresponds to it. Then, for every $i\in V$, cluster $\l\{ t_i^A,t_i^B \r\}$ into $\mbf{t}_i$ and for every $(i,j)\in \vv{E}$, cluster $\l\{s_{i\ra j}^{A},s_{i\ra j}^{B} \r\}$ into $\mbf{s}_{i\ra j}$. Finally, cluster $\udl{f}_W$ and $f_{\mbf{X}^{(0)}}$ into factor node 
$\udl{f}_{W, \mbf{X}^{(0)}} \defeq \udl{f}_W f_{\mbf{X}^{(0)}}$, and for every $i\in V$ cluster $\zeta_i$, $\gamma_i$, and $\psi_i$ into $\zeta_i\gamma_i\psi_i$. See \figurename \ref{fig:stepa}.

\item[(b)] For every $i\in V$, stretch $\mbf{t}_i$ to $\mbf{s}_{j \ra i}$ where $j \in\partial{i}$. Afterwards, remove the redundant variable node $\mbf{t}_i$ (along with all of its edges). See \figurename \ref{fig:stepb}.

\item[(c)] Stretch variables of $v_G$ throughout the factor-graph. This results in a (variable) node $v_{j \ra i}$ that corresponds to the variables $\l\{ w,\mbf{x}^{(0)},\mbf{t}_i, \mbf{s}_{j\ra i} \r\}$. 
Now, remove the redundant edges of $v_G$. See \figurename \ref{fig:stepc}. At this point, it is good to note that for each variable of the global-function $\Gamma$, the sub-graph of variable (and factor) nodes that contain it is connected.

\item[(d)] Choose $\wh{e}=\{\wh{i},\wh{j}\}$ from $E$ arbitrarily and connect a new copy of variable node $v_G$ to it. Then, connect the factor node $\udl{f}_{W, \mbf{X}^{(0)}}$ to this new copy of $v_G$ while removing the edge between $\udl{f}_{W, \mbf{X}^{(0)}}$ and original copy of $v_G$. Finally, remove the original (now-redundant) copy of $v_G$. See \figurename \ref{fig:stepd}.
\end{enumerate}
The above set of transformations correspond to the following equivalent (inflated) version of \eqref{eq:biggamma}.
\begin{align*}
&\udl{f}_{W} f_{\mbf{X}^{(0)}} \prod_{i\in V} 
f_i \l( w,\mbf{x}^{(0)},\mbf{t}_i, \{\mbf{s}_{k\ra i} \}_{k\in\partial{i}} \r) \\
&\hspace{20pt} \times 
\prod_{\{i,j\}\in E } f_{ij} \l( w, \mbf{x}^{(0)}, \mbf{t}_i,\mbf{s}_{i \ra j}, \mbf{t}_j,\mbf{s}_{j\ra i} \r),
\end{align*}
where,
\begin{align*}
f_i \l( \cdot \r) \defeq 
\zeta_i 
\udl{\gamma}_i
\psi_{\ra i}, 
\quad \text{ and } \quad
f_{ij} \l( \cdot \r) \defeq 
\psi_{i\xlrarrow{} j}.\numberthis\label{eq:fiandfij}
\end{align*}
When $G$ has more than one connected-components, we can apply steps similar to (b), (c), and (d) for each connected-component and obtain the following factor-graph representation. 
\begin{align}
\begin{split}\label{eq:gbiggamma}
G_{\Gamma} &\defeq \l(V_{\Gamma,1}\cup V_{\Gamma,2}, E_{\Gamma}\r), \\
V_{\Gamma,1} &\defeq \l\{ v_{C} : C \in \mcl{C} \r\} 
\cup \l\{ v_{j \ra i} :(j, i)\in\vv{E} \r\},\\
V_{\Gamma, 2} &\defeq \l\{ \udl{f}_{W, \mbf{X}^{(0)}} \r\} \cup \l\{ f_{i}:i\in V \r\} \cup \l\{ f_{ij}: \{i,j\}\in E \r\}, \\
E_{\Gamma} &\defeq \l\{
\l\{ \l\{ f_i, v_{k \ra i} \r\} \r\}_{i\in V}^{k\in\partial{i}}, 
\l\{ \l\{ f_{ij}, v_{j\ra i} \r\} , \l\{ f_{ij}, v_{i\ra j} \r\} \r\}_{\{i,j\}\in E} \r.\\
&\hspace{20pt}\l. \l\{v_{C}, \udl{f}_{W, \mbf{X}^{(0)}} \r\}_{C \in \mcl{C}}, \l\{v_{C} ,f_{\wh{e}^{(C)}}\r\}
\r\}.
\end{split}
\end{align}
Here, $v_C$ corresponds to a copy of $\{w, \mbf{x}^{(0)}\}$ for connected-component $C$ and $\wh{e}^{(C)}$ is an arbitrarily chosen edge from $E_C$ (the edge-set of connected-component $C$). Figure \ref{fig:exampleggamma} shows an example of deriving $G_{\Gamma}$ for a given network $G$. 

\noindent \textbf{Remark}: $G_{\Gamma}$ is always connected.

\begin{figure*}[!b]
\centering
\subfloat[Factor-Graph of $\Gamma$ after step (a).]{\includegraphics[width=0.49\linewidth]{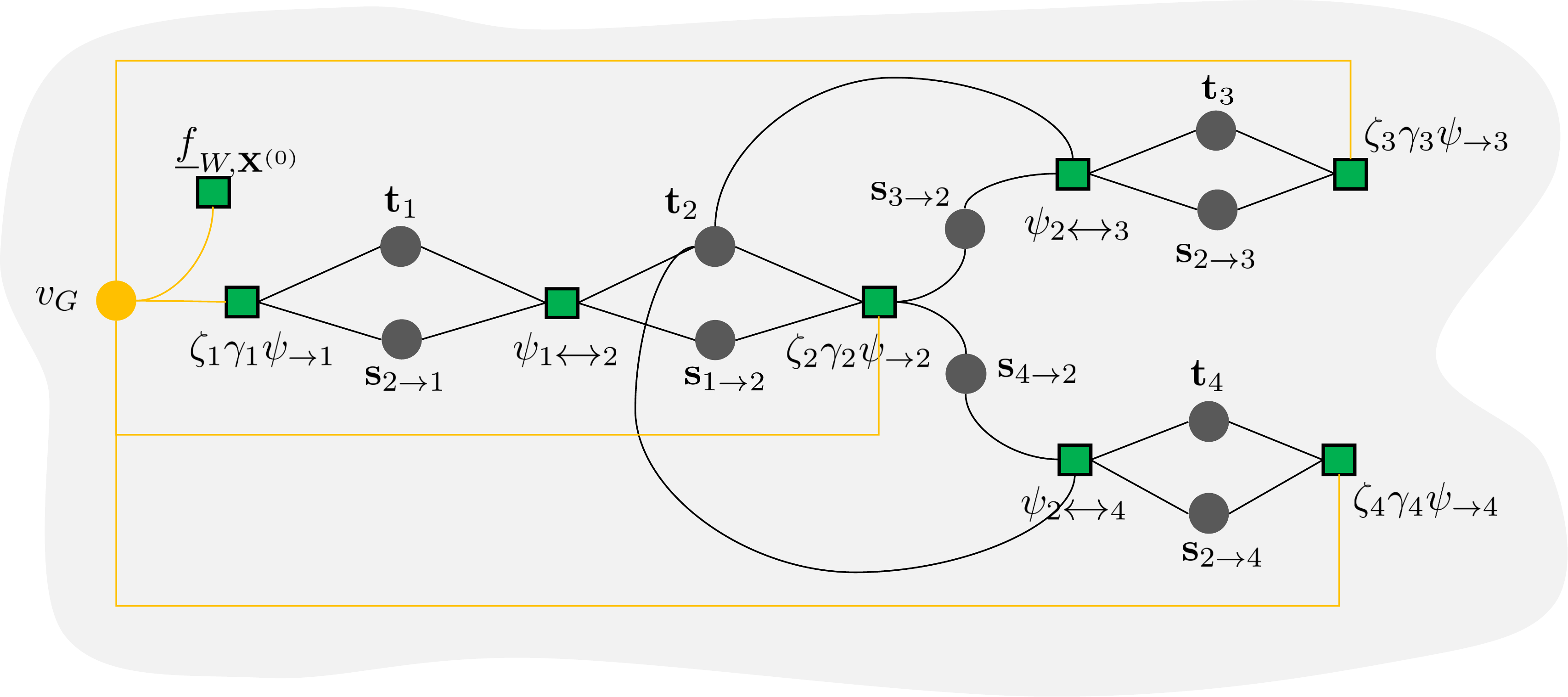}
\label{fig:stepa}}
\hfil
\subfloat[Factor-Graph of $\Gamma$ after step (b).]{\includegraphics[width=0.49\linewidth]{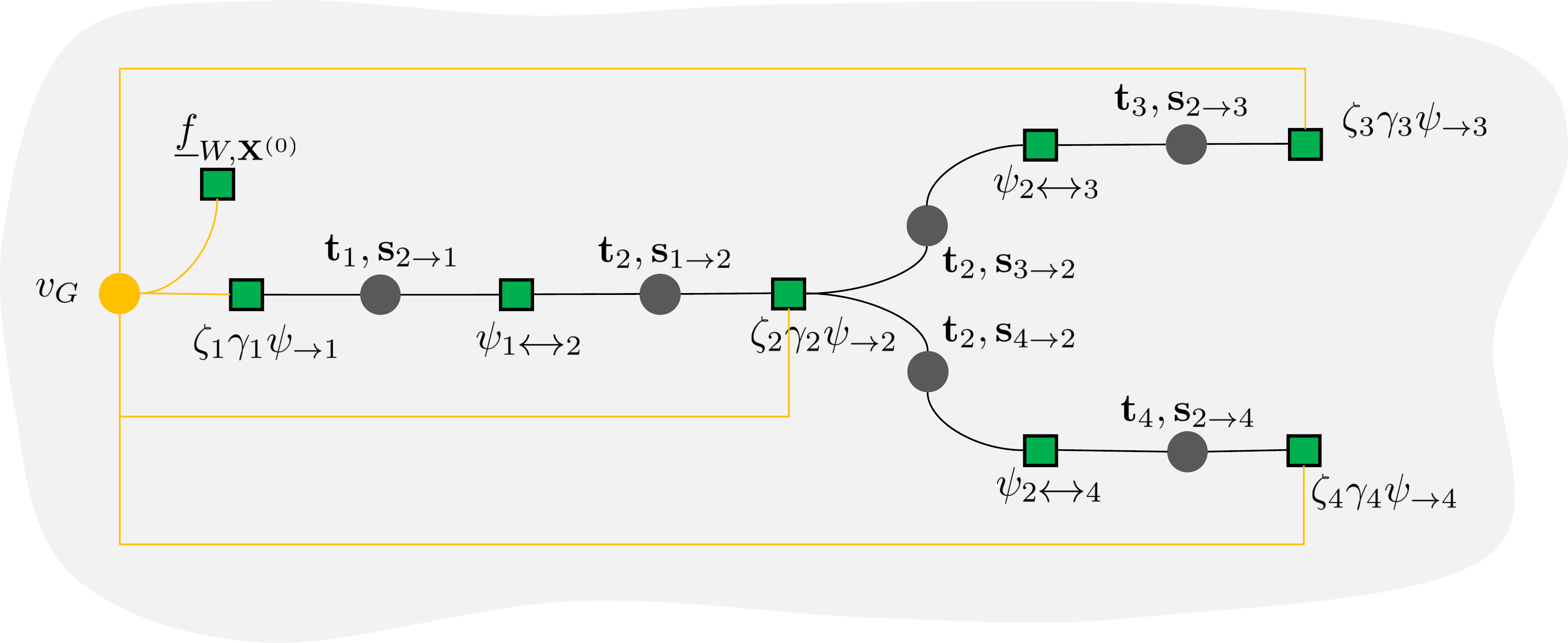}
\label{fig:stepb}}
\hfil
\subfloat[Factor-Graph of $\Gamma$ after step (c).]{\includegraphics[width=0.49\linewidth]{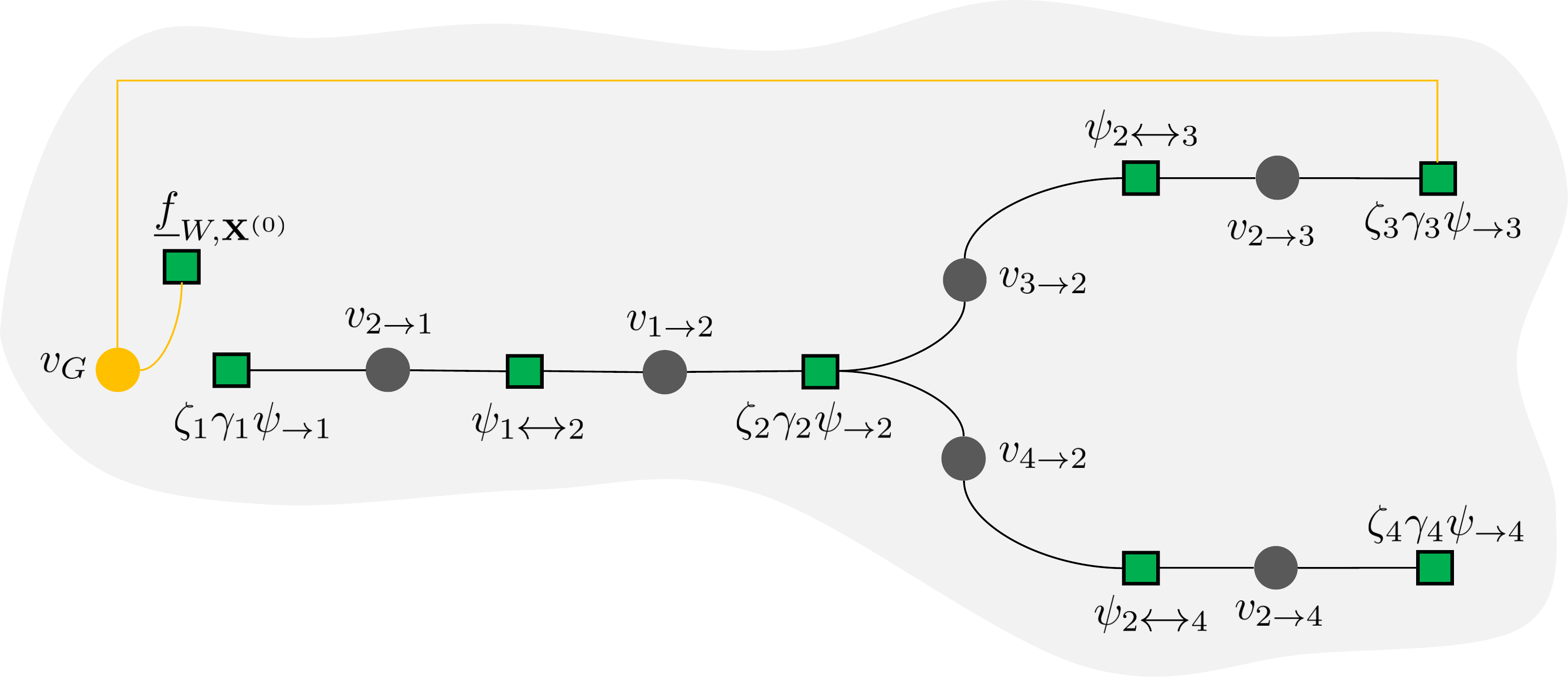}
\label{fig:stepc}}
\hfil
\subfloat[Factor-Graph of $\Gamma$ after step (d).]{\includegraphics[width=0.49\linewidth]{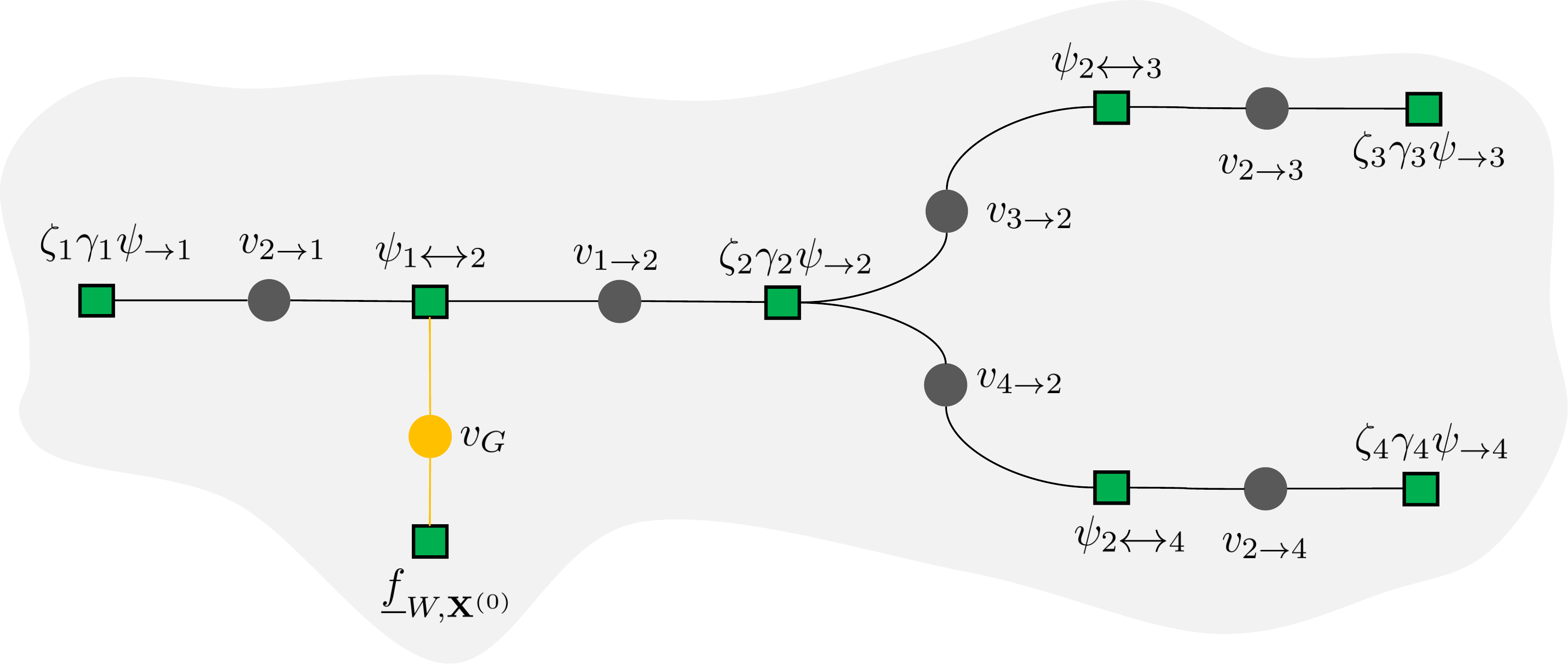}
\label{fig:stepd}}
\caption{Deriving Factor-Graph $G_{\Gamma}$ for the connected network of Figure \ref{fig:exampleg}.}
\label{fig:example}
\end{figure*}
\begin{figure*}[!t]
\centering
\includegraphics[width=.8\linewidth]{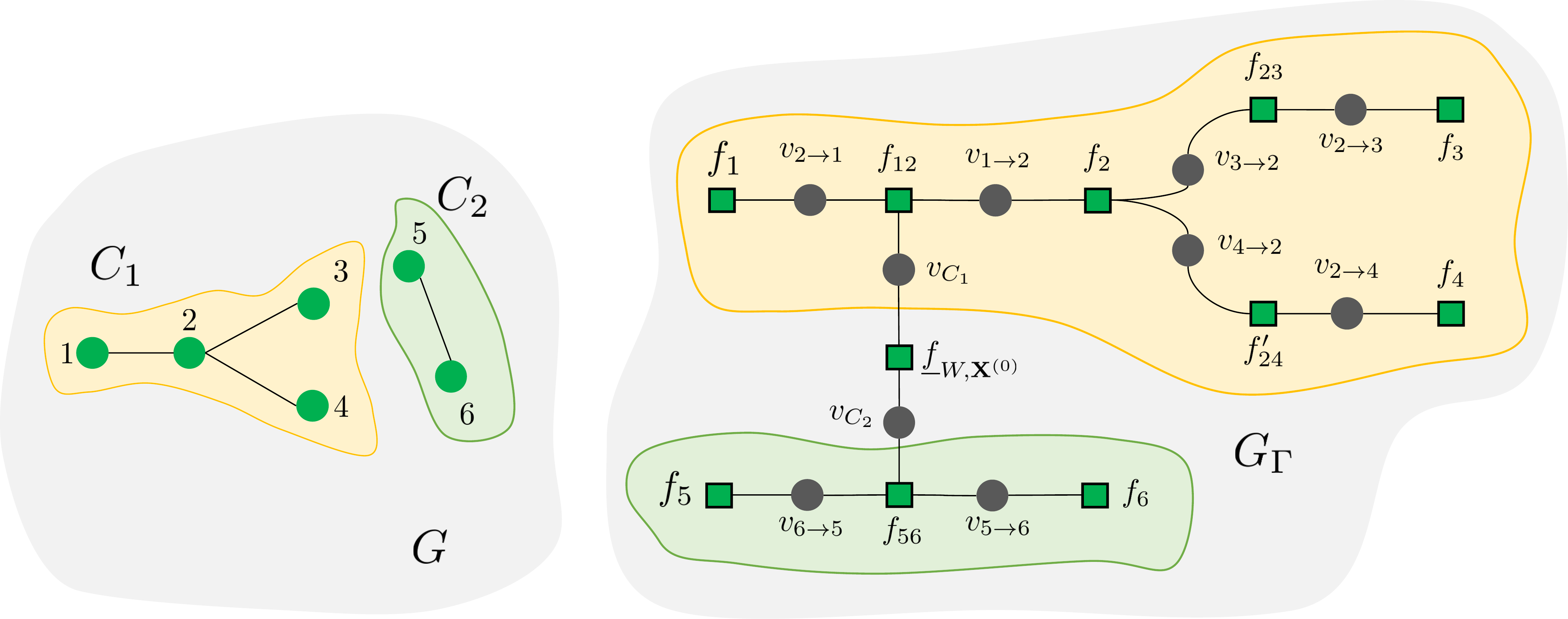}
\caption{Example of Generating Factor-Graph $G_{\Gamma}$ from $G$. There are two connected components in $G$ namely $C_1$ and $C_2$ which are formed from the node-sets $\{1, 2, 3, 4\}$ and $\{5, 6\}$ respectively. Here, we chose $\{1, 2\}$ from
$C_1$ ($\wh{e}^{(C1)} = \{1, 2\}$), and $\{5, 6\}$ from $C_2$ ($\wh{e}^{(C2)} = \{5, 6\}$).}
\label{fig:exampleggamma}
\end{figure*}
\subsection{Belief Propagation Algorithm and Pseudo-Marginals}
Having obtained a suitable factor-graph representation that resembles the underlying network $G$, we are now ready to derive the BP algorithm for approximate marginalization. Since our factor-graph has clusters of variables, let us introduce the function $r(\cdot)$ which maps a variable-node to the variables it represents and similarly, maps a factor-node to the union of variables represented by its neighboring variable nodes. For example in $G_{\Gamma}$, $r(v_{j\ra i}) = \{w, \mbf{x}^{(0)}, \mbf{t}_i, \mbf{s}_{j\ra i} \}$ and $r(f_i) = \{w, \mbf{x}^{(0)}, \mbf{t}_i, \{\mbf{s}_{k\ra i}\}_{k\in\partial{i}} \}$.

The BP algorithm is a message-passing scheme that iteratively computes an approximation to the true marginals by passing two kinds of messages on the factor-graph: at iteration-$k$,
the message from a factor node $f\in V_{\Gamma,2}$ to its neighboring variable node $v\in N(f)$\footnote{$N(\cdot)$ returns the neighborhood set of $\cdot$.} is denoted by $$\mbf{m}_{f\ra v}^{(k)}=\l\{ m_{f\ra v}^{(k)} \l(\mbf{z}_{r(v)}\r) \r\}_{\mbf{z}_{r(v)}},$$ and the opposite direction message from $v$ to $f$ is denoted by $$\mbf{n}_{v\ra f}^{(k)} = \l\{ n_{v\ra f}^{(k)} \l(\mbf{z}_{r(v)}\r)\r\}_{\mbf{z}_{r(v)}}.$$ 
Here, $\mbf{z}_{r(v)}$ denotes a joint realization of the variables in $r(v)$.\footnote{The message $\mbf{m}_{f\ra v}^{(k)}$ is a vector which has one entry for each possible realization of $\mbf{z}_{r(v)}$.} At iteration-0, all messages are initialized to some positive vector, say the all-ones vector, following which they are iteratively computed by the below update rule,
\begin{align*}
m_{f\ra v}^{(k+1)} \l(\mbf{z}_{r(v)}\r) &\propto \sum_{\mbf{z}_{r(f) \setminus r(v)} } f\l(\mbf{z}_{r(f)} \r) \prod_{v'\in N(f) \setminus v } n_{v'\ra f }^{(k)}\l(\mbf{z}_{r(v')}\r)\numberthis\label{eq:f2v_1}, \\
n_{v \ra f}^{(k)}\l(\mbf{z}_{r(v)}\r) &\propto \prod_{g \in N(v)\setminus f} m_{g\ra v}^{(k)}\l(\mbf{z}_{r(v)}\r).\numberthis\label{eq:v2f}
\end{align*}
Here, the $\propto$ sign is meant to indicate that each message $\mbf{m}_{f\ra v}^{(k+1)}$ (or $\mbf{n}_{v\ra f}^{(k)}$) must be normalized after all of its entries have been computed and the symbol $\mbf{z}_{r(f)\setminus r(v)}$ in \eqref{eq:f2v_1} is used to indicate that the summation is performed on all variables in $r(f)\setminus r(v)$. Intuitively speaking, when a factor node $f$ transmits a message to $v$, it gathers the latest messages received from all of its neighboring nodes except $v$, and then uses them to transmit information to $v$; same applies to the variable node $v$.\footnote{The detailed mechanics of \eqref{eq:f2v_1} and \eqref{eq:v2f} can be found in \cite{kschischang01}.}. 

To reduce the storage requirements of the message-passing scheme, one can get rid of all the variable-to-factor messages by substituting \eqref{eq:v2f} in \eqref{eq:f2v_1}. This gives
\begin{align*}
&m_{f\ra v}^{(k+1)} \l(\mbf{z}_{r(v)}\r) \\
&\hspace{5pt} \propto \hspace{-2pt} \sum_{\mbf{z}_{r(f) \setminus r(v)}} \hspace{-2pt} f\l(\mbf{z}_{r(f)}\r)\hspace{-2pt} \prod_{v'\in N(f)\setminus v }
\hspace{-2pt} \l( 
\prod_{g\in N(v')\setminus f } m_{g\ra v'}^{(k)} \l(\mbf{z}_{r(v')}\r) 
 \r)
.\label{eq:f2v_2}\numberthis	
\end{align*}
\textbf{Remark}: The update rule \eqref{eq:f2v_2} is usually known as ``lazy'' update rule. In practice, one uses the update rule which always utilizes the ``most recently updated messages,'' that is, some of $m_{g\ra v'}^{(k)}$ are $m_{g\ra v'}^{(k+1)}$ in the right hand side of \eqref{eq:f2v_2}. This form of message-updating is referred to as ``impatient'' and has faster convergence properties than the lazy version \cite{taga06}.

The factor-to-variable node messages serve to provide approximate marginal probabilities 
$$\l\{\mbf{q}^{(k)}_{r(v)} = \{q^{(k)}_{r(v)}(\mbf{z}_{r(v)})\}\r\}_{v \in V_{\Gamma,1} },$$ 
which are also iteratively updated as follows:
\begin{align*}
q_{r(v)}^{(k)}\l( \mbf{z}_{r(v)} \r) \ \propto \ 
\prod_{f \in N(v)} m^{(k)}_{f\ra v}\l( \mbf{z}_{r(v)}\r).\label{eq:qv}\numberthis
\end{align*}
\textbf{Remark}: In general, a joint distribution what has marginals given by \eqref{eq:qv} and which satisfies all the marginalization constraints of the complete distribution may not exist. For this reason, the marginals retrieved by \eqref{eq:qv} are called \emph{pseudo-marginals}. For more details, see \cite[Section V.A]{yedida05}.

Whenever $G$ is a forest, $G_{\Gamma}$ is a tree, and the above equations result in exact marginal probabilities -- in number of iterations that is at most the diameter\footnote{Diameter of a graph is defined as the maximum shortest distance (in terms of number of nodes traversed) between any pair of nodes.} of $G_{\Gamma}$, \cite{aji00-gdl, kschischang01}. Specifically, we will be able to extract the posterior probabilities, $\pr\l(\mbf{x}^{(0)}|\mcl{O}_n\r)$, $\pr\l(x_i^{(0)}|\mcl{O}_n\r)$, and $\pr\l(t_i^I|\mcl{O}_n\r)$ from the pseudo-marginals listed below. 
\begin{align}
\begin{split}\label{eq:posteriorbeliefs}
q_{w, \mbf{x}^{(0)}} \l( w, \mbf{x}^{(0)} \r) 
&\propto m_{f_{W, \mbf{X}^{(0)}} \ra v_{C_1} } \l( w, \mbf{x}^{(0)} \r) \\
&\hspace{20pt} \times m_{f_{\wh{e}^{(C_1)} } \ra v_{C_1} } \l( w, \mbf{x}^{(0)} \r), \\
q_{\mbf{x}^{(0)}} \l( \mbf{x}^{(0)} \r) 
&\propto 
\sum\limits_{w} q_{w, \mbf{x}^{(0)}}, \\
q_{{x}_i^{(0)}}\l({x}_i^{(0)}\r) &\propto \sum_{\mbf{x}^{(0)}\setminus x_i^{(0)} } q_{\mbf{x}^{(0)}}\l(\mbf{x}^{(0)}\r),\\
q_{{t}_i^I}(t_i^I) &\propto 
\sum_{\substack{w, \mbf{x}^{(0)}, t_i^J, \mbf{s}_{j\ra i}}}
m_{f_i\ra v_{j\ra i}} 
\l(w, \mbf{x}^{(0)}, \mbf{t}_i, \mbf{s}_{j\ra i} \r)
\\
&\hspace{20pt} \times 
m_{f_{ij} \ra v_{j\ra i}} 
\l(w, \mbf{x}^{(0)}, \mbf{t}_i, \mbf{s}_{j\ra i} \r).
\end{split}
\end{align}
In the last equation, $j$ is an arbitrarily chosen neighbor of node $i$ and $J$ is the (single) element of $\{ A, B\} \setminus \{I\} $.

\subsection{Message Update Rules}\label{sec:messageupdaterules_1}
In our factor-graph representation $G_{\Gamma}$, there are four kinds of messages (see \figurename \ref{fig:messages}). These are 
\begin{enumerate}
\item[a)]
Messages from (factor node) $f_i$ to (variable node) $v_{j\ra i}$. 

\item[b)]
Messages from $f_{ij}$ to $v_{j\ra i}$ and $v_{i\ra j}$. 

\item[c)]
Messages from $f_{\wh{e}^{(C)}}$ to $v_{C}$.
 
\item[d)]
Messages from $\udl{f}_{W, \mbf{X}^{(0)}}$ to $v_{C}$.
\end{enumerate}
These messages require an aggregate storage of $O\l(|E| |{\mcl{W}_1}| | \bm{\mcl{X}}^{(0)}||V|^2\r)$. Time-wise, one may note from \eqref{eq:f2v_2} that the complexity of computing $\mbf{m}_{f\ra v}^{(k)}$ is, in general, linear in the number of possible values of $\mbf{z}_{r(f)}$. Therefore, for $G_{\Gamma}$, the straight-forward computation of $\mbf{m}_{f_i\ra v_{j\ra i}}^{(k)}$ will involve $O\l(|{\mcl{W}_1}| |\bm{\mcl{X}}^{(0)}| |V|^2 2^{2|\partial{i}|}\r)$, a number exponential in $|\partial{i}|$. However, as mentioned in Section \ref{subsec:ftioptimization}, we can invoke distributive-law on the sum-product form of $\psi_{\ra i}$. This, as it turns out, will bring down the number of computations to $O\l(|{\mcl{W}_1}||\bm{\mcl{X}}^{(0)}| |V|^2 |\partial{i}|\r)$. For the derivation of the aforementioned message-update rules, please see Appendix \ref{app:messageupdaterules-1}.

Algorithm \ref{alg:sumproductbp} shows the pseudocode for the BP algorithm based on $G_{\Gamma}$, and its (worst-case) per-iteration-complexity, noting that BP is amenable to parallel execution, is $O\l( \frac{ |{\mcl{W}_1}| |\bm{\mcl{X}}^{(0)}| |E| |V|^4}{\text{\# of worker-nodes}}\r)$ (see the end of Appendix \ref{app:messageupdaterules-1}).
\begin{figure*}[!htb]
\centering
\subfloat
{\includegraphics[width=.9\linewidth]{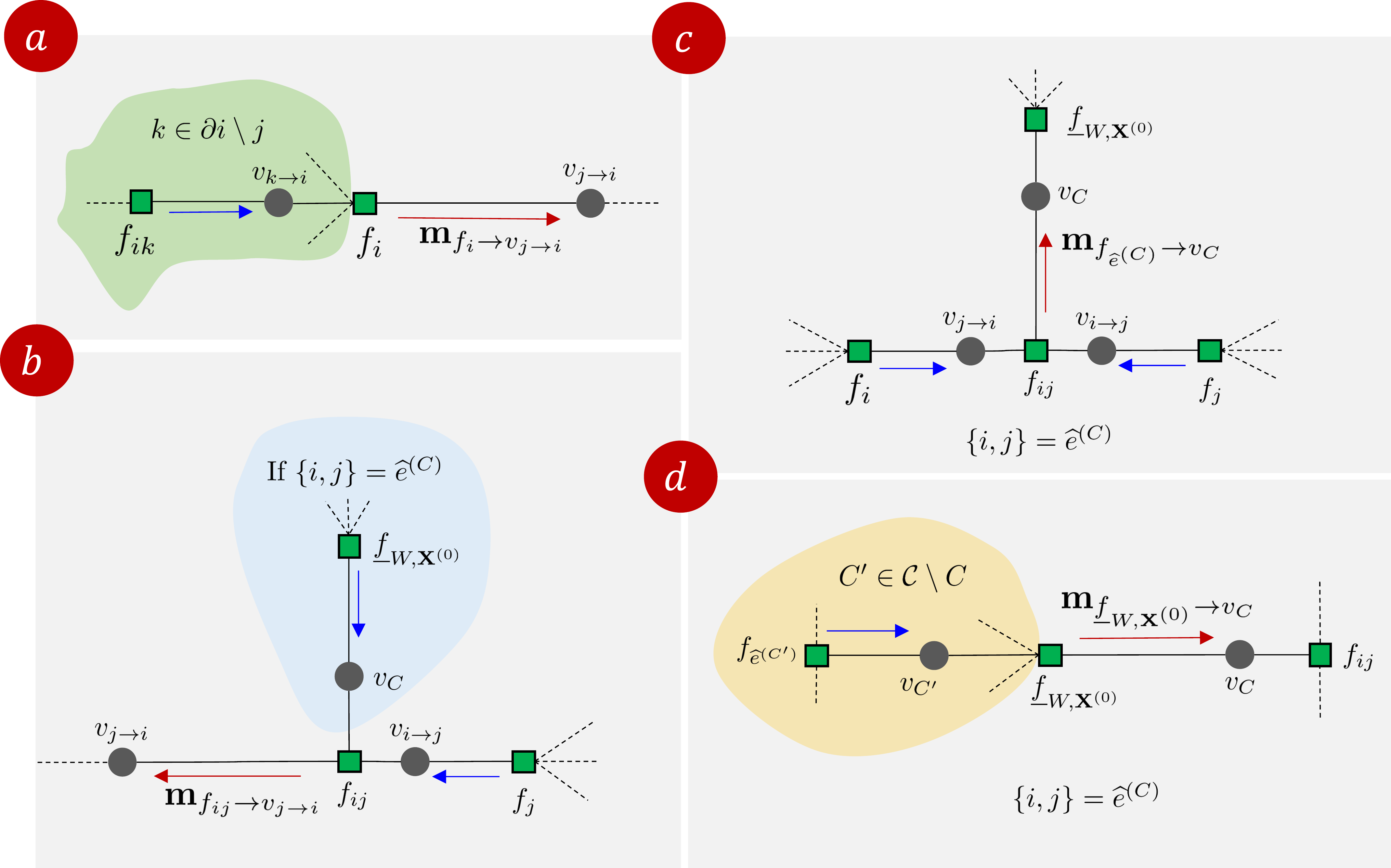}}
\caption{Types of Messages in $G_{\Gamma}$. In each subfigure, the red arrow indicates a message and the blue arrows indicate the messages used in transmitting it.}
\label{fig:messages}
\end{figure*}

{
\setlength{\algomargin}{1.5em}
\begin{algorithm2e}[!hbt]
\DontPrintSemicolon
\KwInput{$G$, $\bsl{\lambda}$, $f_W$, $f_{\mbf{X}^{(0)}}$, $\l\{  f_{\wt{X}_i | X_i }   \r\}_{i\in V}$, $\mcl{O}_n$.}
\Parameter{$max\_iters$.}
\KwOutput{$\wh{\mbf{x}}^{(0)}$, $\{ \wh{x_i}^{(0)} \}_{i\in V}$, $\{\wh{t}_i^A, \wh{t}_i^B \}_{i\in V}$.
}

\textbf{Factor-Graph}: Derive $G_{\Gamma}$ from $G$. See \eqref{eq:gbiggamma}.

Set $t \la 0$.

\textbf{Initialization}: Initialize each factor-to-variable node message to all-ones vector.

\Repeat{\nonl convergence OR $t>max\_iters$.}{
Update all factor-to-variable node messages. See Appendix \ref{app:messageupdaterules-1}.

$t\la t+1$.}

Set
\begin{align*}
\wh{\mbf{x}}^{(0)} &\in \argmax_{\mbf{x}^{(0)}\in\bm{\mcl{X}^{(0)}}} q_{\mbf{x}^{(0)}} \l(\mbf{x}^{(0)}\r),\\
\wh{x_i}^{(0)} &\in \argmax_{x_i^{(0)} \in \mcl{X}} q_{x_i^{(0)}} \l(x_i^{(0)}\r),\\
\wh{t}_i^I &\in \argmax_{t_i^I \in \mcl{T} } q_{t_i^I}(t_i^I),
\end{align*}
where $q_{\mbf{x}^{(0)}}$, $q_{x_i^{(0)}}$, and $q_{t_i^I}$ are given by \eqref{eq:posteriorbeliefs}.
\caption{BP on $G_{\Gamma}$}\label{alg:sumproductbp}
\end{algorithm2e}
}
\section{Inference Via Marginalization For Large Problem Instances}\label{sec:inference_2}
The storage and (worst-case) per-iteration-time complexities of Algorithm \ref{alg:sumproductbp} are $O\l( |{\mcl{W}_1}| |\bm{\mcl{X}}^{(0)} |E| ||V|^2\r)$ and \protect\linebreak $O\l(\frac{ |{\mcl{W}_1}| |\bm{\mcl{X}}^{(0)}| |E| |V|^4}{\text{\# of worker-nodes}}\r)$ respectively. In general, $\bm{\mcl{X}}^{(0)}$ can have size that is on the order of $4^{|V|}$. This renders Algorithm \ref{alg:sumproductbp} infeasible whenever $\bm{\mcl{X}}^{(0)}$ is large or when the network itself is large (large $|V|$). In order to make BP feasible for such large problem instances, one must make a trade-off between complexity and accuracy. In the next subsections, we present modifications to Algorithm \ref{alg:sumproductbp} that when applied together will yield a scalable BP algorithm for large graphs while still maintaining MAP estimation guarantee for one or more of the inference problems (in acyclic graphs).

\subsection{Getting Rid of Clustered Initial States}\label{sec:getrid_clusteredinitialstate}
In Section \ref{sec:inference_1}, the factor-graph $G_{\Gamma}$ had the (clustered) variable-node $\mbf{x}^{(0)}$ which we stretched throughout $G_{\Gamma}$. To remove the dependence of 
complexities on $|\bm{\mcl{X}}^{(0)}|$, we must get rid of $\mbf{x}^{(0)}$. One case in which this is possible is when we forsake exactly solving \eqref{eq:argmaxpx0} (on trees) and assume that the initial-states of all nodes are mutually-independent, i.e., 
\begin{assumption}\label{assump:factorizedfx0}
The distribution of the initial-state of the network has the factorized form given by
\begin{align*}
f_{\mbf{X}^{(0)}} \l(\mbf{x}^{(0)}\r) = 
\prod_{i\in V} f_{X_i^{(0)}} \l(x_i^{(0)}\r).\numberthis\label{eq:fx0factorized}
\end{align*}
\end{assumption}

\subsection{Reducing Large Local-Domains}\label{sec:reducingsupports}
To remove the dependence of complexities on $|{\mcl{W}_1}|$ and $|V|^4$, we invoke the \emph{small-world phenomenon} \cite{kempe15}. Specifically, we assume that for each process, the finite time to reach a given node is at most $T_{max}$ where $T_{max}$ is large enough to accommodate most sample paths of $\{\mbf{X}^{(t)}: t\ge 0\} $ (under $\pr
$) but small enough so that the complexities are linear or sub-linear in $|V|$. Mathematically, this is equivalent to altering the probability measure $\pr
$ to a new one where events inconsistent with the small-world phenomenon are assigned probability 0.
\begin{assumption}\label{assump:tmax}
    Let $\mcl{T}' \defeq \l\{ 0, 1, \dots, T_{max}, \infty \r\}$. The diffusion-model of Section \ref{sec:problem} is such that the basic random-variables $\mbf{X}^{(0)}, W, \mbf{D} $ are resampled if any one of the infection-times is not in $\mcl{T}'$. 
\end{assumption}
\noindent Let $\pr'$ be the probability measure on $\l( \Omega, \mcl{F} \r) $ for the diffusion-model of Assumption \ref{assump:tmax}. It is clear that the event $\cap_{i\in V} \{ T_i^A, T_i^B \in \mcl{T}' \}$ occurs $\pr'$-almost-surely. Therefore, we can restrict the support of each infection-time to $ \mcl{T}'$. Let $\mcl{W}_2 \defeq [w_{min}, T_{max} \wedge w_{max}] \cup \{\dagger\}$ and define
\begin{align*}
f'_W (w) &\defeq 
\begin{cases}
f_W(w), & w \in \mcl{W}_2 \setminus \{\dagger\},\\
\sum\limits_{\substack{w = T_{max}+1 }}^{w_{max}} f_W(w) & w = \dagger. \numberthis\label{eq:fw2}
\end{cases}
\end{align*}
and
\begin{align*}
\gamma'_i \l( w, \mbf{t}_i; \wt{x}_i \r) &\defeq 
\begin{cases}
\gamma_i \l( w, \mbf{t}_i; \wt{x}_i \r), & w \in \mcl{W}_2 \setminus \{\dagger\},\\
\gamma_i \l( T_{max} + 1, \mbf{t}_i; \wt{x}_i \r) & w = \dagger.\numberthis\label{eq:gammai2}
\end{cases}
\end{align*}
Then, combining Assumptions \ref{assump:factorizedfx0} and \ref{assump:tmax}, we can rewrite \eqref{eq:px0_3} as
\begin{align*}
    &\pr' \l( \mbf{x}^{(0)} | \mcl{O}_n \r) 
    \labelrel{\propto}{eqr:pxi0:1} 
    \sum_{  w, \mbf{t}, \mbf{s}} f_W \prod_{i\in V} f_{X_i^{(0)}} \zeta_i \gamma_i \psi_{\ra i} \prod_{\{i, j\}\in E } \psi_{i \xlrarrow{} j}\\
    &= \sum\limits_{\substack{ \mbf{t}, \mbf{s}} } \l( \sum\limits_{\substack{w \in [w_{min}, T_{max}]}} + \sum\limits_{\substack{w \in [T_{max} + 1, w_{max}]}}\r) \\
    &\hspace{20pt} \times f_W(w) 
    \prod_{i\in V} f_{X_i^{(0)}} \zeta_i \gamma_i \l(w, \mbf{t}_i; \wt{x}_i \r) \psi_{\ra i} \prod_{\{i, j\}\in E } \psi_{i \xlrarrow{} j}\\
    &\labelrel{=}{eqr:pxi0:2}\sum_{w, \mbf{t}, \mbf{s}} f'_W \prod_{i\in V} \zeta_i \gamma'_i \psi_{\ra i} \prod_{\{i, j\}\in E } \psi_{i \xlrarrow{} j}.\numberthis\label{eq:p'xi0}
\end{align*}
Here, \eqref{eqr:pxi0:1} follows from \eqref{eq:px0_3} and \eqref{eqr:pxi0:2} uses that fact that for any fixed $\wt{x}_i$ and $\mbf{t}_i \in (\mcl{T}')^2$, 
\begin{align*}
	\gamma_i\l(w, \mbf{t}_i; \wt{x}_i \r) &= \gamma_i\l(T_{max}+1, \mbf{t}_i; \wt{x}_i \r) \\ 
	&\hspace{-15pt} \text{for all } w \in w \in \{T_{max} + 1, T_{max} + 2, \dots, w_{max} \}.
\end{align*}

\subsection{Factor-Graph Representation}
We are interested in a suitable factor-graph representation of the summand in \eqref{eq:p'xi0}, i.e.,
\begin{align*}
\Gamma' \defeq f'_W\prod_{i\in V}f_{X_i^{(0)}} \zeta_i\gamma'_i\psi_{\ra i} \prod_{\{i,j\}\in E } \psi_{i\xlrarrow{} j}.\numberthis\label{eq:biggamma_2}
\end{align*}
Performing the factor-graph transformations listed in Section \ref{subsec:fg} except for the clustering of $x_i^{(0)}$'s, we obtain the below equivalent (inflated) version of $\Gamma'$,
\begin{align*}
&f'_{W}
\prod_{i\in V} 
f_i' \l( w, x_i^{(0)},\mbf{t}_i, \{\mbf{s}_{k\ra i} \}_{k\in\partial{i}} \r) 
\\
&\hspace{30pt} \times 
\prod_{\{i,j\}\in E } f_{ij}' \l( w, \mbf{t}_i,\mbf{s}_{i \ra j}, \mbf{t}_j,\mbf{s}_{j\ra i} \r),
\end{align*}
where,
\begin{align*}
f_i' \l( \cdot \r) \defeq  f_{X_i^{(0)}}
\zeta_i \gamma'_i \psi_{\ra i} 
\quad \text{and} \quad
f_{ij}' \l( \cdot \r) \defeq \psi_{i \xlrarrow{} j}.
\numberthis\label{eq:fiandfij_2}
\end{align*}
\begin{figure*}[!t]
\centering
\includegraphics[width=.8\linewidth]{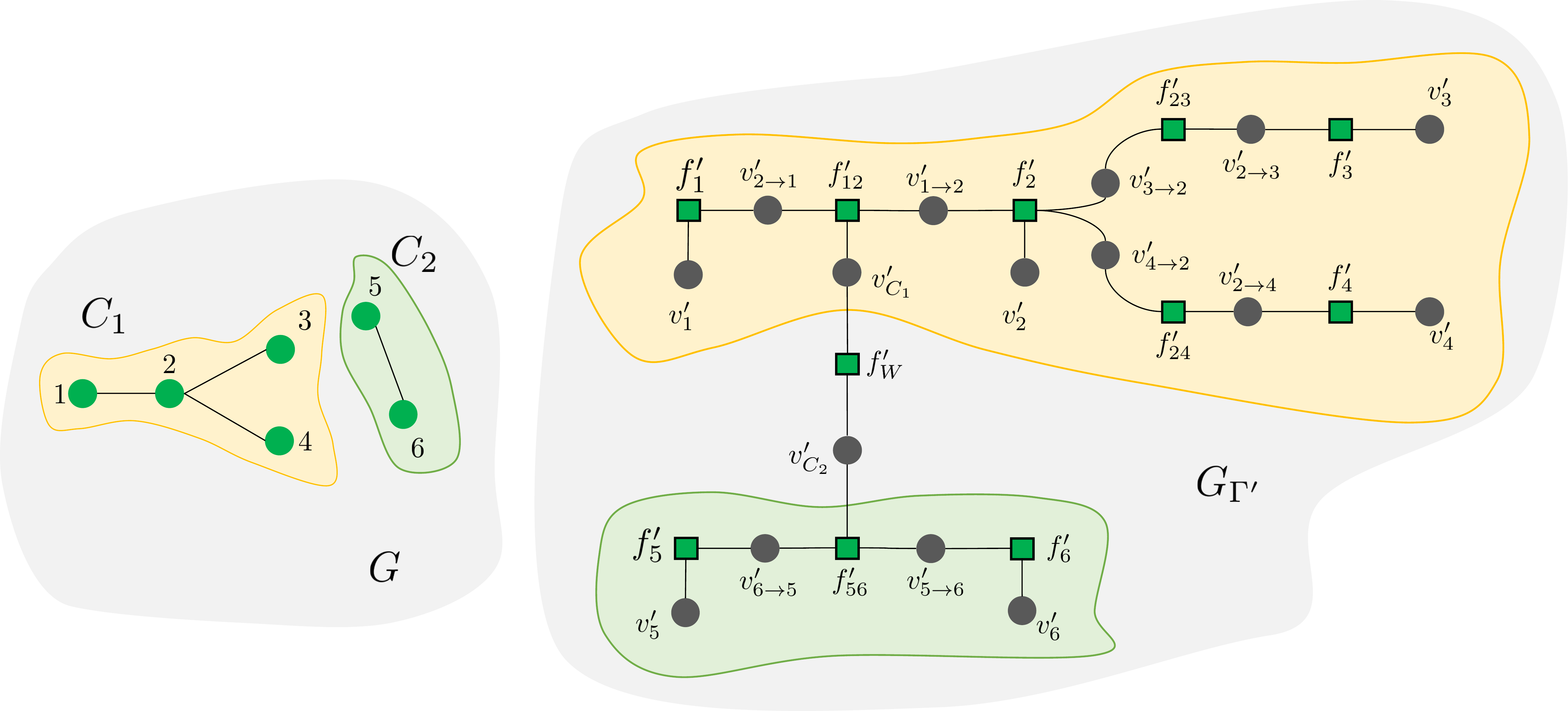}
\caption{Example of Generating Factor-Graph $G_{\Gamma'}$ from $G$. Here, like before, $\wh{e}^{(C_1)} = \{1,2\}$ and $\wh{e}^{(C_2)}=\{5, 6\}$.}
\label{fig:exampleggamma2}
\end{figure*}
Denoting the corresponding factor-graph by $G_{\Gamma'}$, we have,
\begin{align}\label{eq:gbiggamma_2}
\begin{split}
G_{\Gamma'} &\defeq\l(V_{\Gamma',1}\cup V_{\Gamma',2}, E_{\Gamma'}\r), \\
V_{\Gamma',1} &\defeq \l\{ v'_C:C\in\mcl{C} \r\}
\cup  \l\{ v'_i:i\in V \r\}
\cup \l\{ v_{j \ra i}': (j, i)\in \vv{E} \r\},\\
V_{\Gamma',2} &\defeq \l\{ f'_W \r\} \cup \l\{f_i':i\in V\r\} \cup \l\{ f'_{ij}: \{i,j\}\in E \r\}, \\
E_{\Gamma'} &\defeq \l\{
\l\{ \l\{v'_C ,f'_W\r\} \r\}_{C \in \mcl{C}}, 
\l\{ \l\{v'_C, f'_{\wh{e}^{(C)}}\r\}\r\}_{C \in \mcl{C}},  \r.\\
&\hspace{20pt} \l. \l\{ \l\{ f'_{ij}, v_{j\ra i}' \r\} , \l\{ f'_{ij}, v_{i\ra j}'\r\} \r\}_{\{i,j\}\in E}, \r.\\
&\hspace{20pt}\l. \l\{ \l\{v'_i, f_i'\r\}\r\}_{i\in V}, \l\{ \l\{ f_i', v_{j \ra i}' \r\} \r\}_{(j,i)\in\vv{E}} \r\}.
\end{split}
\end{align}
Here, $v'_C$ corresponds to a copy of $\{w\}$ for connected-component $C$, $v'_i$ corresponds to $\{x_i^{(0)}\}$, and each $v'_{j\ra i}$ corresponds to the variables $\{ w, \mbf{t}_i, \mbf{s}_{j\ra i}\}$. Also, like before, $\wh{e}^{(C)}$ is chosen arbitrarily from $E_C$. Figure \ref{fig:exampleggamma2} shows an example of generating factor-graph $G_{\Gamma'}$ for a given network $G$.

\subsection{Pseudo-Marginals}
Once again, when $G$ is a forest, $G_{\Gamma'}$ is also a tree, and one can recover the exact posterior probabilities, $\pr'(x_i^{(0)}|\mcl{O}_n)$ and $\pr'(t_i^I|\mcl{O}_n)$ from the pseudo-marginals listed below.
\begin{align}
\begin{split}\label{eq:posteriorbeliefs_2}
q'_{x_i^{(0)}} (x_i^{(0)}) &\propto m_{f'_i\ra v'_i} \l( x_i^{(0)} \r), \\
q'_{{t}_i^I}(t_i^I) &\propto \hspace{-5pt} \sum_{w, t_i^J, \mbf{s}_{j\ra i}} m_{f'_i\ra v'_{j\ra i}} 
\l(w, \mbf{t}_i, \mbf{s}_{j\ra i} \r)\\
&\hspace{40pt} \times 
m_{f'_{ij} \ra v'_{j\ra i}} 
\l(w, \mbf{t}_i, \mbf{s}_{j\ra i} \r).
\end{split}
\end{align}
In the last equation, $j$ is an arbitrarily chosen neighbor of node $i$ and $J$ denotes the (single) element of $\{ A, B\} \setminus \{I\} $.

As concerns the initial-state of the network, we no longer have a variable-node for $\mbf{x}^{(0)}$. Therefore, we must use some form of approximation to infer the most likely initial-state of the network. One well-known choice is the mean-field approximation,
\begin{align*}
\wh{\mbf{x}}^{(0)} \in \times_{i\in V} \argmax_{ x_i^{(0)}} q'_{x_i^{(0)}} \l( x_i^{(0)} \r).\numberthis\label{eq:x0meanfield2}
\end{align*}

\subsection{Message Update Rules}\label{sec:inference_2:updaterules}
\begin{figure*}[!t]
\centering
\subfloat
{\includegraphics[width=.9\linewidth]{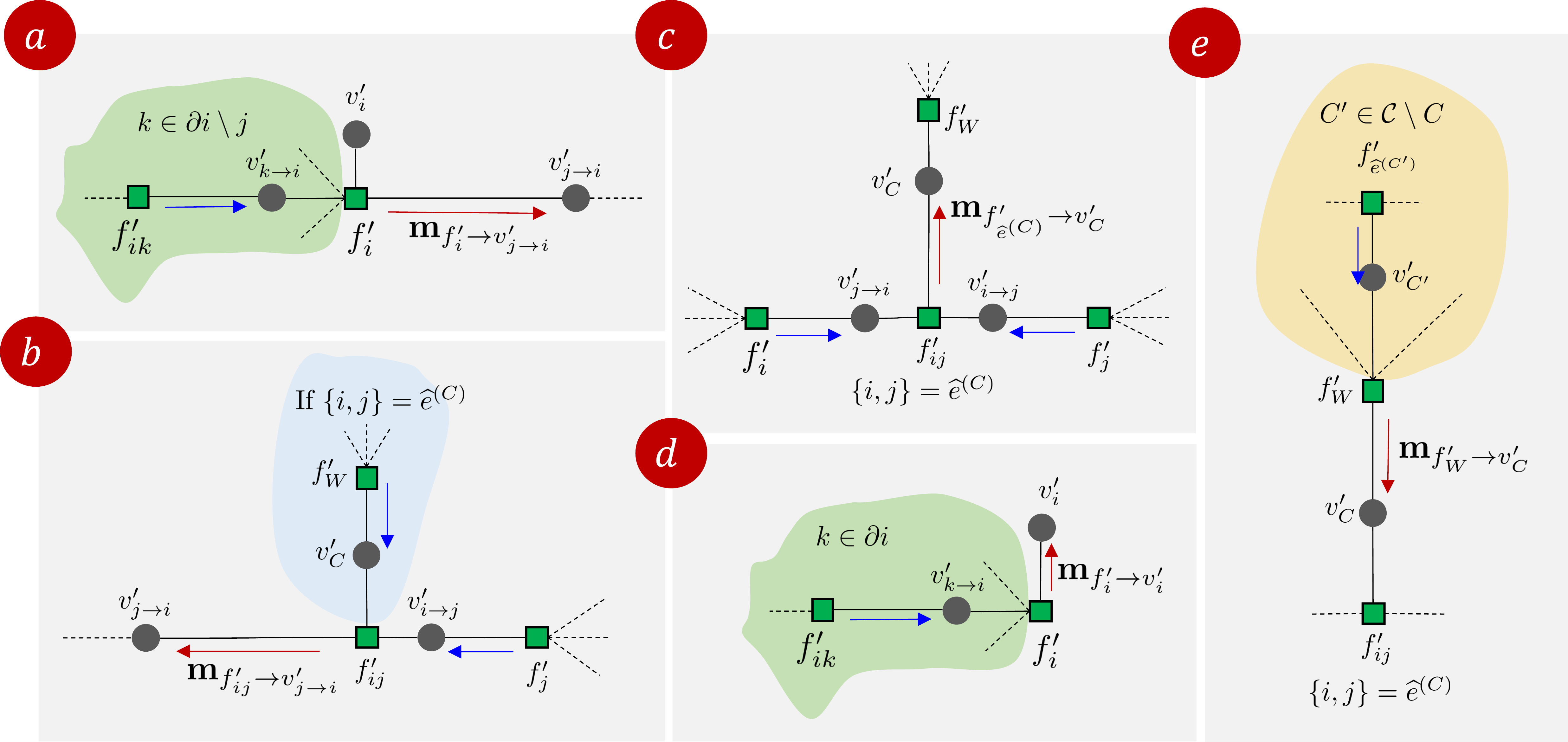}}
\caption{Types of Messages in $G_{\Gamma'}$. In each subfigure, the red arrow indicates a message and the blue arrows indicate the messages used in transmitting it.}
\label{fig:messages2}
\end{figure*}
In $G_{\Gamma^{\prime}}$, there are five kinds of messages (see \figurename \ref{fig:messages2}).
\begin{enumerate}
    \item[a)] Messages from (factor-node) $f'_i$ to (variable-node) $v'_{j\ra i}$.
    \item[b)] Messages from $f'_{ij}$ to $v'_{j\ra i}$ and $v'_{i\ra j}$.
    \item[c)] Messages from $f'_{\wh{e}^{(C)}\ra v'_C } $.
    \item[d)] Messages from $f'_i$ to $v'_i$.
    \item[e)] Messages from $f'_{w}$ to $v'_C$.
\end{enumerate}
These messages require a memory of $O\l(|E|T_{max}^3\r)$ and their efficient update rules (obtained via distributive-law) are derived in Appendix \ref{app:messageupdaterules2}. Algorithm \ref{alg:sumproductbp2} shows the pseudocode for the BP algorithm based on $\Gamma'$. The (worst-case) per-iteration-time complexity of Algorithm \ref{alg:sumproductbp2} is $O\l(\frac{|E| T_{max}^3 \l( T_{max}^2  \vee \max_{i\in V} |\partial{i}|  \r)  }{\text{\# of worker-nodes}} \r)$ (see the end of Appendix \ref{app:messageupdaterules2}).
{
	\setlength{\algomargin}{1.5em}
	\begin{algorithm2e}[!hbt]
		\DontPrintSemicolon
		\KwInput{$G$, $\bsl{\lambda}$, $f_W$, $\l\{ f_{X_i^{(0)}}, f_{\wt{X}_i | X_i }   \r\}_{i\in V}$, $\mcl{O}_n$.}
		\Parameter{$T_{max}$, $max\_iters$.}
		\KwOutput{$\l\{ \wh{x_i}^{(0)}\r\}_{i\in V} $, $\wh{w}$, $\l\{\wh{t}_i^A, \wh{t}_i^B \r\}_{i\in V}$, $\wh{\mbf{x}}^{(0)}$.
		}
		
		\textbf{Factor-Graph}: Derive $G_{\Gamma'}$ from $G$. See \eqref{eq:gbiggamma_2}.
		
		Set $t \la 0$.
		
		\textbf{Initialization}: Initialize each factor-to-variable node message to all-ones vector.
		
		\Repeat{\nonl convergence OR $t>max\_iters$.}{
			Update all factor-to-variable node messages. See Appendix \ref{app:messageupdaterules2}.
			
			$t\la t+1$.}
		
		Set
		\begin{align*}
			\wh{x_i}^{(0)} &\in \argmax_{x_i^{(0)} \in \mcl{X} } q'_{x_i^{(0)}} \l(x_i^{(0)}\r),\\
			\wh{t}_i^I &\in \argmax_{t_i^I\in \mcl{T}'} q'_{t_i^I} \l(t_i^I\r),\\
			\wh{\mbf{x}}^{(0)} &= \times_{i\in V} \l\{ \wh{x}_i^{(0)}\r\},
		\end{align*}
		where $q'_{x_i^{(0)}}$ and $q'_{t_i^I}$ are given by \eqref{eq:posteriorbeliefs_2}.
		\caption{BP on $G_{\Gamma'}$.}\label{alg:sumproductbp2}
	\end{algorithm2e}
}

\noindent \textbf{Remark}: The per-iteration-time of Algorithm-\ref{alg:sumproductbp2} is dominated by either messages $\mbf{m}_{f'_i \ra v'_{j\ra i}} $ or $\mbf{m}_{f'_{ij} \ra v'_{j\ra i}} $ depending on whether the maximum node-degree of the graph $G$ is larger or smaller than $T_{max}$ (see Appendix \ref{app:messageupdaterules2}). In many networks of practical interest, the maximum node degree can be quite large, thus making each iteration of Algorithm-\ref{alg:sumproductbp2} computationally expensive. One way to avoid this issue is by randomly sampling neighbors of high-degree nodes during each iteration. Specifically, if a node $i\in V$ has more than $T_{max}$ neighbors, then, during each iteration, the message $\mbf{m}_{f'_i \ra v'_{j\ra i}} $ is computed for only $T_{max}$ of its neighbors which are chosen randomly. This is similar in spirit to stochastic belief propagation, \cite{noorshams-2011, haddadpour-2016}. Ignoring the sampling computation costs, the resulting complexity of Algorithm-\ref{alg:sumproductbp2} is then $O\l(\frac{|E| T_{max}^5 }{\text{\# of worker-nodes}} \r)$.
\section{Ensuring Convergence of Message-Passing}\label{sec:convergence}
As mentioned earlier, loopy BP converges to exact marginals when the factor-graph it operates on is a tree. When the factor-graph contains cycles, its convergence and quality of approximation are less well-understood.\footnote{Despite that, loopy BP has gained extensive popularity in the AI community -- due at least to its state-of-the-art results in error-correcting codes, computer-vision, and image-processing applications. The factor-graphs in these settings also contain numerous cycles.} However, it is well-known that when loopy BP fails to converge, it is due to the presence of cycles in the factor-graph and strongly correlated variables in the global-function. Together these two give rise to strong message-passing feedback loops which cause loopy BP to fail. To ward off the effect of these feedback loops, one can derive an easy-to-implement\footnote{There are variants of loopy BP with some convergence guarantees such as TRW \cite{wainwright03a} and norm-product BP \cite{hazan10}. These algorithms are inherently complex and not amenable to inference problems on large networks.} variant of loopy BP where, during the update of any message, the incoming messages are appropriately discounted, say by a factor $\eta \in [0,1]$.\footnote{In general, one may discount each message by a separate discount factor.  For simplicity, we have avoided that.}. This changes the update rule \eqref{eq:f2v_2} to
\begin{align*}
&m_{f\ra v}^{(k+1)} \l(\mbf{z}_{r(v)}\r) \\
&\hspace{5pt} \propto \hspace{-4pt} \sum_{\mbf{z}_{r(f)\setminus r(v)}} \hspace{-4pt} f\l(\mbf{z}_{r(f)}\r)
\hspace{-4pt} \prod_{v'\in N(f)\setminus v }
\prod_{g\in N(v')\setminus f } \l( m_{g\ra v'}^{(k)} \l(\mbf{z}_{r(v')}\r) \r)^{\eta}
.\label{eq:f2v_2_discounted}\numberthis	
\end{align*}
We refer to the above scheme as \emph{discounted BP} and for a given choice of $\eta$ denote it by BP($\eta$). 

When all the local-functions are strictly positive, there exists a threshold $\eta^* \in (0, 1]$ below which this variant is guaranteed to converge (see \cite{anna21} for details). However, the local-functions in $G_{\Gamma'}$ are far from being strictly positive, i.e., $f'_i$ and $f'_{ij}$ take zero values on many elements of their local-domains. A sufficient condition for convergence of BP on standard factor-graphs that have non-negative local-functions is established in \cite{mooij07}. This condition is based on a \emph{factor-strength matrix} which, for an inflated factor-graph, can be defined as follows:
\begin{dfn}[Factor-Strength Matrix]\label{dfn:factor_strength_matrix}
	Let $G_{\Gamma} = ( V_{\Gamma_1}, V_{\Gamma_2}, \vv{E}_{\Gamma} ) $ denote a factor-graph where $V_{\Gamma_1}$ is the set of variable-nodes, $V_{\Gamma_2}$ is the set of factor-nodes, and $\vv{E}_{\Gamma}$ is the set of directed edges pointing from nodes in $V_{\Gamma_2}$ to nodes in $V_{\Gamma_1}$. The factor-strength matrix of $G_{\Gamma}$ denoted by $\mbf{M} = \mbf{M}\l( G_{\Gamma} \r)$ is a square matrix 
	with rows and columns indexed by the edges in $\vv{E}_{\Gamma}$ and its typical entry $M_{f\ra v, g\ra u}$, corresponding to the edges $\l(f,v\r)$ and $\l(g,u\r)$, is given by
	\begin{align*}
		&M_{f \ra v, g \ra u} \\
		&\hspace{10pt} \defeq \11\l[ u\in N(f)\setminus v, g \in N(u)\setminus f \r] M\l(f, v, u \r),\\
		&M\l(f, v, u \r) \\
		&\hspace{10pt} \defeq 
		\max\limits_{ \mbf{z}_{r(v)} \ne \mbf{z}'_{r(v)} } \hspace{10pt}
		\max\limits_{ \substack{ 
				\mbf{z}_{ r(u) \setminus r(v)} \\
				\ne \mbf{z}'_{r(u)\setminus r(v) }}} \hspace{10pt} 
		\max\limits_{\substack{ 
				\mbf{z}_{ r(f) \setminus r({u,v})}, \\ \mbf{z}_{r(f) \setminus r({u, v})}}} \\
			&\hspace{20pt} \frac{M_1\l( \mbf{z}_{r(f)}, \mbf{z}'_{r(f)} ; f, v, u \r) - M_2\l( \mbf{z}_{r(f)}, \mbf{z}'_{r(f)} ; f, v, u \r) }{M_1\l( \mbf{z}_{r(f)}, \mbf{z}'_{r(f)} ; f, v, u \r) + M_2\l( \mbf{z}_{r(f)}, \mbf{z}'_{r(f)} ; f, v, u \r)},\\
		&M_1\l( \mbf{z}_{r(f)}, \mbf{z}'_{r(f)} ; f, v, u \r) \\
		&\hspace{10pt} \defeq \sqrt{f\l( \mbf{z}_{r(v)}, \mbf{z}_{r(u)\setminus r(v)}, \mbf{z}_{r(f) \setminus r({u, v})} \r)} \\
		&\hspace{30pt} \times \sqrt{f\l( \mbf{z}'_{r(v)}, \mbf{z}'_{r(u)\setminus r(v)}, \mbf{z}'_{r(f) \setminus r({u, v})} \r)},\\
		&M_2\l( \mbf{z}_{r(f)}, \mbf{z}'_{r(f)} ; f, v, u \r) \\
		&\hspace{10pt} \defeq \sqrt{f\l( \mbf{z}'_{r(v)}, \mbf{z}_{r(u)\setminus r(v)}, \mbf{z}_{r(f) \setminus r({u, v})} \r) } \\
		&\hspace{30pt} \times \sqrt{f\l( \mbf{z}_{r(v)}, \mbf{z}'_{r(u)\setminus r(v)}, \mbf{z}'_{r(f) \setminus r({u, v})} \r)}.\numberthis\label{eq:mfvu}
	\end{align*}
	Here, $r(u,v)\defeq r(u)\cup r(v)$ and the quantity $M\l( f; v, u\r)$ is referred to as the strength of factor-node $f$ with respect to variable-nodes $u$ and $v$. Note that $M\l( f; v, u\r)$  is symmetric, i.e., $M\l( f; v, u\r) = M\l( f; u, v\r)$. \figurename \ref{fig:factor_strength} illustrates the situation when a specific entry in $\mbf{M}$ is (possibly) non-zero.
\end{dfn}
\noindent \textbf{Remark}: The factor-strength matrix introduced in \cite{mooij07} is applicable to standard representation of factor-graphs (in which each variable-node represents a single distinct variable). The above definition is a generalization of it to inflated factor graphs where a variable-node may represent a cluster of variables.
\begin{figure}[!t]
	\centering
	\includegraphics[width=.8\linewidth]{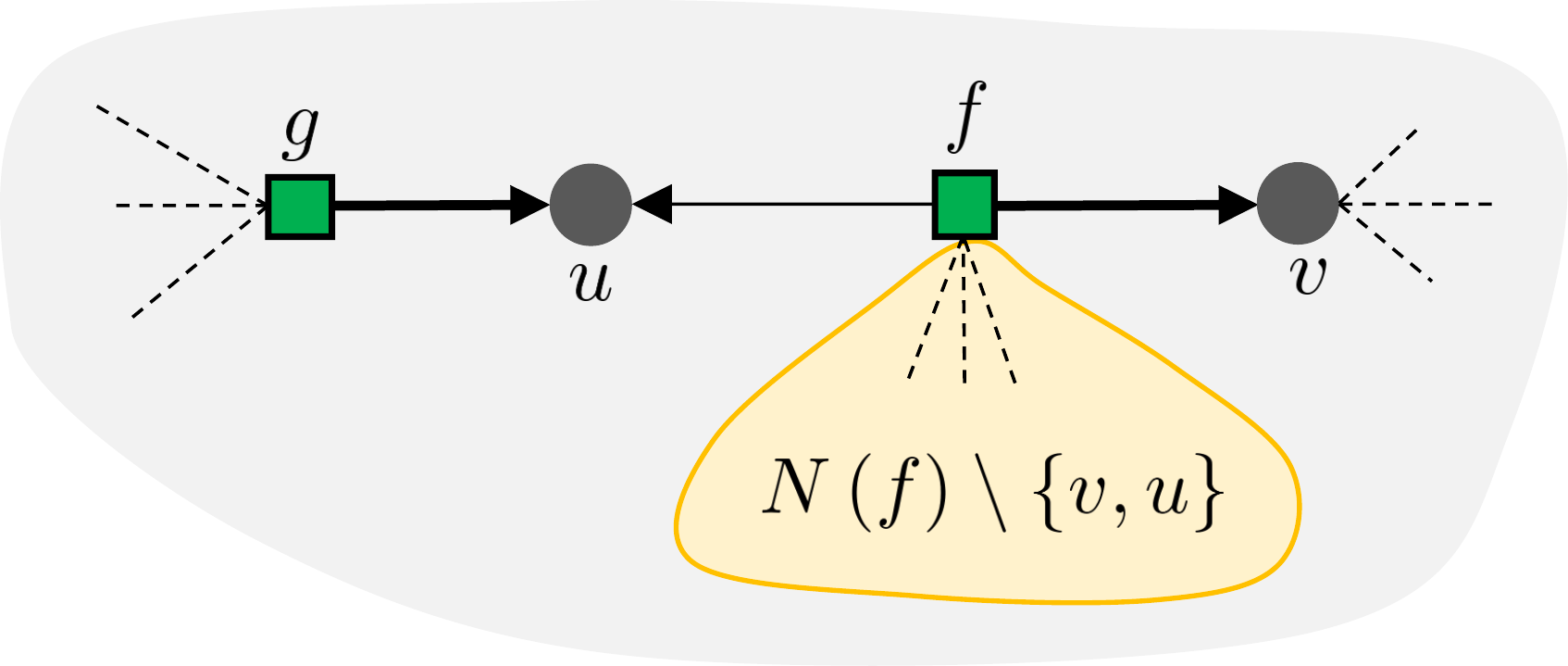}
	\caption{Part of the factor-graph relevant in expression . Here, $u,v\in V_{\Gamma_1}$, $f,g\in V_{\Gamma_2}$, with $u\ne v$ and $g \in N(u) \setminus f$. }
	\label{fig:factor_strength}
\end{figure}

The following proposition is an extension of \cite{mooij07}[Theorem 5] for BP($\eta$) on inflated factor-graphs.
\begin{prop}\label{prop:factor_strength}
	Let $G_{\Gamma} = ( V_{\Gamma_1} \bigsqcup V_{\Gamma_2}, \vv{E}_{\Gamma} ) $ denote a factor-graph where $V_{\Gamma_1}$ is the set of variable-nodes, $V_{\Gamma_2}$ is the set of factor-nodes, and $\vv{E}_{\Gamma}$ is the set of directed edges pointing from nodes in $V_{\Gamma_2}$ to nodes in $V_{\Gamma_1}$. Suppose that the local-function corresponding to each factor-node $f$ is non-negative and let $\mbf{M}$ be the factor-strength matrix (see Definition \ref{dfn:factor_strength_matrix}) of $G_{\Gamma}$. If the spectral-radius of $\mbf{M}$ is less than $\frac{1}{\eta}$, then BP($\eta$) on $G_{\Gamma}$ converges to a unique fixed point irrespective of the initial (positive) messages.
\end{prop}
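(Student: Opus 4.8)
The plan is to realize BP($\eta$) as a fixed-point iteration of a single map $F$ on the (normalized) message space and to show that, under the stated spectral-radius condition, $F$ is a contraction in a suitable metric; the conclusion then follows from the Banach fixed-point theorem. This mirrors the strategy of \cite{mooij07}[Theorem 5], and the two new ingredients I must supply are \emph{(i)} the bookkeeping needed when variable-nodes carry \emph{clusters} of variables (the inflated factor-graph), and \emph{(ii)} the effect of the discount exponent $\eta$ in \eqref{eq:f2v_2_discounted}.

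First I would equip the space of positive messages, taken modulo positive scaling (since messages are only defined up to normalization), with the dynamic-range pseudometric
\[
 d(m, m') \defeq \log \max_{\mbf{z}, \mbf{z}'} \frac{ m(\mbf{z})\, m'(\mbf{z}') }{ m(\mbf{z}')\, m'(\mbf{z}) },
\]
which vanishes exactly when $m \propto m'$ and hence is a genuine metric on the quotient; the resulting space is complete. Two elementary algebraic properties drive everything. Writing $\ell = \log m - \log m'$, the quantity $d(m,m')$ equals the oscillation of $\ell$, so it is subadditive under pointwise products, $d(m_1 m_2, m_1' m_2') \le d(m_1, m_1') + d(m_2, m_2')$, and it scales linearly under powers, $d(m^\eta, m'^\eta) = \eta\, d(m, m')$. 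The first property, applied to the variable-node combination $\prod_{g \in N(u)\setminus f} m_{g\ra u}$ appearing inside \eqref{eq:f2v_2_discounted}, bounds the change in each variable-to-factor product by the sum of the changes in its constituent messages; the second is precisely where the exponent $\eta$ will enter.

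The heart of the argument --- and the main obstacle --- is the factor-node Lipschitz bound. Fixing an output edge $(f,v)$, I would bound the sensitivity of $m_{f\ra v}$ to a single incoming product at a neighboring variable-node $u \in N(f)\setminus v$, while the remaining incoming products are held fixed; carrying out the Mooij--Kappen sensitivity estimate with $u$ and $v$ now ranging over joint realizations $\mbf{z}_{r(u)}$, $\mbf{z}_{r(v)}$ of their respective clusters yields exactly the factor strength $M(f,v,u)$ of Definition \ref{dfn:factor_strength_matrix}, whose $M_1, M_2$ form keeps the bound in $[0,1]$ even though $f$ is only non-negative rather than strictly positive. The delicate points here are verifying that the cluster-indexed maxima reproduce the quantities in \eqref{eq:mfvu} and that the zeros of $f$ are handled by the symmetric $(M_1-M_2)/(M_1+M_2)$ normalization. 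Composing this factor-node bound with the variable-node subadditivity and the power-scaling identity, and collecting the per-edge distances into the vector $\vec d^{(k)} = \big( d(m_{f\ra v}^{(k)}, \tilde m_{f\ra v}^{(k)}) \big)_{(f,v)\in \vv{E}_{\Gamma}}$, gives the one-step componentwise bound $\vec d^{(k+1)} \preceq \eta\, \mbf{M}\, \vec d^{(k)}$, with $\mbf{M}$ the factor-strength matrix.

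Finally I would convert this linear bound into a genuine contraction. Since $\rho(\eta \mbf{M}) = \eta\,\rho(\mbf{M}) < 1$ by hypothesis, Gelfand's formula (or a Perron-type positive left eigenvector of $\mbf{M}$) furnishes a weighted $\ell^\infty$-norm $\norm{\cdot}_\ast$ on $\mb{R}^{|\vv{E}_{\Gamma}|}$ in which $\norm{\eta \mbf{M}}_\ast < 1$. In the induced weighted metric on message-space, $F$ is then a strict contraction, so by the Banach fixed-point theorem it has a unique fixed point to which the iterates converge from every positive initialization. I expect the spectral-radius-to-norm and Banach steps to be routine; essentially all the work is in the clustered factor-node bound.
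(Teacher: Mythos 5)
Your proposal is correct and follows essentially the same route the paper intends, since the paper itself merely asserts that Proposition~\ref{prop:factor_strength} follows by extending \cite{mooij07}[Theorem 5] to BP($\eta$) on inflated factor-graphs: the quotient (dynamic-range) metric, the clustered per-factor strength bound yielding $\vec d^{(k+1)} \preceq \eta\,\mbf{M}\,\vec d^{(k)}$, and the spectral-radius-to-contraction step via Banach are exactly the intended argument, with the discount entering through the power-scaling identity as you describe. The one point you gloss over, and which the paper's proof remark flags explicitly, is that the iteration must remain in the space of strictly positive messages, i.e.\ one needs $\sum_{\mbf{z}_{r(f)\setminus r(v)}} f\l(\mbf{z}_{r(f)}\r) > 0$ for every edge $(f,v)$ and every $\mbf{z}_{r(v)}$ --- a benign requirement, since any violating realization has zero marginal probability and can be removed from the local domain before running the argument.
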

\begin{proof}
	The proof can be established by extending the work of \cite{mooij07} to BP($\eta$) on inflated factor-graphs (with non-negative local-functions). We inform the reader beforehand that \cite{mooij07}[Theorem 5] requires that 
	$$ \text{for every } (f, v) \in \vv{E}_{\Gamma} \text{ and } \mbf{z}_{r(v)}, \sum_{ \mbf{z}_{r(f)\setminus r(v)} } f\l( \mbf{z}_{r(f)} \r) > 0.$$ 
	This condition is benign and not really needed since whenever it is violated, it necessarily implies that the marginal probability of $\mbf{z}_{r(u)}$ is zero and one can remove it from the set of joint realizations of the variables contained in $r(u)$.
\end{proof}
In light of Proposition \ref{prop:factor_strength}, in our message-passing schemes of Algorithms \ref{alg:sumproductbp} and \ref{alg:sumproductbp2}, we may greedily start with $\eta=1$ (BP) and keep decreasing it using fixed steps $d\eta$ till the message-passing achieves convergence. 

This handles the issue of non-convergence of BP on graphs of arbitrary topology. Table \ref{table:comparison:algs} provides a comparison of Algorithms \ref{alg:sumproductbp} and \ref{alg:sumproductbp2}.


\begin{table*}[!ht]
\caption{Comparison of Algorithms \ref{alg:sumproductbp} and \ref{alg:sumproductbp2}.}
\label{table:comparison:algs}
\centering
\begin{tabular}{|c|c|c|c|c|}
\hline
\multirow{2}{*}{\textbf{Alg.}} &\textbf{Output}
&\multirow{2}{*}{\textbf{BP(1) Exact on Forests?}}
&\multirow{2}{*}{\textbf{Memory}}
&\multirow{2}{*}{\textbf{Per-iteration-time}}\\ [0pt]
&\textbf{Pseudo-Marginals} & & &\\ [0pt] \hline
1 &$q_{\mbf{x}^{(0)}}, \l\{q_{x_i^{(0)}}, q_{t_i^A}, q_{t_i^B}\r\}_{i\in V}, q_w$ &Yes &$O\l(| {\mcl{W}_1} | |\bsl{\mcl{X}}^{(0)}||E||V|^2 \r)$ &$O\l( \frac{|{\mcl{W}_1}| |\bsl{\mcl{X}}^{(0)}||E||V|^4}{\# \text{ of worker nodes}}) \r)$\\ [0pt] \hline
2 &$\l\{q'_{x_i^{(0)}}, q'_{t_i^A}, q'_{t_i^B}\r\}_{i\in V}, q'_w$ &Under Assumptions \ref{assump:factorizedfx0} and \ref{assump:tmax} &$O\l(|E|T_{max}^3 \r)$ &$O\l(\frac{|E|T_{max}^3 \l( T_{max}^2 \vee \max_{i\in V} |\partial{i}| \r) }{\# \text{ of worker nodes}} \r)$\\ [0pt] \hline
\end{tabular}
\end{table*}
\section{Discussion}\label{sec:discussion}
\subsection{Additional Information-Structures}
The marginalization framework based on variables $W, \mbf{T}, \mbf{S}$ enables us to make use of additional information that is helpful for further reduction in complexity of the message-passing algorithms. Consider an enriched observation model where in addition to the snapshot $\mcl{O}_n$, there are three additional information-structures about $\mbf{T}$, $\mbf{S}$, and $W$, namely
\begin{align*}
    \mcl{I}_1 &\defeq \cap_{i\in V} \l\{T_i^I \in \udl{\mcl{T}}_i^I, I\in\{A,B\} \r\},\numberthis\label{eq:i1}\\
    \mcl{I}_2 &\defeq \cap_{(i, j)\in \vv{E} }  \l\{ S_{i\ra j}^I \in \udl{\mcl{S}}_{i\ra j} \r\},\numberthis\label{eq:i2}\\
    \mcl{I}_3 &\defeq \l\{ W \in \udl{\mcl{W}} \r\}.
    \numberthis\label{eq:i3}
\end{align*}
Here, $\udl{\mcl{T}}_i^I$ $\udl{\mcl{S}}_{i\ra j}$, and $\udl{\mcl{W}}$ are subsets of $\mcl{T}$, $\mcl{S}$, and $\mcl{W}$ respectively. 

We can perform MAP estimation of the posterior probability $\pr' \l(\mbf{x}^{(0)} | \mcl{O}_n, \mcl{I}_1, \mcl{I}_2, \mcl{I}_3 \r)$ by restricting $\mbf{t}$, $\mbf{s}$, and $w$ to assume values from their corresponding sets stated in $\mcl{I}_1$, $\mcl{I}_2$, and $\mcl{I}_3$ respectively. Then, richer the information contained in $\mcl{I}_j$'s, lesser the wall-clock time of the message-passing algorithm.

\noindent \textbf{Remark}: Here, we assumed that 
$\mcl{I}_1, \mcl{I}_2, \mcl{I}_3$ and $\mcl{O}_n$ are consistent with each other, i.e., the event $\mcl{O}_n \cap \mcl{I}_1 \cap \mcl{I}_2 \cap \mcl{I}_3$ is an event of positive-measure. This assumption is an important one and worth highlighting. In the case of inconsistency, the message-passing algorithm will converge to the all-zeros vector after a certain number of iterations. The determination of which information structures are consistent with the observation-snapshot is out of the scope of this work.

\subsection{Spread Estimation}\label{sec:disc:influencemaximization}
The spread of process $I \in \{A, B\}$ achieved by an initial-state $ \mbf{x}^{(0)}\in \bm{\mcl{X}}^{(0)} $, denoted here by $\text{Spread}(I, \mbf{x}^{(0)}) $, is defined as the number of nodes that get infected with process $I$ in some finite time-horizon $H \in \mb{Z}_{\ge 0}$. 

By setting $f_W(\cdot) = \delta(0, \cdot) $,  $f_{\mbf{X}^{(0)}} (\cdot) = \prod_{i\in V} \delta({x}_i^{(0)}, \cdot) $, $f_{\wt{X}_i | X_i}( \cdot | *) = \delta ( \cdot, * )$\footnote{This ensures that the observation-event $\mcl{O}_n$ is trivial.}, one can study the spread-estimation problem in the setting of two diffusion processes. Here, the marginal probabilities of nodes' infection-times obtained by 
Algorithm \ref{alg:sumproductbp2} 
can be used to estimate a process's expected spread as follows:
\begin{align*}
	\mb{E}[ \text{Spread}(I, \mbf{x}^{(0)}) ] 
	&= \mb{E} \l[ \sum_{i\in V}  \11[ T_i^{I} \le H] \r]\\
	&= \mb{E} \l[ \sum_{i\in V}  \11[ T_i^{I} \le H] \bigg| \mcl{O}_n \r]\\
	&= \sum_{i\in V} \mb{P} (T_i^I \le H | \mcl{O}_n) \\
	&= \sum_{i \in V} \sum_{t_i^I \le H} \mb{P}(t_i^I | \mcl{O}_n) \\
	&\approx \sum_{i \in V} \sum_{t_i^I \le H} q'(t_i^I).
\end{align*}

\subsection{The Case of Contagion and Anti-contagion Processes}\label{sec:disc:anticontagion}
The setting when the two processes do not initiate at the same time\footnote{For instance in online social networks, a rumor/disinformation may start spreading earlier than anti-rumor/information.} can be studied by slight modifications of the model described in Section \ref{sec:problem}. Specifically, let $A$ and $B$ respectively denote contagion and anti-contagion processes and assume that process $B$ may initiate only from a subset $V_p$ of nodes called the \emph{protector-nodes}. A protector-node $i\in V_p$ is defined as a node that satisfies the following three characteristics.
\begin{itemize}
\item Its initial-state is either $\emptyset$ or $\{B\}$, i.e., $X_i^{(0)} \in \l\{ \emptyset, \l\{B\r\} \r\}$.
\item It is immune to process $A$, i.e., for every $k\in\partial i$, \protect\linebreak $\lambda_{k\ra i}^{A\ra \emptyset} = \lambda_{k\ra i}^{A\ra \{B\}} = 0$.
\item It starts spreading process $B$ either as a result of some random time $T_{delay}^B$ or after observing at least one of its neighbors in an infected state, i.e.,
$$ T_i^B = \min \l\{ T_{delay}^B, \inf_{k\in\partial{i}} \l\{ T_k^A, T_k^B \r\} \r\}, \qquad i\in V_p.$$
\end{itemize}
We refer to $T_{delay}^B$ as the \emph{onset-delay} of process $B$ and assume that it has an independent distribution given by the kernel,
\begin{align*}
&f_{T_{delay}^B}\l( h \r) > 0 \text{ iff } \\
&\hspace{10pt} h \in \mcl{T}_{delay}^B \defeq \l\{ h_{min}, h_{min}+1, \dots, h_{max}, \infty \r\}.\numberthis\label{eq:fh}
\end{align*}
It is worth noting that $T_{delay}^B$ may assume negative values which would correspond to preemptive spreading of process $B$. This changes the random process $\l\{ \mbf{X}^{(t)}: t\ge 0 \r\}$ to $\l\{ \mbf{X}^{(t)}: t\ge T_{delay}^B \wedge 0 \r\}$ but \eqref{eq:xit} still remains valid.

\noindent \textbf{Remark}: By restricting $T_{delay}^B$ to $\infty$, one ignores the preemptive spreads of the anti-contagion and recovers a model similar to that studied in \cite{choi19}.

\subsection{Choice of $T_{max}$}
Assumption \ref{assump:tmax} does not hold for the original diffusion-model in Section \ref{sec:problem}. However, one may still apply Algorithm \ref{alg:sumproductbp2} to snapshots obtained from it. To do so, one must choose $T_{max}$ that is at least consistent with the observed-snapshot $\mcl{O}_n$. Assuming that all infection probabilities are strictly positive, a sufficient condition for satisfying this consistency requirement is:
\begin{align*}
T_{max} &\ge \max \limits_{C\in \mcl{C}} \l\{ d_C \r\}.
\end{align*}
Here, we have used $d_C$ to denote the diameter of connected-component $C$. It is also clear that the below choice of $T_{max}$ is without loss of any sample-path under $\pr$ and therefore will yield exact output posterior marginals whenever the corresponding factor-graph is a tree.
\begin{align*}
T_{max} = \max\limits_{C\in\mcl{C}} \l\{ |V_C|-1 \r\}.
\end{align*}

\subsection{Unique Sources with Uniform Prior Probabilities}
The case of unique source for each process does not satisfy Assumption \ref{assump:factorizedfx0}. However, for large networks, to use Algorithm \ref{alg:sumproductbp2}, a very good approximation is to use:
\begin{align*}
f_{X_i^{(0)}}(x_i^{(0)}) &=
\begin{cases}
\l(1-\frac{1}{|V|}\r)^2 &\text{if } x_i^{(0)} = \emptyset,\\
\frac{1}{|V|}\l(1-\frac{1}{|V|}\r) &\text{if } x_i^{(0)} = \{A\} \text{ or } \{B\}, \\
\frac{1}{|V|^2} &\text{if } x_i^{(0)} = \{A,B\}.
\end{cases}\numberthis\label{eq:fxi0-uniquesources}
\end{align*}

\subsection{The Case of Single Independent-Cascade Process}
If we restrict each $X_i^{(0)}$ to $\{ \emptyset,\{A\} \}$, the infection-time of each node for process-$B$ is trivially infinity. In this case, all the random variables related to process $B$, i.e., $T_i^B, D_{i\ra j}^{A\ra\{B\}},D_{i\ra j}^{B\ra\{A\}},D_{i\ra j}^{B\ra\emptyset}, S_{i\ra j}^B$ are redundant and can be removed. One then obtains the single-spread IC model of \cite{zhu16-1} and \cite{kempe-2003}.

\subsection{The Case of Multiple Independent-Cascade Process} 
Algorithms \ref{alg:sumproductbp2} can be extended to multiple independent-cascade processes. If the number of processes is $l$, then the storage complexity will be $O\l( 2^{l} |E| T_{max}^{l+1}\r)$ since variable-nodes $v'_{j\ra i}$ will include $l$ infection-time variables and $l$ relative-timing variables. Similarly, one can verify that the per-iteration-time complexity will be $O\l( 4^l |E| T_{max}^{l+1} \l( T_{max}^l \vee \max_{i\in V} |\partial{i}| \r)  \r)$. For removing dependence on the maximum degree, please see the remark at the end of Section \ref{sec:inference_2:updaterules}.
\section{Conclusion}\label{sec:conclusion}
In this work, we explored two estimation problems based on the concurrent spread of two diffusion processes on a network. To this end, we formulated an extension of the IC-based model where a node's susceptibility to a process depends on its current state. After deriving the joint probability of the network's initial-state and the observation-snapshot in terms of infection-times of the nodes, we used the distributive law and factor-graph transformations to derive a variant of the BP algorithm that is feasible on large networks and convergent on graphs of arbitrary topology. 

The main motivation behind the proposed message-passing algorithm is that both \emph{Rumor-Centrality} and \emph{Short-Fat-Tree} algorithms are essentially graph-based algorithms and therefore, do not consider the variables that govern the spreading dynamics of the diffusion process. They also do not have natural extensions to settings that involve multiple spreading processes or multiple sources of a single process. 
In contrast, message-passing can take into account the spreading dynamics and can also be generalized to multiple diffusion processes. Furthermore, it does not require symmetry of interactions and can incorporate multiple settings of interest. 

Work on empirical evaluation of the proposed message-passing algorithm for both the DSL and spread-estimation problems is an important direction and is in progress.

\section*{Acknowledgment}
This work was funded by NSF via grants ECCS2038416, EPCN1608361, EARS1516075, CNS1955777, CCF2008130, and CMMI2240981 for V. Subramanian, and grants EARS1516075, CNS1955777, CCF2008130, and CMMI2240981 for N. Khan.

\ifCLASSOPTIONcaptionsoff
  \newpage
\fi



\bibliographystyle{IEEEtran}
\bibliography{backandforinference_ieee_arxiv}
\appendices


\renewcommand{\theequation}{\thesection.\arabic{equation}}

\section{List of Symbols}\label{sec:appendix:notation}
\begin{itemize}
\setlength\itemsep{2pt}
\item IC: Independent-Cascade
\item DSL: Diffusion Source Localization.
\item IM: Influence Maximization.
\item RC: Rumor-Centrality.
\item MLE: Maximum Likelihood Estimator.
\item SFT: Short-Fat-Tree algorithm.
\item WBND: Weight Bounded Node Degree.
\item BFS: Breadth First Search.
\item DMP: Dynamic-Message-Passing algorithm.
\item BP: Belief Propagation algorithm.
\item WLOG: Without loss of generality.
\item DL: Distributive-law.
\item LTP: Law of Total Probability.
\item $G$:  A static directed graph with $V$ and $\vv{E}$ as the sets of nodes and directed edges respectively. WLOG, it is assumed that whenever $(i,j)$ is in $\vv{E}$, $(j, i)$ is also in $\vv{E}$. $E$ denotes the set of undirected edges generated by $\vv{E}$. Each connected-component of $G$ is assumed to contain at least two nodes.
\item $\partial{i}$: Set of neighbors of node $i$ in graph $G$, i.e., $\{ k \in V: \{k, i\} \in E \} $.
\item $\mcl{C}$: Set of connected-components of $G$. Each $C \in \mcl{C}$ is assumed to have at least two nodes.
\item $I$ and $J$: Typically used to denote diffusion processes $A$ and $B$.
\item $\bsl{\lambda}_{i\ra j}$: $ \l\{ 
\lambda_{i\ra j}^{A \ra \emptyset}, \lambda_{i\ra j}^{A \ra \{B\}},
\lambda_{i\ra j}^{B \ra \emptyset}, \lambda_{i\ra j}^{B \ra \{A\}}
\r\}$.
\item $\bsl{\lambda}$: $\l( \bsl{\lambda}_{i\ra j}: (i,j)\in \vv{E} \r)$.
\item $X_i^{(t)}$: State of node $i$ at time $t$. Also, see \eqref{eq:xit}.
\item $\mcl{X} $: $ \{ \emptyset, \{A\}, \{B\}, \{A, B\} \}$.
\item $ \mbf{X}^{(t)}$: State of network at time $t$, i.e., $\l( X_i^{(t)} : t \ge 0 \r) $.
\item $D_{i\ra j}^{I\ra \mcl{J}}$: Activation-time variable - time taken by node $i$ to infect node $j$ with process $I$ when node $j$ has state $\mcl{J} \in \{ \emptyset, \{A,B\}\setminus\{I\} \}$ at the time of infection attempt.
\item $f_{D_{i\ra j}^{I\ra \mcl{J}} }$: See \eqref{eq:fdijij}.
\item $\mbf{D}_{i\ra j}$: $\l\{ 
D_{i\ra j}^{A \ra \emptyset}, D_{i\ra j}^{A \ra \{B\}},
D_{i\ra j}^{B \ra \emptyset}, D_{i\ra j}^{B \ra \{A\}}
\r\}$.
\item $\mbf{D}$: $\l( \mbf{D}_{i\ra j} :(i,j)\in \vv{E} \r)$.
\item $W$: Observation-time.
\item $\mcl{O}_n$: Observation-event $\{ \wt{\mbf{X}}^{(W)} = \wt{\mbf{x}} \} $. The subscript $n$ indicates that the observation is noisy.
\item $f_{\wt{X}_i| X_i} $: Probability of observing node $i$ in state $\wt{x}_i$ given its (true) state is $x_i$.
\item $f_{\wt{\mbf{X}} | \mbf{X}} $: See \eqref{eq:fo}.
\item $f_W$: See \eqref{eq:fw}
\item $\mcl{W}$: $\{ w_{min}, w_{min}+1, \dots, w_{max} \} $.
\item $f_{\mbf{X}^{(0)} } $: Distribution of network's initial-state. See \eqref{eq:fx0}.
\item $f_{\mbf{D}_{i\ra j}} \l( \mbf{d}_{i\ra j} \r)$: See \eqref{eq:fdij}.
\item $T_i^I$: Time at which node $i$ gets infected with process $I$. See \eqref{eq:tii} and \eqref{eq:tii2}.
\item $\mcl{T}$: $\{0,1,\dots, |V|-1, \infty \} $.
\item $\mbf{T}_i$: $ \l\{T_i^A, T_i^B\r\}$.
\item $\mbf{T}$: $\l(\mbf{T}_i:i\in V\r)$.
\item $\gamma_i^p $: See \eqref{eq:gammaip}. 
\item $T_{k\ra i}^I$: Propagation-time of process $I$ from $k$ to (its neighbor) $i$. See \eqref{eq:tkii}.
\item $f_{\mbf{T}_i} $: See \eqref{eq:fti} and \eqref{eq:fti2}.
\item $\gamma_i$: See \eqref{eq:po}.
\item $\sigma(b_i; a)$: See \eqref{eq:sigmadef}.
\item $\zeta_i$: See \eqref{eq:zetai}.
\item $\psi_{\ra i} $: See \eqref{eq:psii}.
\item $\psi_{\ra i}^I $: See \eqref{eq:psiii1} and \eqref{eq:psiii2}.
\item $S_{i\ra j}^I$: See \eqref{eq:siji}.
\item $\mcl{S}$: $ \{ 0, 1\}$.
\item $\mbf{S}_{i\ra j}$: $\{S_{i\ra j}^A, S_{i\ra j}^B\}$.
\item $\mbf{S}$: $\l(\mbf{S}_{i\ra j} : (i,j)\in \vv{E} \r)$.
\item $\psi_{i\xlrarrow{} j} $: See \eqref{eq:psiijji}.
\item $\psi_{i\ra j} $: See \eqref{eq:psiij}.
\item $\psi_{i\ra j}^I$: See \eqref{eq:psiiji}.
\item $\udl{f}_W$: See \eqref{eq:fw1}.
\item $\udl{\gamma}_i $: See \eqref{eq:gammai1}.
\item $\Gamma$: See \eqref{eq:biggamma}.
\item $f_i, f_{ij} $: See \eqref{eq:fiandfij}.
\item $G_{\Gamma}$: Inflated factor-graph representation of $\Gamma$ given by \eqref{eq:gbiggamma}.
\item $m_{f\ra v}^{(k+1)} \l(\mbf{z}_{r(v)}\r)$: See \eqref{eq:f2v_1} and \eqref{eq:f2v_2}.
\item $n_{v \ra f}^{(k)}\l(\mbf{z}_{r(v)}\r)$: See \eqref{eq:v2f}.
\item $N(\cdot)$: $N(\cdot)$ outputs the neighborhood set of its input.
\item $\mbf{m}_{f\ra v}^{(k)}$: $\l\{ m_{f\ra v}^{(k)} \l(\mbf{z}_{r(v)}\r) \r\}_{\mbf{z}_{r(v)}}$.
\item $\mbf{n}_{v\ra f}^{(k)}$: $\l\{ n_{v\ra f}^{(k)} \l(\mbf{z}_{r(v)}\r)\r\}_{\mbf{z}_{r(v)}}$.
\item $q_{r(v)}^{(k)}\l( \mbf{z}_{r(v)} \r)$: See \eqref{eq:qv}.
\item $ T_{max}$: Maximum finite time for either process to reach a given node. Introduced in Assumption \ref{assump:tmax}.
\item $\mcl{T}' $: $ \{0,1,\dots, T_{max}, \infty \}$. See Assumption \ref{assump:tmax}.
\item $\pr'$: Probability measure for the (modified) diffusion-model of Assumption \ref{assump:tmax}.
\item $f'_W$: See \eqref{eq:fw2}.
\item $\gamma'_i $: See \eqref{eq:gammai2}.
\item $ \Gamma'$: See \eqref{eq:biggamma_2}.
\item $f'_i, f'_{ij} $: See \eqref{eq:fiandfij_2}.
\item $G_{\Gamma'}$: Inflated factor-graph representation of $\Gamma'$ given by \eqref{eq:gbiggamma_2}.

\item $ \mcl{I}_1$: See \eqref{eq:i1}.
\item $ \mcl{I}_2$: See \eqref{eq:i2}.
\item $ \mcl{I}_3$: See \eqref{eq:i3}.
\item $\udl{\mcl{T}}_i^I$: Some subset of $\mcl{T}$.
\item $\udl{\mcl{S}}_{i\ra j} $: Some subset of $\mcl{S} $.
\item $\udl{\mcl{W}} $: Some subset of $\mcl{W}$.
\item $V_p$: Set of all protector nodes. See Section \ref{sec:disc:anticontagion}.
\item $T_{delay}^B$: Random Delay in the initiation of process $B$.
\item $f_{T_{delay}^B}$: See \eqref{eq:fh}.
\item $\mcl{T}_{delay}^B$: $\l\{h_{min}, h_{min}+1, \dots, h_{max}, \infty \r\}$.

\item $d_C$: Diameter of connected-component $C$ of $G$.
\end{itemize}

\onecolumn
\section{Proof of Claim \ref{claim:getridofd} (Getting Rid of $\mbf{d}$ in \eqref{eq:px0_2})}\label{app:getridofd}
The proof involves two successive applications of the distributive-law (DL). We have,
\begin{align*}
	&\sum_{\mbf{d} } \prod_{(i,j)\in\vv{E}} 
	f_{\mbf{D}_{i\ra j}} \l(\mbf{d}_{i\ra j}\r) 
	\11\l[ s_{i\ra j}^{A}=\sigma\l(t_{i\ra j}^A; t_j^A \r) \r] \times \11\l[ s_{i\ra j}^{B}=\sigma\l(t_{i\ra j}^B; t_j^B \r) \r]\\
	&=\sum_{\mbf{d} } \prod_{ \substack{ (i,j)\in\vv{E}}} \l( 
	\prod_{I,J} f_{D_{i\ra j}^{I\ra J} } \l(d_{i\ra j}^{I\ra J} \r)
	\r) \11\l[ s_{i\ra j}^{A}=\sigma\l( t_{i\ra j}^A; t_j^A \r) \r]\\
	&\hspace{50pt} \times \11\l[ s_{i\ra j}^{B}=\sigma\l( t_{i\ra j}^B; t_j^B \r) \r]\\
	&\stackrel{\text{DL}}{=}\prod_{ (i,j) \in \vv{E}} \sum_{ \mbf{d}_{i\ra j} }
	f_{D_{i\ra j}^{A\ra \emptyset} } \l(d_{i \ra j}^{A\ra \emptyset} \r)
	f_{D_{i\ra j}^{A\ra \{B\} } } \l(d_{i \ra j}^{A\ra \{B\}} \r)
	\11\l[ s_{i\ra j}^{A}=\sigma\l( t_{i\ra j}^A; t_j^A \r) \r] \\
	&\hspace{50pt} \times 
	f_{D_{i\ra j}^{B\ra \emptyset} } \l(d_{i \ra j}^{B\ra \emptyset} \r)
	f_{D_{i\ra j}^{B\ra \{A\} } } \l(d_{i \ra j}^{B\ra \{A\}} \r)
	\11\l[ s_{i\ra j}^{B}=\sigma\l( t_{i\ra j}^B; t_j^B \r) \r]\\
	&\stackrel{\text{DL}}{=}\prod_{ (i,j) \in \vv{E}}
	\underbrace{
		\sum_{ d_{i\ra j}^{A\ra\emptyset}, d_{i\ra j}^{A\ra\{B\}} }
		f_{D_{i\ra j}^{A\ra \emptyset} } \l(d_{i \ra j}^{A\ra \emptyset} \r)
		f_{D_{i\ra j}^{A\ra \{B\} } } \l(d_{i \ra j}^{A\ra \{B\}} \r)
		\11\l[ s_{i\ra j}^{A}=\sigma\l( t_{i\ra j}^A; t_j^A \r) \r]
	}_{\defeq \psi_{i\ra j}^A \l( t_i^A, s_{i\ra j}^A;t_j^A,t_j^B\r) }\\
	&\hspace{50pt} \underbrace{ 
		\sum_{ d_{i\ra j}^{B\ra\emptyset}, d_{i\ra j}^{B\ra\{A\}} }
		f_{D_{i\ra j}^{B\ra \emptyset} } \l(d_{i \ra j}^{B\ra \emptyset} \r)
		f_{D_{i\ra j}^{B\ra \{A\} } } \l(d_{i \ra j}^{B\ra \{A\}} \r)
		\11\l[ s_{i\ra j}^{B}=\sigma\l( t_{i\ra j}^B; t_j^B \r) \r]
	}_{\defeq \psi_{i\ra j}^B \l( t_i^B, s_{i\ra j}^B;t_j^B,t_j^A\r) }\\
	&=\prod_{(i,j)\in\vv{E}} \underbrace{ \psi_{i\ra j}^A \l( t_i^A, s_{i\ra j}^A;t_j^A,t_j^B\r) \psi_{i\ra j}^B \l( t_i^B, s_{i\ra j}^B;t_j^B,t_j^A\r) }_{\defeq \psi_{i\ra j} \l(\mbf{t}_i, \mbf{s}_{i\ra j}; \mbf{t}_j \r)  }\\
	&=\prod_{\{i,j\}\in E} \underbrace{ \psi_{i\ra j} \l(\mbf{t}_i, \mbf{s}_{i\ra j}; \mbf{t}_j \r) \psi_{j\ra i} \l(\mbf{t}_j, \mbf{s}_{j\ra i}; \mbf{t}_i \r) }_{\defeq \psi_{i \xlrarrow{} j} \l( \mbf{t}_i, \mbf{s}_{i\ra j}; \mbf{t}_j , \mbf{s}_{j\ra i} \r) } \\
	&=\prod_{ \substack{ \{i,j\}\in E }} \psi_{i \xlrarrow{} j}\l( \mbf{t}_i, \mbf{s}_{j\ra i}, \mbf{t}_j, \mbf{s}_{i\ra j}  \r) ,
\end{align*}
where, $\psi_{i\ra j}^I$ is given by,
\begin{align*}
	\psi_{i\ra j}^I(\cdot) &= \sum_{ d_{i\ra j}^{I\ra\emptyset}, d_{i\ra j}^{I\ra\{J\}} }
	f_{D_{i\ra j}^{I\ra \emptyset} } \l(d_{i \ra j}^{I\ra \emptyset} \r)
	f_{D_{i\ra j}^{I\ra \{J\} } } \l(d_{i \ra j}^{I\ra \{J\}} \r) \\
	&\hspace{20pt} \11\l[ s_{i\ra j}^{I}=\sigma\l( t_i^I+d_{i\ra j}^{I\ra \emptyset}\11[t_i^I<t_j^J] + d_{i\ra j}^{I\ra\{J\}} \11[t_i^I\ge t_j^J]; t_j^I \r) \r]\\
	\begin{split}
		&\labelrel{=}{eqr:claim1:psi} \begin{cases}
			1-\11\l[t_i^I<t_j^I\r] \lambda_{i\ra j}^{I\ra \emptyset} &\text{if } t_j^I<\infty, s_{i\ra j}^{I}=1, t_i^I<t_j^J; \\
			1-\11\l[t_i^I<t_j^I\r] \lambda_{i\ra j}^{I\ra \{J\}} &\text{if } t_j^I<\infty, s_{i\ra j}^{I}=1, t_i^I\ge t_j^J; \\
			\11[t_i^I+1=t_j^I]\lambda_{i\ra j}^{I\ra \emptyset} &\text{if } t_j^I<\infty, s_{i\ra j}^{I}=0, t_i^I<t_j^J;\\
			\11[t_i^I+1=t_j^I]\lambda_{i\ra j}^{I\ra \{J\}} &\text{if } t_j^I<\infty, s_{i\ra j}^{I}=0, t_i^I\ge t_j^J;\\
			1 - \11[t_i^I<\infty]\lambda_{i\ra j}^{I\ra \emptyset} &\text{if } t_j^I=\infty, s_{i\ra j}^{I}=0, t_i^I<t_j^J;\\
			1 - \11[t_i^I<\infty]\lambda_{i\ra j}^{I\ra \{J\}} &\text{if } t_j^I=\infty, s_{i\ra j}^{I}=0, t_i^I\ge t_j^J;\\
			0 &\text{otherwise}.
		\end{cases}
	\end{split}
\end{align*}
Here, we recall that $t_{i\ra j}^I$ and $\sigma(\cdot, *)$ are shorthands for $t_i^I + d_{i\ra j}^{I\ra \emptyset}\11[t_i^I<t_j^J] +
d_{i\ra j}^{I\ra \{J\}}\11[t_i^I\ge t_j^J]$
and $\sign\l(\cdot-*\r)$ respectively.

\newpage
\section{Message Update Rules for Algorithm \ref{alg:sumproductbp} (Loopy BP on $G_{\Gamma}$)}\label{app:messageupdaterules-1}

\subsection{Efficient Update Rule of $\mbf{m}_{f_i\ra v_{j\ra i}}$}
Recall that $f_i$ represents the function $\zeta_i \gamma_i \psi_{\ra i}$ (see \eqref{eq:fiandfij}) with $\mbf{z}_{r(f_i)}=\l(w,\mbf{x}^{(0)}, \mbf{t}_i, \l\{ \mbf{s}_{k\ra i} \r\}_{k\in\partial{i}} \r)$ and $v_{j\ra i}$ represents the variable-node with $\mbf{z}_{r(v_{j\ra i})} = (w, \mbf{x}^{(0)}, \mbf{t}_i, \mbf{s}_{j\ra i})$. Referring to \figurename \ref{fig:messages}, for any realization $\mbf{z}_{v_{j\ra i}}=(w,\mbf{x}^{(0)},\mbf{t}_i, \mbf{s}_{j\ra i} )$, we have
\begin{align*}
	&m_{f_i \ra v_{j\ra i}}\l(w,\mbf{x}^{(0)}, \mbf{t}_i,\mbf{s}_{j\ra i} \r) \propto 
	\sum_{ \{\mbf{s}_{k\ra i}\}_{k\in\partial{i}\setminus j } } f_i \l(w,\mbf{x}^{(0)}, \mbf{t}_i, \{\mbf{s}_{k\ra i}\}_{k\in\partial{i}} \r) \prod_{k \in\partial{i}\setminus j } m_{f_{ik} \ra v_{k\ra i} } \l(w,\mbf{x}^{(0)}, \mbf{t}_i, \mbf{s}_{k\ra i}  \r).\numberthis\label{eq:fitovji}
\end{align*}
An important quantity that will appear repeatedly in the computation of \eqref{eq:fitovji} is the marginalization of $m_{f_{ik} \ra v_{k\ra i}}$ over a restricted domain of $\mbf{s}_{k\ra i}$ for each $(w, \mbf{x}^{(0)}, \mbf{t}_i)$. We denote this quantity by the symbol $\Phi_{k\ra i}\l(w,\mbf{x}^{(0)},\mbf{t}_i \big| \cdot \r)$ where $\cdot$ specifies the restriction (if any) on the domain of $\mbf{s}_{k\ra i}$. For example, $\Phi_{k\ra i}\l(w,\mbf{x}^{(0)}, \mbf{t}_i \big| s_{k\ra i}^{A}=1 \r)$ represents the marginalization of $m_{f_{ik} \ra v_{k\ra i}} \l(w,\mbf{x}^{(0)}, \mbf{t}_i, \mbf{s}_{k\ra i} \r)$ over $\mbf{s}_{k\ra i}$ with $s_{k\ra i}^{A}$ restricted to value 1. Similarly, $\Phi_{k\ra i}\l(w,\mbf{x}^{(0)}, \mbf{t}_i \big| \r)$ represents the marginalization of $m_{f_{ik} \ra v_{k\ra i}} \l(w,\mbf{x}^{(0)}, \mbf{t}_i, \mbf{s}_{k\ra i} \r)$ over all realizations of $\mbf{s}_{k\ra i}$.

We have divided the computation into different cases; all cases use the definition of $\psi_{\ra i}$ (see \eqref{eq:psiii2}) followed by an application of the distributive-law (DL).

\vspace{5pt}
\noindent \udl{Case 1A - $0=t_i^A=t_i^B$:} Here, 
\begin{align*}
	&m_{f_i \ra v_{j\ra i}} \l(w,\mbf{x}^{(0)}, \mbf{t}_i,\mbf{s}_{j\ra i} \r) \propto \zeta_i(x_i^{(0)}, \mbf{t}_i ) \udl{\gamma}_i (w,\mbf{t}_i; \widetilde{x}_i) 
	\sum_{ \{\mbf{s}_{k\ra i}\}_{k\in\partial{i}\setminus j } } \prod_{k\in\partial{i}\setminus j } m_{f_{ik} \ra v_{k\ra i} } \l(w,\mbf{x}^{(0)}, \mbf{t}_i, \mbf{s}_{k\ra i}  \r)\\
	&\stackrel{\text{DL}}{=} \zeta_i(x_i^{(0)}, \mbf{t}_i ) \udl{\gamma}_i (w,\mbf{t}_i; \widetilde{x}_i)
	\prod_{k\in\partial{i}\setminus j } 
	\Phi_{k\ra i} \l(w, \mbf{x}^{(0)}, \mbf{t}_i | \r).\numberthis\label{eq:alg1:1:case1a}
\end{align*}

\noindent \udl{Case 2A - $0=t_i^A<t_i^B$:} Here, 
\begin{align*}
	&m_{f_i \ra v_{j\ra i}}\l(w,\mbf{x}^{(0)}, \mbf{t}_i,\mbf{s}_{j\ra i} \r)\\
	&\propto \zeta_i(x_i^{(0)}, \mbf{t}_i ) \udl{\gamma}_i (w,\mbf{t}_i; \widetilde{x}_i) 
	\sum_{ \{\mbf{s}_{k\ra i}\}_{k\in\partial{i}\setminus j } }  
	\l(1 - \prod_{k\in\partial{i}} \11[s_{k\ra i}^{B}=1] \r)
	\prod_{k\in\partial{i}\setminus j } m_{f_{ik}\ra v_{k\ra i} } \l(w,\mbf{x}^{(0)}, \mbf{t}_i, \mbf{s}_{k\ra i}  \r)\\
	&\stackrel{\text{DL}}{=} \zeta_i(x_i^{(0)}, \mbf{t}_i ) \udl{\gamma}_i ( w,\mbf{t}_i; \widetilde{x}_i) 
	\l[ 
	\prod_{k\in \partial{i} \setminus j} 
	\Phi_{k\ra i} \l(w, \mbf{x}^{(0)}, \mbf{t}_i | \r) - \11[s_{j\ra i}^{B}=1]
	\prod_{k\in\partial{i}\setminus j} 
	\Phi_{k\ra i} \l(w, \mbf{x}^{(0)}, \mbf{t}_i | s_{k\ra i}^{B}= 1\r) \r]. \numberthis\label{eq:alg1:1:case2a}
\end{align*}

\noindent \udl{Case 2B - $0=t_i^B<t_i^A$:} By symmetry with case 2A, we have 
\begin{align*}
	&m_{f_i \ra v_{j\ra i}}\l(w,\mbf{x}^{(0)}, \mbf{t}_i,\mbf{s}_{j\ra i} \r)\\
	&\propto \zeta_i(x_i^{(0)}, \mbf{t}_i ) \udl{\gamma}_i ( w,\mbf{t}_i; \widetilde{x}_i) 
	\l[ 
	\prod_{k\in \partial{i} \setminus j} 
	\Phi_{k\ra i} \l(w, \mbf{x}^{(0)}, \mbf{t}_i | \r) - \11[s_{j\ra i}^{A}=1]
	\prod_{k\in\partial{i}\setminus j} 
	\Phi_{k\ra i} \l(w, \mbf{x}^{(0)}, \mbf{t}_i | s_{k\ra i}^{A}= 1\r) \r]. \numberthis\label{eq:alg1:1:case2b}
\end{align*}

\noindent \udl{Case 3 - $0<t_i^A, 0<t_i^B$:} Here,
\begin{align*}
	&m_{f_i \ra v_{j\ra i}} \l(
	w,\mbf{x}^{(0)}, \mbf{t}_i,\mbf{s}_{j\ra i} 
	\r) \\
	&\hspace{10pt} \propto 
	\zeta_i( x_i^{(0)}, \mbf{t}_i ) \udl{\gamma}_i( w,\mbf{t}_i; \widetilde{x}_i) 
	\sum_{ \{\mbf{s}_{k\ra i}\}_{k\in\partial{i}\setminus j } }  
	\l(
	1 - \prod_{k\in\partial{i}} \11[s_{k\ra i}^{A}=1] \r) 
	\l(
	1 - \prod_{k\in\partial{i}} \11[s_{k\ra i}^{B}=1] \r)\\
	&\hspace{150pt} \times \prod_{k\in\partial{i}\setminus j } m_{f_{ik}\ra v_{k\ra i} } \l(w,\mbf{x}^{(0)}, \mbf{t}_i, \mbf{s}_{k\ra i}  \r)
	\\
	&\hspace{10pt} \stackrel{\text{DL}}{=} \zeta_i(x_i^{(0)}, \mbf{t}_i ) \udl{\gamma}_i (w,\mbf{t}_i; \widetilde{x}_i )
	\l[
	\prod_{k\in \partial{i}\setminus j } 
	\Phi_{k\ra i} \l(w, \mbf{x}^{(0)}, \mbf{t}_i | \r) \r.\\
	&\hspace{120pt} \l. + \11[s_{j\ra i}^{A}=1, s_{j\ra i}^{B}=1] \prod_{ k\in\partial{i}\setminus j} \Phi_{k\ra i} \l(w, \mbf{x}^{(0)}, \mbf{t}_i | s_{k\ra i}^{A}= 1, s_{k\ra i}^{B}=1  \r) \r.\\
	&\hspace{120pt} \l. - \11[ s_{j\ra i}^{A}=1] \prod_{ k\in\partial{i}\setminus j } \Phi_{k\ra i} \l(w, \mbf{x}^{(0)}, \mbf{t}_i | s_{k\ra i}^{A}= 1 \r) \r. \\
	&\hspace{120pt} \l. - \11[ s_{j\ra i}^{B}=1 ] 
	\prod_{ k\in\partial{i}\setminus j } 
	\Phi_{k\ra i} \l(w, \mbf{x}^{(0)}, \mbf{t}_i | s_{k\ra i}^{B}=1 \r) \r].\numberthis\label{eq:alg1:1:case3}
\end{align*}
The per-iteration-time complexity of computing all such messages 
is $O \l( |{\mcl{W}_1}| |\bm{\mcl{X}}^{(0)}| |V|^2 \sum_{i\in V}|\partial{i}|^2 \r)= O\l(|{\mcl{W}_1}| |\bm{\mcl{X}}^{(0)}| |E| |V|^2  \max_{i\in V}|\partial{i}|\r)$.

\subsection{Update Rule of $\mbf{m}_{f_{ij}\ra v_{j\ra i}}$ and $\mbf{m}_{f_{ij}\ra v_{i \ra j}}$ }
Let $C$ be the connected-component of $G$ to which an edge $\{i,j\}$ belongs. Recall that $f_{ij}$ represents the function $\psi_{i\xlrarrow{} j}$ (see \eqref{eq:fiandfij}) with $\mbf{z}_{r(f_{ij})} = \l(w,\mbf{x}^{(0)}, \mbf{t}_i, \mbf{s}_{j\ra i}, \mbf{t}_j, \mbf{s}_{i\ra j} \r)$. Referring to \figurename \ref{fig:messages}, for any realization $\mbf{z}_{r(v_{j\ra i})} = \l( w, \mbf{x}^{(0)}, \mbf{t}_i, \mbf{s}_{j\ra i} \r)$, we have
\begin{align*}
	&m_{f_{ij}\ra v_{j\ra i} } \l(w, \mbf{x}^{(0)}, \mbf{t}_i, \mbf{s}_{j\ra i} \r) 
	\propto \xi \l( \{i,j\},w,\mbf{x}^{(0)} ; C \r) 
	\sum_{\mbf{t}_j, \mbf{s}_{i\ra j} } \psi_{i\xlrarrow{} j}\l(\mbf{t}_i, \mbf{s}_{j\ra i}, \mbf{t}_j, \mbf{s}_{i\ra j} \r) m_{f_j \ra v_{i\ra j} } \l( w, \mbf{x}^{(0)}, \mbf{t}_j, \mbf{s}_{i\ra j} \r),\numberthis\label{eq:alg1:2:1}
\end{align*}
where $\psi_{i\xlrarrow{} j} $ is given by \eqref{eq:psiij} and $\xi$ is shorthand for 
$$\11[ \{i,j\} \ne \wh{e}^{(C)} ] + \11[ \{i,j\} = \wh{e}^{(C)} ] m_{\udl{f}_{W, \mbf{X}^{(0)}} \ra v_{C}} \l(w, \mbf{x}^{(0)}\r) .$$
Similarly, 
\begin{align*}
	&m_{f_{ij}\ra v_{i\ra j} }\l(w, \mbf{x}^{(0)}, \mbf{t}_j, \mbf{s}_{i\ra j} \r) \propto \xi \l( \{i,j\},w,\mbf{x}^{(0)} ; C \r)
	\sum_{\mbf{t}_i, \mbf{s}_{j\ra i} } \psi_{i\xlrarrow{} j}\l(\mbf{t}_i, \mbf{s}_{j\ra i}, \mbf{t}_j, \mbf{s}_{i\ra j} \r) 
	m_{f_i \ra v_{j\ra i} } \l( w, \mbf{x}^{(0)}, \mbf{t}_i, \mbf{s}_{j\ra i} \r).\numberthis\label{eq:alg1:2:2}
\end{align*}
The per-iteration-time complexity of computing all such messages is $O\l( |{\mcl{W}_1}| |\bm{\mcl{X}}^{(0)}| |E||V|^4 \r)$.

\subsection{Update Rule of $\mbf{m}_{f_{\wh{e}^{(C)}} \ra v_{C} }$ }
Referring to \figurename \ref{fig:messages}, for any realization $\mbf{z}_{r(v_C)} = \l( w, \mbf{x}^{(0)} \r)$, we have
\begin{align*}
	&m_{f_{\wh{e}^{(C)}} \ra v_C }\l( w, \mbf{x}^{(0)} \r) \propto \sum_{ \mbf{t}_i, \mbf{s}_{j\ra i}, \mbf{t}_j, \mbf{s}_{i\ra j} } \psi_{i\xlrarrow{} j} \l( \mbf{t}_i, \mbf{s}_{j\ra i}, \mbf{t}_j, \mbf{s}_{i\ra j} \r) m_{f_i\ra v_{j\ra i} }\l(w,\mbf{x}^{(0)}, \mbf{t}_i, \mbf{s}_{j\ra i} \r)
	m_{f_j\ra v_{i\ra j} }\l(w,\mbf{x}^{(0)}, \mbf{t}_j, \mbf{s}_{i\ra j} \r).\numberthis\label{eq:alg1:4}
\end{align*}
The per-iteration-time complexity of computing this message is $O\l(|{\mcl{W}_1}| |\bm{\mcl{X}}^{(0)}| |\mcl{C}| |V|^4 \r) $.

\subsection{Update Rule of $\mbf{m}_{f_{W, \mbf{X}^{(0)}} \ra v_{C} }$}
Referring to \figurename \ref{fig:messages2}, for any realization $\mbf{z}_{r(v_{C})} = \l( w, \mbf{x}^{(0)} \r)$, we have
\begin{align*}
	&m_{ \udl{f}_{ W, \mbf{X}^{(0)} } \ra v_{C} } \l( w, \mbf{x}^{(0)} \r) 
	\propto 
	\udl{f}_W(w)f_{\mbf{X}^{(0)}}\l(\mbf{x}^{(0)}\r) \prod_{ C' \in \mcl{C} \setminus {C}} m_{ f_{\wh{e}^{(C')}} \ra v_{C'} } \l(w, \mbf{x}^{(0)}\r).\numberthis\label{eq:alg2:5}
\end{align*}
The per-iteration-time complexity of computing all such messages is $O\l( |{\mcl{W}_1}| |\bsl{\mcl{X}}^{(0)}| |\mcl{C}| \r)$

\textbf{Per-Iteration-Time Complexity}: In \eqref{eq:alg1:1:case1a} through \eqref{eq:alg1:4}, the per-iteration-time is dominated by messages of the form $\mbf{m}_{f_{ij} \ra v_{j\ra i}}$. Therefore, per-iteration-time complexity of Algorithm \ref{alg:sumproductbp} is is $O\l(|{\mcl{W}_1}| |\bsl{\mcl{X}}^{(0)}| |E| |V|^4 \r)$.


\newpage
\section{Message Update Rules for $G_{\Gamma'}$ (Loopy BP on $G_{\Gamma'}$)}\label{app:messageupdaterules2}

\subsection{Efficient Update Rule of $\mbf{m}_{f'_i\ra v'_{j\ra i}}$}
Recall that $f'_i$ represents the function $f_{X_i^{(0)}} \zeta_i \gamma'_i \psi_{\ra i}$ (see \eqref{eq:fiandfij_2}) with $\mbf{z}_{r(f_i)} = \l( w,x_i^{(0)}, \mbf{t}_i, \l\{ \mbf{s}_{k\ra i} \r\}_{k\in\partial{i}} \r)$ and $v'_{j\ra i}$ represents the variable-node containing the variables $(w, \mbf{t}_i, \mbf{s}_{j\ra i})$. Referring to \figurename \ref{fig:messages2}, for any realization $\mbf{z}_{r(v'_{j\ra i})} = (w,\mbf{t}_i, \mbf{s}_{j\ra i} )$, we have
\begin{align*}
	&m_{f'_i \ra v'_{j\ra i}}\l(w, \mbf{t}_i,\mbf{s}_{j\ra i} \r) \propto 
	\sum_{ x_i^{(0)}, \{\mbf{s}_{k\ra i}\}_{k\in\partial{i}\setminus j } } f'_i \l(w,x_i^{(0)}, \mbf{t}_i, \{\mbf{s}_{k\ra i}\}_{k\in\partial{i}} \r) \prod_{k\in\partial{i}\setminus j } m_{f'_{ik} \ra v'_{k\ra i} } \l(w, \mbf{t}_i, \mbf{s}_{k\ra i}  \r).\numberthis\label{eq:fitovji2}
\end{align*}
An important quantity that will appear repeatedly in the computation of \eqref{eq:fitovji2} is the marginalization of $m_{f'_{ik} \ra v'_{k\ra i}}$ over a restricted domain of $\mbf{s}_{k\ra i}$ for each $(w, \mbf{t}_i)$. We denote this quantity by the symbol $\Phi_{k\ra i}\l(w, \mbf{t}_i \big| \cdot \r)$ where $\cdot$ specifies the restriction (if any) on the domain of $\mbf{s}_{k\ra i}$. For example, $\Phi_{k\ra i}\l(w, \mbf{t}_i \big| s_{k\ra i}^{A}=1 \r)$ represents the marginalization of $m_{f'_{ik} \ra v'_{k\ra i}} \l(w, \mbf{t}_i, \mbf{s}_{k\ra i} \r)$ over $\mbf{s}_{k\ra i}$ with $s_{k\ra i}^{A}$ restricted to value 1. Similarly, $\Phi_{k\ra i}\l(w, \mbf{t}_i \big| \r)$ represents the marginalization of $m_{f'_{ik} \ra v'_{k\ra i}} \l(w, \mbf{t}_i, \mbf{s}_{k\ra i} \r)$ over all realizations of $\mbf{s}_{k\ra i}$.

We divide the computation into different cases as follows. All cases use the definition of $\psi_{\ra i}$ (see \eqref{eq:psiii2}) followed by an application of the distributive law.

\vspace{5pt}
\noindent \udl{Case 1 - $0=t_i^A=t_i^B$:} Here, 
\begin{align*}
	&m_{f'_i \ra v'_{j\ra i}} \l(w, \mbf{t}_i,\mbf{s}_{j\ra i} \r) \propto \gamma'_i \l(w,\mbf{t}_i; \wt{x}_i\r) \l( \sum_{x_i^{(0)}} f_{X_i^{(0)}} (x_i^{(0)} ) \zeta_i( x_i^{(0)}, \mbf{t}_i ) \r)  
	\sum_{ \{\mbf{s}_{k\ra i}\}_{k\in\partial{i}\setminus j } } \prod_{k\in\partial{i}\setminus j } m_{f'_{ik} \ra v'_{k\ra i} } \l(w, \mbf{t}_i, \mbf{s}_{k\ra i}  \r)\\
	&\stackrel{\text{DL}}{=} \gamma'_i (w,\mbf{t}_i; \widetilde{x}_i) \l( \sum_{x_i^{(0)}} f_{X_i^{(0)}} (x_i^{(0)} ) \zeta_i( x_i^{(0)}, \mbf{t}_i ) \r)
	\prod_{k\in\partial{i}\setminus j } 
	\Phi_{k\ra i} \l(w, \mbf{t}_i | \r)
	\numberthis\label{eq:alg2:1:case1}
\end{align*}

\noindent \udl{Case 2A - $0=t_i^A<t_i^B$:} Here, 
\begin{align*}
	&m_{f'_i \ra v'_{j\ra i}}\l(w, \mbf{t}_i,\mbf{s}_{j\ra i} \r) \\
	&\propto 
	\gamma'_i (w,\mbf{t}_i; \widetilde{x}_i) \l( \sum_{x_i^{(0)}} f_{X_i^{(0)}} (x_i^{(0)} ) \zeta_i( x_i^{(0)}, \mbf{t}_i ) \r)
	\sum_{ \{\mbf{s}_{k\ra i}\}_{k\in\partial{i}\setminus j } }  
	\l( 1 - \prod_{k\in\partial{i}} \11[s_{k\ra i}^{B}=1] \r) 
	\prod_{k\in\partial{i}\setminus j } m_{f'_{ik}\ra v'_{k\ra i} } \l(w, \mbf{t}_i, \mbf{s}_{k\ra i}  \r)\\
	&\stackrel{\text{DL}}{=} \gamma'_i ( w,\mbf{t}_i; \widetilde{x}_i) \l( \sum_{x_i^{(0)}} f_{X_i^{(0)}} (x_i^{(0)} ) \zeta_i( x_i^{(0)}, \mbf{t}_i ) \r)
	\l[ 
	\prod_{k\in \partial{i} \setminus j} 
	\Phi_{k\ra i} \l(w, \mbf{t}_i | \r) - \11[s_{j\ra i}^B=1 ] 
	\prod_{k\in\partial{i}\setminus j} 
	\Phi_{k\ra i} \l(w, \mbf{t}_i | s_{k\ra i}^{B}= 1 \r) \r].\numberthis\label{eq:alg2:1:case2a}
\end{align*}

\noindent \udl{Case 2B - $0=t_i^B<t_i^A$:} By symmetry with case 2A, we have 
\begin{align*}
	&m_{f'_i \ra v'_{j\ra i}}\l(w, \mbf{t}_i,\mbf{s}_{j\ra i} \r) \\
	&\propto \gamma'_i ( w,\mbf{t}_i; \widetilde{x}_i) \l( \sum_{x_i^{(0)}} f_{X_i^{(0)}} (x_i^{(0)} ) \zeta_i( x_i^{(0)}, \mbf{t}_i ) \r)
	\l[ 
	\prod_{k\in \partial{i} \setminus j} 
	\Phi_{k\ra i} \l(w, \mbf{t}_i | \r) - \11[s_{j\ra i}^A=1 ] 
	\prod_{k\in\partial{i}\setminus j} 
	\Phi_{k\ra i} \l(w, \mbf{t}_i | s_{k\ra i}^{A}= 1 \r) \r].\numberthis\label{eq:alg2:1:case2b}
\end{align*}

\noindent \udl{Case 3 - $0<t_i^A, 0<t_i^B$:} Here, 
\begin{align*}
	&m_{f'_i \ra v'_{j\ra i}} \l(w, \mbf{t}_i,\mbf{s}_{j\ra i} \r) \\
	&\propto \gamma'_i (w,\mbf{t}_i; \widetilde{x}_i) 
	\l( \sum_{x_i^{(0)}} f_{X_i^{(0)}} (x_i^{(0)} ) \zeta_i( x_i^{(0)}, \mbf{t}_i ) \r)
	\sum_{ \{\mbf{s}_{k\ra i}\}_{k\in\partial{i}\setminus j } }  
	\l(1 - \prod_{k\in\partial{i}} \11[s_{k\ra i}^{A}=1] \r) \l( 1 - \prod_{k\in\partial{i}} \11[s_{k\ra i}^{B}=1] \r)
	\\
	&\hspace{240pt} \times 
	\prod_{k\in\partial{i}\setminus j } m_{f'_{ik}\ra v'_{k\ra i} } \l(w, \mbf{t}_i, \mbf{s}_{k\ra i}  \r)
	\\
	&\stackrel{\text{DL}}{=} \gamma'_i (w,\mbf{t}_i; \wt{x}_i) \l( \sum_{ x_i^{(0)} } f_{ X_i^{(0)} } ( x_i^{(0)} ) 
	\zeta_i( x_i^{(0)}, \mbf{t}_i ) \r)
	\l[
	\prod_{ k\in \partial{i} \setminus j } 
	\Phi_{k\ra i} \l( w, \mbf{t}_i | \r) \r. \\
	&\hspace{30pt} \l. + \11[s_{j\ra i}^{A}=1, s_{j\ra i}^{B}=1] \prod_{ k\in\partial{i} \setminus j } \Phi_{k\ra i} \l( w, \mbf{t}_i | s_{k\ra i}^{A}= 1, s_{k\ra i}^{B}=1  \r)
	\r.\\
	&\hspace{30pt} \l. - \11[ s_{j\ra i}^{A}=1] 
	\prod_{ k\in\partial{i}\setminus j } 
	\Phi_{k\ra i} \l(w, \mbf{t}_i | s_{k\ra i}^{A}= 1 \r) \r. \\
	&\hspace{30pt} \l. - \11[ s_{j\ra i}^{B}=1] 
	\prod_{ k\in\partial{i}\setminus j } 
	\Phi_{k\ra i} \l(w, \mbf{t}_i | s_{k\ra i}^{B}=1 \r) \r].\numberthis\label{eq:alg2:1:case3}
\end{align*}
The per-iteration-time complexity of computing all such messages is $O\l(T_{max}^3 \sum_{i\in V} |\partial{i}|^2  \r) = O \l( |E|T_{max}^3 \max_{i\in V}| {\partial{i}}|\r)$.

\subsection{Update Rule of $\mbf{m}_{f'_{ij} \ra v'_{j\ra i}}$ and $\mbf{m}_{f'_{ij} \ra v'_{i \ra j} }$}
Let $C$ denote the connected-component to which an edge $\{i, j\}$ belongs. Recall that $f'_{ij}$ represents the function $\psi_{i\xlrarrow{} j}$ (see \eqref{eq:fiandfij_2}) with $\mbf{z}_{r(f'_{ij})} = \l(w, \mbf{t}_i, \mbf{s}_{j\ra i}, \mbf{t}_j, \mbf{s}_{i\ra j} \r)$. Referring to \figurename \ref{fig:messages2}, for any realization $\mbf{z}_{r(v'_{j\ra i})} = ( w,\mbf{t}_i, \mbf{s}_{j\ra i} )$, we have
\begin{align*}
	&m_{f'_{ij}\ra v'_{j\ra i} } \l(w, \mbf{t}_i, \mbf{s}_{j\ra i} \r) 
	\propto \xi' \l( \{i,j\}, w; C \r) 
	\sum_{\mbf{t}_j, \mbf{s}_{i\ra j} } \psi_{i\xlrarrow{} j}\l(\mbf{t}_i, \mbf{s}_{j\ra i}, \mbf{t}_j, \mbf{s}_{i\ra j} \r) m_{f'_j \ra v'_{i\ra j} } \l( w, \mbf{t}_j, \mbf{s}_{i\ra j} \r),\numberthis\label{eq:alg2:2:1}
\end{align*}
where $\psi_{i\xlrarrow{} j} $ is given by \eqref{eq:psiijji} and $\xi'$ is shorthand for 
$$\11[\{i,j\} \ne \wh{e}^{(C)} ]+\11[\{i,j\}= \wh{e}^{(C)}] m_{f'_W \ra v'_{C}}(w).$$ 
Similarly,
\begin{align*}
	&m_{f'_{ij}\ra v'_{i\ra j} }\l(w, \mbf{t}_j, \mbf{s}_{i\ra j} \r) \propto \xi' \l( \{i,j\},w ; C \r)
	\sum_{\mbf{t}_i, \mbf{s}_{j\ra i} } \psi_{i\xlrarrow{} j}\l(\mbf{t}_i, \mbf{s}_{j\ra i}, \mbf{t}_j, \mbf{s}_{i\ra j} \r) 
	m_{f'_i \ra v'_{j\ra i} } \l( w, \mbf{t}_i, \mbf{s}_{j\ra i} \r).\numberthis\label{eq:alg2:2:2}
\end{align*}
The per-iteration-time complexity of computing all such messages is $O \l( |E|T_{max}^5 \r) $.

\subsection{Update Rule of $\mbf{m}_{f'_{i} \ra v'_i }$ }
Referring to \figurename \ref{fig:messages2}, for any realization $\mbf{z}_{r(v'_i)} = x_i^{(0)}$, we have
\begin{align*}
	&m_{f'_i \ra v'_i }\l( x_i^{(0)} \r) 
	\propto 
	\sum_{ w,\mbf{t}_i, \{\mbf{s}_{k\ra i}\}_{k\in\partial{i} } } f'_i \l(w,x_i^{(0)}, \mbf{t}_i, \{\mbf{s}_{k\ra i}\}_{k\in\partial{i}} \r) \prod_{k\in\partial{i} } m_{f'_{ik} \ra v'_{k\ra i} } \l(w, \mbf{t}_i, \mbf{s}_{k\ra i}  \r)\\
	&\hspace{10pt} = f_{X_i^{(0)}}\l(x_i^{(0)}\r) \sum_{\mbf{t}_i} \zeta_i( x_i^{(0)}, \mbf{t}_i ) \sum_{w} \gamma'_i (w,\mbf{t}_i; \widetilde{x}_i) 
	\sum_{ \{ \mbf{s}_{k\ra i} \}_{k\in\partial{i}} } \psi_{\ra i} \l( \mbf{t}_i, \{\mbf{s}_{k\ra i} \}_{k\in\partial{i}} \r) \prod_{k\in\partial{i}} m_{f'_{ik}\ra v'_{k\ra i}} \l(w, \mbf{t}_i, \mbf{s}_{k\ra i} \r)\\
	&\hspace{10pt} = f_{X_i^{(0)}}\l(x_i^{(0)}\r) \l( \Phi_{1}+\Phi_{2A}+\Phi_{2B}+\Phi_{3} \r),
	\numberthis\label{eq:alg2:4}
\end{align*}
where
\begin{align*}
	\Phi_{1} &=\sum_{0=t_i^A=t_i^B}  \zeta_i( x_i^{(0)}, \mbf{t}_i ) \sum_{w} \gamma'_i (w,\mbf{t}_i; \widetilde{x}_i) \prod_{k\in\partial{i}} \Phi_{k\ra i}\l(w, \mbf{t}_i|\r),\\
	\Phi_{2A} &=\sum_{0=t_i^A<t_i^B} 
	\zeta_i( x_i^{(0)}, \mbf{t}_i ) 
	\sum_{w} \gamma'_i (w,\mbf{t}_i; \wt{x}_i)
	\l[ 
	\prod_{k\in \partial{i}} 
	\Phi_{k\ra i} \l(w, \mbf{t}_i | \r) 
	- \prod_{k\in\partial{i}}
	\Phi_{k\ra i} \l(w, \mbf{t}_i | s_{k\ra i}^{B}=1 \r)
	\r],\\
	\Phi_{2B} &=\sum_{0=t_i^B<t_i^A} 
	\zeta_i( x_i^{(0)}, \mbf{t}_i ) 
	\sum_{w} \gamma'_i (w,\mbf{t}_i; \wt{x}_i)
	\l[ 
	\prod_{k\in \partial{i}} 
	\Phi_{k\ra i} \l(w, \mbf{t}_i | \r) 
	- \prod_{k\in\partial{i}}
	\Phi_{k\ra i} \l(w, \mbf{t}_i | s_{k\ra i}^{A}=1 \r)
	\r],\\
	\Phi_{3} &=\sum_{0<t_i^A, 0<t_i^B} \zeta_i( x_i^{(0)}, \mbf{t}_i ) \sum_{w} \gamma'_i (w,\mbf{t}_i; \wt{x}_i)
	\l[
	\prod_{k\in \partial{i}} 
	\Phi_{k\ra i} \l(w, \mbf{t}_i | \r)
	+ \prod_{ k\in\partial{i} } \Phi_{k\ra i} \l( w, \mbf{t}_i | s_{k\ra i}^{A}= 1, s_{k\ra i}^{B}=1  \r)
	\r.\\
	&\hspace{30pt} \l. -
	\prod_{ k\in\partial{i} } 
	\Phi_{k\ra i} \l(w, \mbf{t}_i | s_{k\ra i}^{A}= 1 \r)
	- \prod_{ k\in\partial{i} } 
	\Phi_{k\ra i} \l(w, \mbf{t}_i | s_{k\ra i}^{B}=1 \r)
	\r].
\end{align*}
The per-iteration-time complexity of computing all such messages is $O\l( T_{max}^3 \sum_{i\in V} |\partial{i}|^2 \r) = O\l( |E|T_{max}^3 \max_{i\in V}|\partial{i}| \r) $.

\subsection{Update Rule of $\mbf{m}_{f'_{\wh{e}^{(C)}} \ra v'_C }$}
Let $\wh{e}^{(C)} = \{i, j\}$. Referring to \figurename \ref{fig:messages2}, for any realization $\mbf{z}_{r(v'_C)} = w$, we have
\begin{align*}
	&m_{f'_{\wh{e}^{(C)}} \ra v'_C }\l( w \r) \propto \sum_{\mbf{t}_i, \mbf{s}_{j\ra i}, \mbf{t}_j, \mbf{s}_{i\ra j} } \psi_{i\xlrarrow{} j} \l( \mbf{t}_i, \mbf{s}_{j\ra i}, \mbf{t}_j, \mbf{s}_{i\ra j} \r) 
	m_{f'_i\ra v'_{j\ra i} } \l( w, \mbf{t}_i, \mbf{s}_{j\ra i} \r) m_{f'_j\ra v'_{i\ra j} } \l( w, \mbf{t}_j, \mbf{s}_{i\ra j} \r).\numberthis\label{eq:alg2:3}
\end{align*}
The per-iteration-time complexity of computing all such messages is $O \l( |\mcl{C}|T_{max}^5 \r) $.

\subsection{Update Rule of $\mbf{m}_{f'_W \ra w^{(C)} }$ }
Referring to \figurename \ref{fig:messages2}, for any realization $\mbf{z}_{r(v'_C)} = w$, we have
\begin{align*}
	&m_{ {f}'_{ W } \ra v'_{C} } \l( w \r) 
	\propto 
	{f}'_W(w) \prod_{ C' \in \mcl{C} \setminus {C}} m_{ {f}'_{\wh{e}^{(C')}} \ra v'_{C'} } (w).\numberthis\label{eq:alg2:5}
\end{align*}
The per-iteration-time complexity of computing all such messages is $O\l( |\mcl{C}| T_{max} \r)$.

\textbf{Per-Iteration-Time Complexity}: In \eqref{eq:alg2:1:case1} through \eqref{eq:alg2:4}, the per-iteration-time is dominated by messages of the form $\mbf{m}_{f'_{ij} \ra v'_{j\ra i}} $ or $\mbf{m}_{f'_{i} \ra v'_i }$. Therefore, per-iteration-time complexity of Algorithm \ref{alg:sumproductbp2} is $O \l( |E|T_{max}^3 \l( T_{max}^2 \vee \max_{i\in V} |\partial{i}|  \r)\r)$.



\end{document}